\documentclass[11pt]{article}

\PassOptionsToPackage{svgnames,table}{xcolor}
\usepackage[svgnames,table]{xcolor}
\usepackage{subcaption}
\usepackage{tikz}
\usetikzlibrary{arrows.meta,positioning,calc}
\usepackage[T1]{fontenc}
\usepackage[utf8]{inputenc}
\usepackage{enumitem}
\usepackage{graphicx}
\usepackage{fancybox}
\usepackage{comment}
\usepackage{nameref}
\usepackage{bbm}

\definecolor{ForestGreen}{rgb}{0.1333,0.5451,0.1333}
\definecolor{DarkRed}{rgb}{0.8,0,0}
\definecolor{Red}{rgb}{1,0,0}
\usepackage[linktocpage=true,
pagebackref=true,colorlinks,
linkcolor=ForestGreen,citecolor=ForestGreen,
bookmarks,bookmarksopen,bookmarksnumbered]
{hyperref}
\usepackage[nottoc,numbib]{tocbibind}

\usepackage{amsmath}
\usepackage{amssymb}
\usepackage{amsthm}

\usepackage[ruled, noend, linesnumbered]{algorithm2e}
\SetKwComment{Comment}{/* }{ */} \SetKwInOut{Input}{Input} 
\SetKwInOut{Output}{Output}

\usepackage{float}
\usepackage{cleveref}

\newtheorem{theorem}{Theorem}[section]

\newtheorem{corollary}[theorem]{Corollary}
\newtheorem{lemma}[theorem]{Lemma}

\newtheorem{claim}[theorem]{Claim}

\newtheorem{fact}[theorem]{Fact}

\newtheorem{definition}[theorem]{Definition}
\newtheorem{remark}[theorem]{Remark}

\newtheorem*{theorem*}{Theorem}
\newtheorem*{corollary*}{Corollary}
\newtheorem*{conjecture*}{Conjecture}
\newtheorem*{lemma*}{Lemma}
\newtheorem*{thm*}{Theorem}
\newtheorem*{prop*}{Proposition}
\newtheorem*{obs*}{Observation}
\newtheorem*{definition*}{Definition}

\newtheorem*{remark*}{Remark}
\newtheorem*{rec*}{Recommendation}

\newenvironment{fminipage}%
  {\begin{Sbox}\begin{minipage}}%
  {\end{minipage}\end{Sbox}\fbox{\TheSbox}}

\def\defeq{\stackrel{\mathrm{def}}{=}}

\newcommand{\veryshortarrow}[1][3pt]{\mathrel{%
   \hbox{\rule[\dimexpr\fontdimen22\textfont2-.2pt\relax]{#1}{.4pt}}%
   \mkern-4mu\hbox{\usefont{U}{lasy}{m}{n}\symbol{41}}}}

\def\norm#1{\left\| #1 \right\|}

\def\rec#1{\left\| #1 \right\|_{\veryshortarrow}}

\newcommand\DDelta{\boldsymbol{\mathit{\Delta}}}

\def\aa{\pmb{\mathit{a}}}
\renewcommand\gg{\boldsymbol{\mathit{g}}}
\newcommand\bb{\boldsymbol{\mathit{b}}}
\newcommand\cc{\boldsymbol{\mathit{c}}}
\newcommand\dd{\boldsymbol{\mathit{d}}}

\newcommand\ff{\boldsymbol{\mathit{f}}}

\renewcommand\ll{\boldsymbol{\mathit{l}}}
\newcommand\pp{\boldsymbol{\mathit{p}}}
\newcommand\qq{\boldsymbol{\mathit{q}}}

\newcommand\uu{\boldsymbol{\mathit{u}}}

\newcommand\veczero{\boldsymbol{0}}
\newcommand\vecone{\boldsymbol{1}}

\renewcommand\AA{\boldsymbol{\mathit{A}}}
\newcommand\BB{\boldsymbol{\mathit{B}}}

\newcommand\LL{\boldsymbol{\mathit{L}}}

\newcommand\R{\mathbb{R}}

\DeclareMathOperator*{\diag}{diag}

\newcommand{\concat}{\oplus}

\renewcommand{\root}{\mathsf{root}}

\newcommand{\Enc}{\textsc{Enc}}

\renewcommand{\hat}{\widehat}
\renewcommand{\tilde}{\widetilde}

\DeclareFontFamily{U}{mathb}{\hyphenchar\font45}
\DeclareFontShape{U}{mathb}{m}{n}{<5> <6> <7> <8> <9> <10> gen * mathb
<10.95> mathb10 <12> <14.4> <17.28> <20.74> <24.88> mathb12}{}
\DeclareSymbolFont{mathb}{U}{mathb}{m}{n}
\DeclareMathSymbol{\rcirclearrow}{\mathbin}{mathb}{'367}

\newif\ifrandom
\randomtrue

\newcommand{\poly}{{\mathrm{poly}}}

\newcommand{\str}{{\mathsf{str}}}

\newcommand{\todolater}[1]{}

\DeclareUnicodeCharacter{2113}{$\ell$}

\renewcommand{\root}{\mathsf{root}}

\usepackage{fullpage}
\usepackage[nottoc,numbib]{tocbibind}
\usepackage{amsthm}

\title{A Simpler and Faster Min-Cost Flow Solver \\ via Min-Ratio Cycles from Distance Oracles}
\author{
Rasmus Kyng \\
\and
Simon Meierhans \\
\and
Maximilian Probst Gutenberg \\
\and 
Aurelio Sulser \\
}
\date{}

\DeclareMathOperator{\pproj}{proj}

\DeclareMathOperator{\vcong}{vcong}

\DeclareMathOperator{\dist}{dist}

\begin{document}

\maketitle

\begin{abstract}
The first almost-linear time maximum and minimum cost flow algorithm of Chen-Kyng-Liu-Peng-Probst Gutenberg-Sachdeva \cite{maxflow}, reduced these flow objectives to a sequence of min-ratio cycle problems.
Solving this core primitive requires approximately minimizing the ratio of a linear gradient term and an undirected length term. In \cite{maxflow} and the subsequent work of Chen-Kyng-Liu-Meierhans-Probst Gutenberg \cite{incrflow}, intricate data structures were given to solve the min-ratio problem.

We show that such a cycle can be extracted directly from the dynamic distance oracle of Kyng-Meierhans-Probst Gutenberg \cite{KMPG24} using linearity. This simplifies previous algorithms that relied on multiple additional steps to extract the cycle, and can be seen as evidence that solving the min-ratio cycle problem really is all about distances. As a result, we obtain a faster primal maxflow and min-cost flow solver that also extends to incremental graphs.
\end{abstract}

\tableofcontents

\pagebreak

\section{Introduction}

The maximum flow problem and its generalization, the min-cost flow problem, are among the most fundamental problems in algorithmic graph theory. Since the dawn of the formal study of graph algorithms, they have been intensively studied \cite{FF56, D70, ET75, ST83, T85, GT89, DS08, GT14, M13, M16, CKMST11, S13, S17, BGS21, van2021minimum, van2022faster, maxflow, detmax, incrflow, decrflow, bernstein2024maximum, chuzhoy2024faster, haeupler2025parallel,  bernstein2025combinatorial, haeupler2026dag}, have served as a test bed to develop some of the most well-known and popular algorithmic paradigms, and serve as the key primitive in many related algorithmic problems \cite{li2025deterministic, li2021vertex, he2024cactus, abboud2023all, ding2024deterministic, abboud2025deterministic}. Formally, the problem is defined as follows.

\begin{definition}[Min-cost flow]\label{def:mincostFlow}
Given a directed graph $G=(V,E)$ with corresponding edge-vertex incidence matrix $\BB$, with edge capacities \(\uu\) and costs \(\cc\) and a demand vector
\(\dd \perp \vecone\), the min-cost flow problem seeks to find
\[
\ff^\star
=
\arg\min_{\substack{\BB^\top \ff=\dd \\ \veczero \leq \ff \leq \uu}} \cc^\top \ff .
\]
In the thresholded min-cost flow problem, we are additionally given a fixed threshold $F \in \mathbf{R}$ and the algorithm has to decide if $\cc^\top \ff^* \leq F$, i.e. if the min-cost flow solution has cost at most $F$.
\end{definition}

\paragraph{Fast Algorithms for Min-Cost Flow.} While an almost-linear time algorithm for the max-flow and min-cost flow problem remained elusive until recently, the work of Chen et al. \cite{maxflow} demonstrated the existence of such an $m^{1+o(1)}$ time algorithm. This breakthrough was the culmination of many years of advances in developing and integrating convex optimization, graph-algorithmic techniques, and data structures for dynamic problems.
\cite{maxflow} suggested a novel $\ell_1$-interior point method ($\ell_1$-IPM) that reduced the min-cost flow problem to solving $m^{1+o(1)}$ approximate min-ratio cycle problems as defined below. 

\begin{definition}[Min-Ratio Cycle]\label{def:minRatioCycle}
Given graph $G=(V,E)$ with lengths $\ll \in \mathbb{R}^E_{\geq 0}$ and gradients $\gg \in \mathbb{R}^E$, we denote for any circulation $\BB \cc = \veczero$, the \emph{ratio} achieved by $r(\cc) =  \frac{\gg^\top \cc}{\|\LL \cc\|_1}$ where $\LL = \diag(\ll)$. We let the min-ratio $r^* = \min_{\BB \cc = 0} r(\cc)$ and say a circulation $\cc$ is a min-ratio cycle if $r(\cc) = r^*$. We say a value $\tilde{r}$ is $\alpha$-approximate min-ratio if $\tilde{r} \in [r^*, r^* / \alpha]$ for $\alpha \geq 1$. We say a circulation $\cc$ is an $\alpha$-approximate min-ratio cycle if $r(\cc)$ is $\alpha$-approximate min-ratio.
\end{definition}
\begin{remark}
We remark that there always is a simple cycle $\cc$ that realizes the min-ratio.
\end{remark}

\cite{maxflow} complemented their $\ell_1$-IPM with a new graph data structure that solves the sequence of approximate min-ratio cycle problems in $m^{1+o(1)}$ total time by exploiting the stability of the update sequence. More precisely, they show that any two consecutive min-ratio cycle problems only differ in $m^{o(1)}$ coordinates with respect to $\uu$ and $\gg$, on average.

Unfortunately, the data structure developed in \cite{maxflow} for the dynamic min-ratio problem, suffers from working only for special, highly-structured update sequences. While Chen et al. where able to show that their data structure indeed works for updates generated by the $\ell_1$-IPM, the resulting proof, which is already quite intricate, intertwines their data structure analysis with guarantees on the well-behavedness of the IPM updates. This makes for a dense and hard-to-read proof. A de-randomization was achieved in \cite{detmax}, however, at the cost of a further complication of the algorithm, and without resolving this entanglement.

\paragraph{Current State-of-the-Art for Minimum-Cost Flow.} In \cite{incrflow}, a novel approach was given to solve the dynamic min-ratio problem efficiently for \textbf{any} adversarial update sequence. This resulted in a much more modular proof for a min-cost flow algorithm, separating the $\ell_1$-IPM analysis and data structure analysis. Further, as was shown by \cite{BLS23}, solving dynamic min-ratio cycle efficiently against any adversary directly yields an algorithm for min-cost flows even for \emph{incremental graphs}, i.e. input graphs that undergo edge insertions, one-by-one.
Thus  \cite{incrflow} gave the first almost-linear time algorithm for min-cost flow in incremental graphs.

Further, \cite{incrflow} then served as a template for \cite{decrflow} where a superset of the authors develop a data structure for the subproblem obtained by running the $\ell_1$-IPM on the dual problem: the min-ratio cut problem. This results in a static min-cost flow algorithm that is significantly simpler than the previous approaches, and this approach extends to \emph{decremental graphs}, i.e. input graphs that undergo edge deletions, one-by-one.
The data structure for min-ratio cuts relies on dynamically maintaining a tree-cut sparsifier of the graph, and from this, extracting approximately optimal cuts.







\paragraph{Min-Ratio Cycles Against Oblivious Adversaries.}
In the original almost-linear time min-cost flow algorithm of \cite{maxflow}, approximate min-ratio cycles are extracted from a dynamic, recursive graph sparsifier hierarchy that alternates between edge sparsification using spanners and vertex sparsification using  `distance-approximation core graphs'.

Their core graph achieves two things at once: (1) it partitions the graph vertex set into relatively few, small sets (2) it ensures that an approximate min-ratio cycle is either contained in a partition set, or survives after contracting each set into a single vertex. Overall, the vertex count is substantially reduced, and either a good cycle can be found by checking each partition set, or a good cycle exists in the resulting smaller graph.
Unfortunately, this procedure only works against oblivious adversaries.\footnote{The authors strengthen the result to work against a slightly stronger adversary in the presence of `hints' that can be obtained from the IPM. This is reminiscent of the literature on algorithms with predictions. }
We will refer to this as a \emph{vertex sparsification dichotomy} against oblivious adversaries.
To use the dichotomy, the authors of \cite{maxflow} develop a sophisticated analysis framework that allows them to treat the update sequence of a static min-cost flow $\ell_1$-IPM nearly as though it was generated by an oblivious adversary.

Their edge sparsification procedure relied on a novel dynamic spanner, which also exhibited dichotomy: either the edge-sparsified graph contains an approximate min-ratio cycle, or the sparsification identifies a good cycle. 
This \emph{edge sparsification dichotomy} works against adaptive adversaries.

\paragraph{Min-Ratio Cycles Against All Adversaries.}
In \cite{incrflow}, constructed a deterministic min-ratio cycle data structure with almost-linear work for a linear number of updates, thus solving the problem against all adversaries, and solving incremental minimum-cost flow in almost linear time via the reduction of \cite{BLS23}.
The algorithm of \cite{incrflow} uses a powerful fully-dynamic APSP data structure from \cite{KMPG24}. 
Before we review the \cite{incrflow} min-ratio data structure, we will first briefly discuss this APSP data structure.
At a high level, this data structure is similar to the original dynamic sparsifier hierarchy of \cite{maxflow}, but replaces their oblivious-adversary vertex sparsifier with a new vertex sparsifier that differs in two crucial ways: Firstly, this sparsifier replaces the goal of maintaining min-ratio cycles with a weaker goal of only maintaining pairwise distances between terminals (remaining vertices). 
Secondly, the authors develop a much stronger, deterministic framework for preserving these distances.

As a side-effect, the \cite{KMPG24} APSP data structure also yields a small collection of dynamic spanning trees of the input graph, such that the pairwise distance between any two vertices in original graph is approximately preserved in at least one tree.
This dynamic tree collection is the starting point for the deterministic min-ratio cycle data-structure of \cite{incrflow}.
Using these trees, the authors dynamically maintain hierarchy of low-diameter decompositions, and from these decompositions they construct a dynamic version of $\ell_1$-oblivious routing of \cite{RGHZL22}.
Next, they show how to extract \emph{another} small collection of trees from this routing, such that an approximately optimal min-ratio cycle is generated as a fundamental cycle of one of these trees, i.e., a cycle consisting of one off-tree edge and the tree path between its endpoints.
Unfortunately, there is no easy way to identify \emph{which} edge is good.
This is because the trees are changing dynamically, and each tree update may change a huge number of fundamental cycles.
However, the trees are still useful: 
the authors use these new trees as the starting point for building a new vertex sparsifying core graph that deterministically preserves min-ratio cycles, or locally identify a cycle. 
Thus, they recover the vertex sparsification dichotomy of the original \cite{maxflow} vertex sparsifier, but now against any adversary.
Unfortunately, this vertex sparsifier is more unstable that the prior approach, which in turn necessitates the development of a new dynamic spanner that can compensate for this instability by handling a large number of edge insertions.
The net result is a 
modular separation of min-ratio data structure and $\ell_1$-IPM, but this is achieved at the expense of immense complication of the data structure.

\subsection{Our Contribution}

We develop a much simpler deterministic min-ratio cycle data structure, yielding a faster and simpler incremental min-cost flow algorithm. 
We replace the entire apparatus of \cite{incrflow} 
with a method that directly extracts
min-ratio cycles from the \cite{KMPG24} fully-dynamic APSP data structure.

To understand this, we have to unpack the \cite{KMPG24} vertex sparsifer slightly. 
Given an original graph, and subset $T$ of the vertices, again called \emph{terminals}, goal of this data structure is to produce a smaller graph that preserves pairwise distance between the vertices of $T$.
Concretely, they output a graph whose vertex set is a superset of $T$ with size $|T|^{1+o(1)}$, and ensure that this graph undergoes few changes as the input graph changes.
The first key ingredient of this sparsifier is the construction of a dynamic version of so-called `ASZ-paths' \cite{ASZ}:
this is a relatively collection of paths in the original graph, such that preserving all of their lengths guarantees preserving the distances between vertices in $T$ up to a constant factor. 

Preserving these paths would suffice, but unfortunately, they change too rapidly as the input graph changes.
In the language of dynamic graph algorithms, we say their \emph{recourse} is too high.
To circumvent this, \cite{KMPG24} introduces their second key ingredient.
Remarkably, they show how to construct an entire set of possible future ASZ paths and construct a collection of low-stretch trees that preserve these with low recourse. 
From these trees, a distance-preserving core graph vertex sparsifier can be constructed.

\paragraph{A Simpler Approach to Min-Ratio Cycles.}
Armed with an understanding of \cite{KMPG24}, we can state our core result:
The \cite{KMPG24} vertex sparsifier admits a deterministic vertex sparsification dichotomy for min-ratio cycles!
To make sense of this statement, we have to add gradients (a linear function on the edges) to the graph where we apply the vertex sparsifier, and we have to do some bookkeeping to make sure the recourse of these gradients is not too high.
Our dichotomy is more subtle than earlier approaches, but the central idea can be explained the context of the vertex sparsifier as a static object.
We show how to either identify a good cycle from the sparsifier, or ensure that one exists the the sparsified graph. But, the former case needs to be broken down further: First, we construct an `ASZ-dichotomy' where checking for min-ratio cycles in a ball-cover of the graph either identifies a good cycle, or ensures that one is supported on ASZ-paths.
Second, we construct a `forest dichotomy', where either a good cycle is identified as a fundamental cycle of a tree used in the core-graph construction, or the core graph contains a good cycle. 
Crucially, the latter step only works for cycles supported on ASZ-paths, necessitating the former step.

Equipped with our new vertex sparsifier dichotomy, we can now build a sparsifier hierarchy that alternates between edge and vertex sparsification and directly and deterministically extract min-ratio cycles from this. Combined with an $\ell_1$-IPM, this yields the following result.

\begin{theorem}[Simpler and Faster (Incremental) Min-Cost Flow ]\label{thm:mainIntro}
There is an algorithm that given an (incremental) directed graph $G=(V,E)$ with polynomially bounded edge capacities \(u\) and costs \(c\), a threshold $F$,
a demand vector \(\dd \perp \mathbf{1}\), that for $\chi = e^{O(\log^{20/21} m \log \log m)}$, maintains the solution to the thresholded min-cost flow problem (see \Cref{def:mincostFlow}) using total time $m \chi$. If the min-cost flow has cost at most $F$, the algorithm can also return a feasible flow of cost at most $F$ in additional time $m \chi$.
\end{theorem}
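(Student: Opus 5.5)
The plan is to reduce the theorem to a dynamic min-ratio cycle data structure that works against adaptive adversaries. The reduction itself is standard: the $\ell_1$-IPM of \cite{maxflow}, in the incremental form of \cite{BLS23} as used in \cite{incrflow}, solves thresholded min-cost flow on an incremental graph. It does so using $m^{1+o(1)}$ calls that each ask for an $\alpha$-approximate min-ratio cycle (\Cref{def:minRatioCycle}) with $\alpha = m^{-o(1)}$. Between consecutive calls, only $m^{o(1)}$ entries of $\gg$ and $\ll$ change on average, and edge insertions appear as new coordinates. Taking this reduction as a black box, the theorem follows once we have a deterministic data structure with these properties:
\begin{itemize}
\item it supports insertions and deletions of edges and updates to $\gg,\ll$ in amortized $\chi$ time per update;
\item it returns an approximate min-ratio cycle $\cc$, represented implicitly so that the IPM can route flow along it;
\item it reports $\gg^\top\cc$ and $\|\LL\cc\|_1$ in amortized $\chi$ time.
\end{itemize}
The final flow extraction in time $m\chi$ then follows from the rounding and flow-recovery step of \cite{maxflow}.

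To build the data structure, I would take the sparsifier hierarchy of \cite{KMPG24} and add gradients to it. The hierarchy alternates two operations:
\begin{itemize}
\item \emph{Vertex sparsification.} Choose terminals $T$, maintain dynamic ASZ-paths and low-recourse low-stretch forests, and contract to a core graph on $|T|^{1+o(1)}$ vertices.
\item \emph{Edge sparsification.} Apply a dynamic spanner.
\end{itemize}
Each core-graph edge corresponds to a tree path plus an off-tree edge. Its gradient is defined by linearity as the sum of gradients along the path it represents, and its length is the path length, which dominates by stretch. The gradient of a tree path is maintained by a link-cut or top-tree structure. The key bookkeeping point is that a gradient update on an original edge changes the gradients of many core edges simultaneously. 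I would handle this by keeping gradients implicitly relative to tree potentials, so that a gradient change becomes a path update on the tree and not a recourse event at the next level.

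The correctness argument is the dichotomy at each level. \emph{ASZ-dichotomy:} take a near-optimal simple cycle $C$. Either $C$ is mostly contained in a small ball of the ball cover, where brute-force checking of fundamental cycles in that ball finds a good cycle, or $C$ can be rerouted along ASZ-paths with constant-factor loss in length and no increase in gradient. The gradient bound holds by linearity, because the rerouting decomposes into $C$ plus short cycles, and if any of those short cycles had a better ratio we would have found it in the first case. \emph{Forest dichotomy:} a cycle supported on ASZ-paths decomposes into fundamental cycles of one of the low-stretch forests. Either one of these fundamental cycles already has a good ratio, which we detect by maintaining the best fundamental cycle per forest in a heap keyed by ratio, or the image of the cycle in the core graph has ratio within the stretch factor. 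The spanner dichotomy follows similarly: either the spanner keeps the cycle up to stretch, or a spanner-certified path together with a deleted edge gives a good cycle.

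Over $O(\log^{1/21} m)$ levels, the losses compound to $\alpha^{-1} = e^{O(\log^{20/21} m\log\log m)}$, and the recourse-times-size parameters of \cite{KMPG24} give amortized time $\chi$.

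I expect the main obstacle to be the ASZ-dichotomy step, in particular showing that rerouting a cycle onto ASZ-paths does not destroy its ratio. The gradient is signed, so length stretch alone says nothing about it. My first approach is to write the difference between $C$ and its rerouting as a sum of cycles, each inside a ball, and use linearity of $\gg^\top(\cdot)$ together with subadditivity of $\|\LL\cdot\|_1$: if every ball-local cycle has ratio worse than $r^*/\alpha$, the rerouted cycle keeps its ratio up to a constant factor. A second difficulty is keeping the per-forest fundamental-cycle heaps consistent when the forests change under low but nonzero recourse. Each forest update should trigger only amortized $\chi$ recomputations, via top trees that support path-gradient queries.
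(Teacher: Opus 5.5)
Your architecture matches the paper's: the $\ell_1$-IPM reduction of \cite{BLS23,incrflow}, the \cite{KMPG24} hierarchy augmented with gradients, and per-level ASZ, forest and spanner dichotomies whose losses compound over $\Lambda=\log^{1/21} m$ levels. For the ASZ step, your linearity-plus-subadditivity argument (not the ``no increase in gradient'' phrasing) is exactly the paper's mediant argument with the ball-local correction circulations $C_i$.

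\textbf{The flow side is a genuine gap.} The IPM of Claim 8.11 in \cite{incrflow} does not consume bare cycles. It needs the solver of \Cref{def:solver}, which does three things:
\begin{itemize}
\item it maintains $\ff$ on $G$ implicitly;
\item it applies each circulation in amortized $(|E'|+1)\chi$ time;
\item it returns the set $E'$ of edges whose $\ll(e)\ff(e)$ drifted by more than $\epsilon$.
\end{itemize}
The stability you attribute to the IPM (few coordinates change on average) is produced precisely by this lazy reporting, so your interface omits an essential operation.

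A cycle found at level $\ell$ has $\tilde{O}(\chi)$ edges of $G_\ell$. Each of these unfolds recursively through forest projections into a walk in $G$ with up to $\tilde{O}(k)^{\ell}$ edges, which is polynomial in $m$ for large $\ell$. So neither explicit routing nor ``letting the IPM route it'' fits the time budget, and you give no way to find which original edges crossed the threshold.

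The missing ingredient is \Cref{lma:embeddingLemma}. \cite{KMPG24} explicitly maintains a forest $F$ on $V\times[\chi]$ in which every edge of every $G_\ell$ maps to $2\Lambda+1$ path segments carrying exactly its gradient. Flow is pushed along these segments with a dynamic tree. Edges are reported at threshold $\epsilon/\chi^2$; since each original edge has at most $\chi^2$ copies in $F$, unreported edges drift by less than $\epsilon$. \textsc{ReturnFlow} sums the copies in $O(m\chi^2\log n)$ time, which your final flow extraction also needs.

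\textbf{Your gradient bookkeeping fails as stated.} A gradient change on a forest edge is not a path update. It alters the projected gradient of every off-tree edge with an endpoint in the subtree below it, i.e.\ of many core edges, and \textsc{UpdateEdge} does hit forest edges. The paper handles this as follows:
\begin{itemize}
\item At (re)initialization it fixes potentials $\pp_i$ with $(\gg+\BB^\top\pp_i)|_{E(F_i)}=\veczero$.
\item The forests of \Cref{lm:dynamicforest} are decremental between rebuilds, so $\pp_i$ stays valid.
\item Every gradient change is realized as a deletion plus a reinsertion; inserted edges join terminals and are never forest edges.
\end{itemize}
Then each core edge's gradient is just $\gg(e)+(\BB^\top\pp_i)(e)$, independent of roots and tree splits. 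Each update touches one core gradient (\Cref{lma:efficientGradientMaintenance}), and circulation gradients are unaffected by \Cref{fact:GradientShift}.

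\textbf{A smaller imprecision in your forest dichotomy.} Different ASZ segments of the cycle are preserved by different forests $F_{\sigma(i)}$. So the image of the cycle lives in the core graph sum, where the $\eta$ copies of each terminal are glued by zero-length stars, not in a single core graph. This gluing works because ASZ paths start and end at pivots, which are roots of every forest.
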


In summary, given the  \cite{KMPG24} distance vertex sparsifier, we construct a very simple min-ratio vertex sparsifier via a new sparsification dichotomy.
A priori, the existing sparsifier was only designed to handle distances, and we believe our method suggests that fully-dynamic APSP data structures can be adapted to solve the min-ratio cycle problem, provided they allow efficient path flow updates.\footnote{The  \cite{KMPG24} differs from other APSP data structures by providing efficient implicit access its paths by representing them in a small number of trees. This makes it possible to efficiently route flow along these paths.} 
This in turn establishes the goal of fully-dynamic APSP with polylogarithmic approximation factor and average-case update time as a promising route toward nearly-linear maximum and minimum-cost flow.

\section{Preliminaries} \label{sec:preliminaires}

\paragraph{Vectors and Basic Facts. } We denote vectors as bold lower case letters $\aa$, and matrices as bold upper case letters $\AA$. For a vector $\aa \in \R^V$ for some set $V$ and a subset $S \subseteq V$, we let $\aa(S) = \sum_{e \in S} \aa(e)$. We use the following two facts throughtout.

\begin{fact}[Dan's Favorite Inequality]
    \label{fact:ratio_inequality}
    For $\aa \in \R^S$ and $\bb \in \R^S_{> 0}$ we have $\min_{i \in S} \frac{\aa(i)}{\bb(i)} \leq \frac{\aa(S)}{\bb(S)}$. 
\end{fact}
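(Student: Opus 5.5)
Let $i^\star \in S$ be an index that attains the minimum, and write $r = \aa(i^\star)/\bb(i^\star)$; it exists because $S$ is finite and nonempty, which is implicit in the statement since $\bb(S) > 0$ must be defined. The plan is to show that $r$ is a lower bound on every ratio, rescale each term by its denominator, and sum over $S$. The proof uses only the positivity of $\bb$.

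First I record the termwise bound. For every $i \in S$, minimality of $i^\star$ gives $\aa(i)/\bb(i) \geq r$. Since $\bb(i) > 0$, multiplying by $\bb(i)$ preserves the direction of the inequality, so $\aa(i) \geq r\, \bb(i)$.

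Next I sum this over all $i \in S$. Using the notation $\aa(S) = \sum_{i \in S} \aa(i)$ from the preliminaries, this gives $\aa(S) \geq r\, \bb(S)$. Because every $\bb(i)$ is positive, $\bb(S) > 0$. Dividing by $\bb(S)$ therefore keeps the inequality and yields $\aa(S)/\bb(S) \geq r = \min_{i \in S} \aa(i)/\bb(i)$, which is the claim.

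The only point that needs care is the sign: every multiplication and division above is by a strictly positive quantity, which is exactly why the hypothesis $\bb \in \R^S_{>0}$ is required. There is no real obstacle beyond this. As a remark, the same argument applied to $-\aa$ gives the symmetric statement $\aa(S)/\bb(S) \leq \max_{i \in S} \aa(i)/\bb(i)$. This is the form one uses when decomposing a circulation into cycles to conclude that some cycle has ratio at least as good as the whole.
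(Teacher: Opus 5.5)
Your proof is correct. The paper states this as a fact without proof, and your argument (the termwise bound $\aa(i) \geq r\,\bb(i)$, summed over $S$ and divided by $\bb(S) > 0$) is the standard justification it relies on. One small point on your closing remark: the paper uses the inequality in its stated $\min$ form, not the $\max$ variant, since in the dichotomy arguments the good cycle is the one whose ratio is \emph{at most} that of the whole circulation.
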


\begin{fact}
\label{fac:expapprox}
For $0 \leq x \leq 1/2$,
$\exp(x) \leq 1+2x$
and $\exp(-x) \geq 1-2x$.
\end{fact}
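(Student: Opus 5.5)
The last statement is \Cref{fac:expapprox}: for $0 \leq x \leq 1/2$ we have $\exp(x) \leq 1+2x$ and $\exp(-x) \geq 1-2x$. I will prove the two inequalities separately, using only convexity of the exponential and the standard bound $e^{-x} \geq 1-x$.

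For the first inequality, $\exp$ is convex, so on $[0,1/2]$ its graph lies below the chord joining $(0,1)$ and $(1/2,e^{1/2})$. That chord is the line $1 + 2(e^{1/2}-1)x$. Since $e^{1/2} \approx 1.6487 < 2$, we get $2(e^{1/2}-1) < 2$. Therefore, for all $x \in [0,1/2]$,
\[
\exp(x) \;\leq\; 1 + 2(e^{1/2}-1)x \;\leq\; 1+2x .
\]
The only point needing care is confirming $e^{1/2} \leq 2$, which is equivalent to $e \leq 4$.

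For the second inequality, the tangent-line bound $e^{y} \geq 1+y$ holds for all real $y$ by convexity. Taking $y=-x$ gives $\exp(-x) \geq 1-x$. Since $x \geq 0$, we have $1-x \geq 1-2x$, which proves the claim. This half in fact holds for every $x \geq 0$.

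Neither step is a real obstacle. The upper bound is the only part where the restriction $x \leq 1/2$ matters, and there it enters only through the numerical check $e^{1/2} \leq 2$.
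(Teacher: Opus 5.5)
Your proof is correct and complete. The chord bound from convexity on $[0,1/2]$ gives $\exp(x) \le 1 + 2(e^{1/2}-1)x \le 1+2x$, using $e \le 4$. The tangent bound gives $\exp(-x) \ge 1-x \ge 1-2x$, which in fact holds for every $x \ge 0$.

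The paper states this as a standard fact without proof, so there is no paper argument to compare with. Your convexity argument is the usual justification.
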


\paragraph{Graphs. } We denote (multi-)graphs as $G = (V, E)$. They often come with additional information relating to edges. We use $\ll \in \R^E_{\geq 0}$ to refer to edge lengths, which relate to distances. We use $\uu \in \R^E_{\geq 0}$ to refer to capacities which define the throughput of cuts. Finally, we sometimes associate gradients $\gg \in \R^E$ with edges. Unlike lengths and capacities, gradients can assume negative values. 

\paragraph{Trees. } A tree $T = (V,E)$ is a connected graph with $|E| = |V| - 1$. For $u, v \in V$, we let $T[u,v]$ denote the unique simple path between $u$ and $v$ in $T$. A rooted tree is a tree with a designated root vertex, and a rooted forest is a collection of rooted trees. 

\paragraph{Distances, Paths and Cycles. } For a graph $G = (V, E, \ll)$, we call a series of edges a path $P = ((v_1, v_2), (v_2, v_3), \ldots (v_{k - 1}, v_k))$ if the head of the next edge is the tail of the previous edge. A path that starts and ends in the same vertex is referred to as a circulation, and if each vertex appears at most twice as an endpoint, we say it is a cycle. In this article, paths can contain the same edge multiple times.\footnote{This is sometimes called a walk in the literature.} For a path $P$, we let $\ll(P) \defeq \sum_{e \in E} \ll(e)$ denote its length. 
We let $\pi_G(u, v)$ denote an arbitrary but fixed shortest path between $u$ and $v$ in $G$. We let $\dist_{G}(u,v)$ denote the length of the shortest path, i.e. $\dist_{G}(u,v) \defeq \ll(\pi(u,v))$.

Even though the graph $G$ is undirected, we associate some arbitrary direction with its edges. Then, we let $\sigma((u,v)) = 1$ and $\sigma((v,u)) = -1$ for an edge with associated direction $(u, v)$. We let $\vecone_{P} = \sum_{e \in P} \sigma(e) \cdot \vecone_e$. 

\begin{fact}[Cycle decomposition of a circulation flow.]
    \label{fac:cycledecomp}
For every graph $G = (V,E)$ with incidence matrix $\BB$ and every circulation $\DDelta \in \R^{E}$, i.e. any vector $\DDelta$ s.t.   $\BB\DDelta = \veczero$, there exists a decomposition $\DDelta = \sum_k \cc_k$ into at most $m$ cycles $\{\cc_k\}_{k = 1}^m$, such that every $\cc_k$ is supported on a subset of the support of $\DDelta$, and for every edge $e$, if $\cc_k(e)$ is non-zero, the sign of $\cc_k(e)$ agrees with that of $\DDelta(e)$.
\end{fact}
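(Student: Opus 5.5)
The plan is to prove \Cref{fac:cycledecomp} by the standard greedy peeling argument, inducting on the size of the support of $\DDelta$. Reorient each edge $e$ in $\mathrm{supp}(\DDelta)$ so that $\DDelta(e) > 0$; this amounts to multiplying the corresponding row of $\BB$ by $-1$. After the reorientation, $\BB\DDelta = \veczero$ says that at every vertex the total positive flow entering equals the total positive flow leaving. If $\DDelta = \veczero$, we take the empty decomposition. Otherwise, we construct one cycle and subtract it.

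\textbf{Finding a cycle.} Start at any edge $e_1 = (v_1, v_2)$ with $\DDelta(e_1) > 0$. Because flow is conserved at $v_2$ and $v_2$ has positive inflow, it also has positive outflow, so some edge $(v_2, v_3)$ in the support leaves $v_2$. Repeat this walk. Since $V$ is finite, some vertex is eventually visited twice. The segment of the walk between the two visits is a simple directed cycle $C$ in the reoriented support graph.

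\textbf{Peeling.} Let $\alpha = \min_{e \in C} \DDelta(e) > 0$, and set $\cc_1 = \alpha \vecone_C$, with signs taken with respect to the original orientation. Then $\BB\cc_1 = \veczero$. Each coordinate of $\cc_1$ has the same sign as $\DDelta$ and magnitude at most $|\DDelta(e)|$. Consequently $\DDelta' = \DDelta - \cc_1$ is again a circulation, and on every edge it has either the same sign as $\DDelta$ or value zero. Moreover, at least one edge, namely the minimizer of $\alpha$, leaves the support. So $|\mathrm{supp}(\DDelta')| < |\mathrm{supp}(\DDelta)|$.

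\textbf{Induction and count.} Applying induction to $\DDelta'$ gives $\DDelta' = \sum_k \cc_k$. Each of these cycles is supported within $\mathrm{supp}(\DDelta') \subseteq \mathrm{supp}(\DDelta)$, and each agrees in sign with $\DDelta' $ and hence with $\DDelta$ on its support. Adding $\cc_1$ completes the decomposition. Every step removes at least one edge from the support, so there are at most $|\mathrm{supp}(\DDelta)| \le m$ cycles.

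No step is a real obstacle. The only point needing care is the sign bookkeeping: the edge directions are arbitrary while $\sigma$ is defined relative to them. The initial reorientation handles this cleanly, and it is legitimate because $\sigma$ simply flips under reorientation.
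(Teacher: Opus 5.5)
Your greedy peeling argument is correct and complete. The paper states this as a standard fact without proof, and your route is the standard one it relies on: orient the support along the sign of $\DDelta$, walk until a vertex first repeats, subtract the bottleneck multiple $\alpha\vecone_C$, and induct on $|\mathrm{supp}(\DDelta)|$, with the count $\le m$ and the sign and support conditions following as you say.

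One cosmetic point: under the statement's convention $\BB\DDelta=\veczero$ with $\DDelta\in\R^E$, edges index columns of $\BB$, not rows. Reorienting $e$ therefore negates a column of $\BB$ together with the entry $\DDelta(e)$, which leaves $\BB\DDelta$ unchanged.
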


\paragraph{Dynamic Graphs and Recourse.} Dynamic graphs are graphs that undergo updates, i.e., change their state over time. 

\begin{definition}[Graph Updates]
    A graph update maps a graph $G$ to a new graph $G'$. We consider the following types of updates. 
    \begin{itemize}
        \item edge deletion: $V(G') = V(G)$ and $\exists e \in E(G): E(G') = E(G) \setminus \{e\}$
        \item edge insertion: $V(G') = V(G)$ and $\exists e \in E(G'): E(G) = E(G') \setminus \{e\}$ (we allow self-loops)
        \item vertex split: $\exists u \in V(G), \exists v,w \in V(G'): V(G) \setminus \{u\} = V(G') \setminus \{v,w\}$, $(v,w) \not \in E(G)$ and the graph obtained from contracting $\{v, w\}$ in $G'$ is $G$.
    \end{itemize}
\end{definition}

\begin{definition}[Dynamic Graph]
    We represent dynamic graphs as a sequence of graphs alongside a sequence of update batches $G^{(T)} \defeq \left(\left(G(t)\right)_{t \in [T]}, \left(B(t)\right)_{t \in [T - 1]}\right)$. Every update batch $B(t)$ contains a (possibly empty) sequence of graph updates such that applying them in-order to $G(t)$ yields $G(t + 1)$. 
\end{definition}

We will often keep the update sequence implicit for convenience. 

\begin{definition}[Restricted Dynamic Graphs]
    We call a dynamic graph $G^{(T)}$ decremental if no update batch contains an edge insertions and we call a dynamic graph $G^{(T)}$ incremental if all updates are edge insertions. Finally, we call a graph edge-dynamic if there are no vertex splits. 
\end{definition}

Whenever we write statements about two dynamic graphs with the same time horizon, they hold for all times. For example, $H^{(T)} \subseteq G^{(T)}$ should be interpreted as $H(i) \subseteq G(i)$ for all $i \in [T]$. We sometimes drop the time horizon $T$ for convenience, so $H \subseteq G$ should be interpreted as $H(i) \subseteq G(i)$ for all $i$ in the implicit time horizon $T$. 

\begin{definition}
    For a dynamic graph $G^{(T)}$, we let the recourse be the total number of updates in its update batches. We use the symbol $\rec{G^{(T)}}$ to refer to the recourse.
\end{definition}

\section{Min-Ratio Cycles via Distance Oracles}

The goal of this section is to describe how we can use a distance oracle to find approximate min-ratio cycles in the dynamic graph. Such min-ratio cycles have been used in \cite{maxflow, incrflow} to implement the iterations of the $\ell_1$-IPM. The purpose of this section is to describe how to find them statically in the distance oracle introduced in \cite{KMPG24}.\footnote{The distance oracles in \cite{KMPG24} build on the distance oracles given in \cite{ASZ} which a reminiscent of the classic Thorup-Zwick Distance Oracles \cite{thorup2005approximate}.} 

Our proof relies solely on the structural properties of the distance oracle. Since \cite{KMPG24} shows how to maintain these structural properties efficiently in dynamic graphs, it is then straightforward to dynamize our algorithm.

\subsection{The Vertex Sparsifier}

We next outline the vertex sparsification procedure from \cite{ASZ}, which forms the first building block of the distance oracles in \cite{KMPG24}. We point out that the construction has dependencies in the maximum degree of the input graph $G$; standard techniques can control this, as we discuss later. The idea is to map short path segments that live in local balls onto a terminal set $A \subseteq V$. 

\begin{definition}[Pivot Function and Local Balls]
For a graph $G = (V, E, \ll, \gg)$ and set $A \subseteq V$, we denote by $p_A : V \mapsto A$ the pivot function that maps each vertex $v \in V$ to the closest vertex $w \in A$, i.e. $\dist_G(v, p(v)) \leq \min_{w \in A}\dist_G(v,w)$. Ties are broken arbitrarily but consistently. We further define the local ball ${B}(v, A) = \{ w \in V\;|\; \dist_G(v,w) < \dist_G(v, p_A(v))\}$ and the local ball neighborhood $B^{\mathcal{N}}(v,A) = {B}(v, A) \cup \mathcal{N}({B}(v, A))$, i.e. the ball around $v$ and all incident vertices. We define the cluster $C(v, A) \defeq \{u \in V: dist_G(u,v) < dist_G(u,p_A(u)) \}$. We sometimes omit the subscript when the pivot function is clear from the context.
\end{definition}

The key idea is then to take shortest paths $\pi_G(v,u)$ for $u \in B(v, A)$, extend them by an edge $(u,w) \in E$ (not necessarily in the ball), and then 'project' them onto their pivots by prepending the shortest path $\pi_G(p_A(v), v)$ and appending the path $\pi_G(w, p_A(w))$. The resulting path $P$ from $p(v)$ to $p(w)$ is then converted into an edge $e = (p(v), p(w))$ and added to the vertex sparsifier over the set $A$ with length $\ll'(e) = \ll(P)$. The formal construction is given below.

\begin{definition}[ASZ-Paths and Vertex Sparsifier] \label{def:asz_path}
Given $G= (V, E, \ell, \gg)$, the generating path set is given by
\begin{align*}
    \mathcal{P} = \{ \pi_G(v,u) \oplus (u,w)  \;|\; v \in V, u \in B_G(v, A), w \in \mathcal{N}(u) \}.
\end{align*}
For an $xy$-path $P$ and set $A$, we denote by $\pproj_A(P)$, the path $\pi_G(p(x), x) \oplus P \oplus \pi_G(y, p(y))$. We define the set of ASZ paths by $\tilde{\mathcal{P}} = \{ \pproj_A(P) \;|\; P \in \mathcal{P}\}$. 

We define the ASZ vertex sparsifier by $H = (A, E', \ell')$ where there is a one-to-one correspondence between $xy$-paths $P \in \tilde{\mathcal{P}}$ and edges $e = (x,y) \in E'$ of length $\ell'(e) = \ell(P)$ and the gradient as the signed gradient of the path, i.e. $\gg'(e) = \gg^{\top} \vecone_P$ where $\vecone_P$ is $\pm 1$ in coordinate $e' \in E$ if $e' \in P$ and otherwise $0$.
\end{definition}
\begin{remark}
Note that $G$ and the set $A$ uniquely determine the generating path set $\mathcal{P}$ and the ASZ paths $\tilde{\mathcal{P}}$ and thus the ASZ vertex sparsifier $H$.
\end{remark}

We fix $G$ and $A$ for the rest of the section and denote by $H$ the ASZ vertex sparsifier. We next prove that for any $u,v \in V$, either they are locally close, i.e. $u \in B(v, A)$, or $H$ preserves distances between them. 

\begin{claim}\label{clm:ASZstretch}[see \cite{ASZ}]
For any $x \in V$, $y \not\in B(x, A)$, $$\dist_G(x,y) = \Theta(\dist_G(x, p(x)) + \dist_H(p(x), p(y)) + \dist_G(p(y), y).$$
\end{claim}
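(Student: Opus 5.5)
Both directions are proved separately.

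\emph{Lower bound on $\dist_G(x,y)$.} Every edge $e=(a,b)$ of $H$ corresponds to a path in $\tilde{\mathcal{P}}$ from $a$ to $b$ of length $\ll'(e)$, so $\dist_G(a,b)\le \dist_H(a,b)$ for all $a,b\in A$. Combined with the triangle inequality this gives
\[
\dist_G(x,p(x))+\dist_H(p(x),p(y))+\dist_G(p(y),y)\ \ge\ \dist_G(x,y).
\]

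\emph{Upper bound.} Since $y\notin B(x,A)$, we have $\dist_G(x,p(x))\le \dist_G(x,y)$. The triangle inequality then gives $\dist_G(p(y),y)\le \dist_G(y,p(x))\le 2\dist_G(x,y)$, using that $p(y)$ is the closest vertex of $A$ to $y$. So it remains to show $\dist_H(p(x),p(y))=O(\dist_G(x,y))$. Take the shortest path $\pi=\pi_G(x,y)=(x=v_0,v_1,\dots,v_k=y)$ and cut it greedily into segments. Start at $s_0=x$, and let $u$ be the last vertex of $\pi$ after $s_0$ still in $B(s_0,A)$; this is well defined since $s_0\in B(s_0,A)$ when $s_0\notin A$, and if $s_0\in A$ we simply step one edge. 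Let $w$ be its successor on $\pi$. The subpath $\pi[s_0,u]$ is a shortest path, so after fixing consistent tie-breaking (or by choosing $\pi$ to agree with $\pi_G(s_0,u)$ via the fixed shortest-path choice, up to equal length) the segment $\pi_G(s_0,u)\oplus(u,w)$ lies in $\mathcal{P}$. Its projection is an $H$-edge from $p(s_0)$ to $p(w)$ of length
\[
\dist_G(p(s_0),s_0)+\ll(\pi[s_0,w])+\dist_G(w,p(w)).
\]
Then set $s_1=w$ and repeat until $y$ is reached. Concatenating these $H$-edges gives a $p(x)$--$p(y)$ walk in $H$, so $\dist_H(p(x),p(y))$ is at most the sum of the segment lengths plus $2\sum_i \dist_G(s_i,p(s_i))$.

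\emph{Charging the projection overheads.} This is the main obstacle: the projection terms $\dist_G(s_i,p(s_i))$ must sum to $O(\dist_G(x,y))$. Since $w=s_{i+1}\notin B(s_i,A)$, we get $\dist_G(s_i,p(s_i))\le \dist_G(s_i,s_{i+1})=\ll(\pi[s_i,s_{i+1}])$. The segments are disjoint along $\pi$, so $\sum_i \dist_G(s_i,p(s_i))\le \ll(\pi)$.

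The delicate point is the final segment, where $y$ may lie inside $B(s_i,A)$. In that case we use $(u,w)$ with $w=y$, or rather a segment ending at $y$ using its last edge. Its projection ends at $p(y)$, and its overhead $\dist_G(y,p(y))$ is already bounded by $2\dist_G(x,y)$. If $y\in A$, we take $p(y)=y$.

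Summing, $\dist_H(p(x),p(y))\le \ll(\pi)+2\ll(\pi)+2\cdot 2\dist_G(x,y)=O(\dist_G(x,y))$, which proves the $\Theta$ bound. The tie-breaking issue should be handled by noting that $H$ only needs some path in $\mathcal{P}$ of equal length. Replacing $\pi[s_0,u]$ by $\pi_G(s_0,u)$ does not change lengths, and the walk remains valid because only the endpoints $s_i$ matter for the $H$-edges.
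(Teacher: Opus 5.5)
Your argument is the paper's greedy decomposition, but run directly on $\pi_G(x,y)$ with non-terminal endpoints. The paper instead first proves $\dist_H(a,b)\le 4\dist_G(a,b)$ for terminals $a,b\in A$. It then extends to general $x,y$ via $\dist_G(p(x),p(y))\le \dist_G(p(x),x)+\dist_G(x,y)+\dist_G(y,p(y))=O(\dist_G(x,y))$.

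Working directly with $y$ creates exactly the difficulty you flag, and your treatment of it is incomplete. Let $s_{J-1}$ be the start of the final segment and suppose $y\in B(s_{J-1},A)$. You handle the end overhead $\dist_G(y,p(y))$. But your charging bound $\sum_{i<J}\dist_G(s_i,p(s_i))\le \ll(\pi)$ also needs $\dist_G(s_{J-1},p(s_{J-1}))\le \dist_G(s_{J-1},y)$. This is precisely what $y\in B(s_{J-1},A)$ negates: by definition it means $\dist_G(s_{J-1},y)<\dist_G(s_{J-1},p(s_{J-1}))$.

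The sum bound genuinely fails in general. Take a unit-length path $x=v_0,v_1,\dots,v_{10}=y$ and a single terminal attached to $v_0$ by a unit edge. Your greedy cut points are $v_0,v_1,v_3,v_7$, and $y$ lies in $B(v_7,A)$. The start overheads are $1,2,4,8$, summing to $15>10=\ll(\pi)$.

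The repair is one line. Since $p(x)\in A$, $s_{J-1}$ lies on a shortest $x$--$y$ path, and $\dist_G(x,p(x))\le\dist_G(x,y)$, you get $\dist_G(s_{J-1},p(s_{J-1}))\le \dist_G(s_{J-1},p(x))\le \dist_G(s_{J-1},x)+\dist_G(x,p(x))\le 2\dist_G(x,y)$. This adds only $O(\dist_G(x,y))$ to your total.

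Alternatively, do what the paper does and decompose $\pi_G(p(x),p(y))$ instead. A terminal never lies in any ball $B(v,A)$, because $\dist_G(v,a)\ge\dist_G(v,p(v))$ for every $a\in A$. So every segment ends outside the ball of its start, and the charging inequality holds for all segments uniformly. The general case then follows from the triangle inequality as above.

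With either repair the rest of your proof is fine: the lower bound, the bound $\dist_G(y,p(y))\le 2\dist_G(x,y)$, and the walk construction. Your explicit treatment of replacing $\pi[s_i,u]$ by the fixed path $\pi_G(s_i,u)$ is also correct, and it is a point the paper's proof silently glosses over.
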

\begin{proof}
Let us first prove the claim for the special case where $x,y \in A$ and thus $\dist_G(x, p(x)) = \dist_G(y, p(y)) = 0$.  Here, for a shortest $xy$-path $\pi_G(x,y)$ in $G$, we define the subsequence $v_0 = x, v_1, v_2, \ldots, v_\tau = y$ of $\pi_G(x,y)$ where $v_{i+1}$ is the vertex closest to $y$ on the path $\pi_G(x,y)$ such that $v_i v_{i+1}$-path segment $P_i = \pi_G(x,y)[v_i, v_{i+1}]$ is in $\mathcal{P}$. 

We claim that for $i < \tau$, $\dist_G(v_i, p(v_i)) \leq \dist_G(v_i, v_{i+1})$. From $v_i$, let $u_i$ be the last vertex on $\pi_G(x,y)$ after $v_i$ that is in $B(v_i, A)$. If $u_i \neq y$, let $w_i$ be the vertex on $\pi_G(x,y)$ following $u_i$. By construction, we have that the path $P_i' = \pi_G(v_i, u_i) \oplus (u_i, w_i) \in \mathcal{P}$ and thus $P'_i \subseteq P_i$. But since $w_i$ is outside of the ball $B(v_i, A)$, we have $\dist_G(v_i, w_i) \geq \dist_G(v_i, A) = \dist_G(v_i, p_A(v_i))$. If $u_i = y$, we also have $v_{i+1} = u_i = y$. Then, trivially from $y \in A$, we have $\dist_G(v_i, p(v_i)) = \dist_G(v_i, A) \leq \dist_G(v_i, v_{i+1})$. 

By the triangle inequality,  $\dist_G(v_{i+1}, p_A(v_{i+1})) \leq \dist_G(p_A(v_i), v_i) + \dist_G(v_i, v_{i+1})$. Combining this with the subpath property of shortest paths and our insight above yields
\begin{align}
    \ell(\pproj_A(P_i)) &\leq \dist_G(p_A(v_i), v_i) + \dist_G(v_i, v_{i+1}) + \dist_G(v_{i+1}, p_A(v_{i+1})) \nonumber \\&\leq 2(\dist_G(p_A(v_i), v_i) + \dist_G(v_i, v_{i+1})) \nonumber \\& \leq 4 \cdot \dist_G(v_i, v_{i+1}). \label{eq:keyInequalityPathProj}
\end{align}
This, in turn, yields 
\[
    \dist_H(x,y) \leq \sum_{i = 0, \ldots, \tau-1} \ell'(\pproj_A(P_i)) \leq \sum_{i = 0, \ldots, \tau-1}  4\dist_G(v_i, v_{i+1}) = 4 \cdot \dist_G(x,y). 
\]
A lower bound $\dist_G(x,y) \leq \dist_H(x,y)$ is immediate from the fact that we project $uv$-paths to $(u,v)$ edges of equal length.

Finally, consider the more general case, where $x \in V, y \not\in B(x,A)$. Then, since $y \not\in B(x, A)$, $\dist_G(x,y) \geq \dist_G(x, p(x))$ and thus by triangle inequality, $\dist_G(y, p(y)) \leq \dist_G(x,y) + \dist_G(x, p(x))$. Since the distance between terminals $p(x)$ and $p(y)$ is approximately preserved, this yields 
\[
\dist_G(x, p(x)) + \dist_H(p(x), p(y)) + \dist_G(p(y), y) = \Theta(\dist_G(x,y)).
\]
\end{proof}

Finally, to avoid that many vertex pairs $u,v \in V$ are locally close and thus require explicit storage, we want to choose the terminal set $A$ such that each local ball $B(v, A)$ is of small size. A natural idea is to use a standard hitting set argument: by adding each vertex $w \in V$ to $A$ with some probability $p = \Theta(k/n)$, we have with high probability that at least one of the $k$-closest vertices of $v$ is 'hit', i.e. added to $A$. This enforces $|B(v,A)|< k$ while $A$ has expected size $\tilde{O}(n/k)$. In \cite{roditty2005deterministic}, it was shown that such a hitting set can also be constructed deterministically. Their main result is stated below.

\begin{theorem}[see \cite{roditty2005deterministic}]
Given a graph $G=(V, E, \ell)$ and integer $k \geq 1$, we say $A$ is $k$-shattering if for all $v \in V$, $|B(v, A)|, |C(v,A)| \leq k$, i.e. if the $(k+1)$-th closest vertex to $v$ has distance greater-or-equal to its pivot $p_A(v)$. A $k$-shattering set $A \subseteq V$ of size $O(n \log n/k)$ can be computed in time $\tilde{O}(mk)$ along with all local balls $B(v, A)$ for $v \in V$ and distances $\dist_G(u,v)$ for $u \in B(v,A)$.
\end{theorem}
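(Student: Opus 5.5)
The plan is to make the standard randomized hitting-set argument deterministic with a greedy set-cover construction, which is the route of \cite{roditty2005deterministic}. For each $v \in V$, let $N_k(v)$ be the set of its $k$ closest vertices, with ties broken consistently, say by vertex id; if $G$ has fewer than $k$ vertices reachable from $v$, take all of them. We compute every $N_k(v)$ using a truncated Dijkstra from $v$ that stops once $k$ vertices are settled. With bounded degree this costs $\tilde O(k^2)$ per vertex. In general we keep it at $\tilde O(mk)$ total by scanning only the edges leaving settled vertices, after the standard degree-reduction step that the paper mentions handling separately. These $N_k(v)$ are exactly the objects we need to hit.

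\textbf{Computing $A$.} We want $A$ to intersect every $N_k(v)$, since then the ball condition holds. This is a hitting-set instance with $n$ sets of size $k$ over a universe of size $n$. The fractional solution $x_w = 1/k$ for all $w$ is feasible and has value $n/k$. So the greedy algorithm, which repeatedly adds the vertex contained in the most not-yet-hit sets, gives a $\ln n$-approximation, i.e. $|A| = O(n\log n/k)$. We implement it with bucket queues over the $nk$ incidences in $O(nk)$ time. A cleaner deterministic alternative is conditional expectations on the sampling with $p = \Theta(\log n/k)$, but greedy is simpler.

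\textbf{Ball and cluster bounds.} Once $A$ hits $N_k(v)$, some $a \in A$ lies among the $k$ closest vertices to $v$. Then $\dist(v,p_A(v)) \le \dist(v,a)$, so $B(v,A)$, which consists of vertices strictly closer than the pivot, is contained in $N_k(v)\setminus\{a\}$. Hence $|B(v,A)| < k$. All balls, together with the distances $\dist_G(u,v)$ for $u\in B(v,A)$, can be read off from the truncated Dijkstra runs we already performed. The cluster bound $|C(v,A)|\le k$ is the main obstacle, because clusters are the "reverse" objects: $u\in C(v,A)$ iff $v\in B(u,A)$. Hitting sets alone give only $\sum_v |C(v,A)| = \sum_u |B(u,A)| \le nk$, not a per-vertex bound. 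To control individual clusters, I would follow the Thorup–Zwick style fix. Compute clusters by a modified Dijkstra from each $w$ that relaxes only into $u$ with $\dist(w,u)<\dist(u,A)$; this forms a connected, shortest-path-closed region. Whenever some cluster exceeds $k$, which we detect at cost $O(k)$ per check, add vertices of that big cluster to $A$, for instance its center, and recurse on the vertices whose clusters are still large. Adding a center shrinks every ball it intersects. A charging or potential argument in the style of Thorup–Zwick shows that the expected or greedy number of extra additions over $O(\log n)$ rounds is $O(n\log n/k)$, with each round costing $\tilde O(mk)$.

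Finally, we total the running time: $\tilde O(mk)$ for the truncated searches, $O(nk)$ for greedy, and $\tilde O(mk)$ over $O(\log n)$ cluster-repair rounds. This gives the theorem. The step I expect to need the most care is showing that the repair rounds for oversized clusters keep $|A|=O(n\log n/k)$ deterministically.
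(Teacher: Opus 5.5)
Your handling of the ball condition is correct. A greedy hitting set for the sets $N_k(v)$ has size $O(n\log n/k)$ against the fractional solution $x_w=1/k$, it forces $B(v,A)\subseteq N_k(v)\setminus\{a\}$, and balls and distances can be read off the truncated searches. The paper gives no proof of its own here; it only cites \cite{roditty2005deterministic}.

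\textbf{The gap is the cluster condition} $|C(v,A)|\le k$, which you rightly flag as the main obstacle but never establish. Your concrete procedure is to add the center $w$ of an oversized cluster and recompute. It has no size bound beyond the trivial one. Each such $w$ leaves the more than $k$ balls $B(u,A)$ with $u\in C(w,A)$, and balls only shrink. So the potential $\sum_u|B(u,A)|<nk$ caps the number of additions at $n$, not at $O(n\log n/k)$. The sentence invoking ``a charging or potential argument in the style of Thorup--Zwick'' is exactly the missing proof. Moreover, the Thorup--Zwick argument does not add centers.

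\textbf{What is needed is a round lemma.} Given $W=\{w:|C(w,A)|>k\}$, find $A'\subseteq W$ with $|A'|=O(n/k)$ after which fewer than $|W|/4$ clusters exceed $k$. Clusters only shrink as $A$ grows, so $O(\log n)$ rounds then give $|A|=O(n\log n/k)$. Thorup and Zwick get this by sampling each $w\in W$ with probability $p=\Theta(n/(k|W|))$. They use the duality $\sum_{w\in W}|C(w,A\cup A')|=\sum_u|B(u,A\cup A')\cap W|\le\sum_u X_u$. Here $X_u$ counts the vertices of $B(u,A)\cap W$ that precede the first sampled one in $u$'s distance order, and its expectation is at most $1/p$.

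This argument is randomized, while the statement, as the paper uses it, is deterministic, so you also need a derandomization. One option:
\begin{itemize}
\item Take $p=8n/(k|W|)$, and simply add all of $W$ if $p\ge 1$.
\item The estimator $|A'|/(8n/k)+\sum_uX_u/(k|W|/8)$ has expectation at most $2$.
\item Fixing coins one vertex at a time by conditional expectations yields $|A'|\le 16n/k$ and $\sum_uX_u\le k|W|/4$, so fewer than $|W|/4$ oversized clusters remain.
\item The conditional expectation of each $X_u$ is a sum of prefix products over the list $B(u,A)\cap W\subseteq N_k(u)$. These lists have total length below $nk$, so a round costs $\widetilde O(nk)$.
\end{itemize}
Greedily hitting the first $\lceil k|W|/(4n)\rceil$ entries of each list is simpler. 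However, it costs $O((n/k)\log n)$ per round, i.e.\ $O(n\log^2 n/k)$ overall, one logarithm more than claimed.

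\textbf{A smaller point} is your running-time accounting for the truncated searches. With maximum degree $\Delta$ each search costs $\widetilde O(k\Delta)$, not $\widetilde O(k^2)$; the latter would total $\widetilde O(nk^2)$. Without a degree bound, scanning all edges of the settled vertices is not $\widetilde O(mk)$ in total, since a single high-degree vertex may lie in every $N_k(v)$.
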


This yields that we can store the distances of all locally close vertex pairs $u,v \in V$ in time and space $\tilde{O}(m k)$. However, choosing $A$ to be $k$-shattering also has a second crucial effect: it allows us to bound the number of edges in $H = (A, E', \ell')$ by $\tilde{O}(m \cdot \poly(k))$. Here, we assume w.l.o.g. that $G$ has constant maximum degree $\Delta$ which yields 
\[
    |E'| \leq |\mathcal{P}| \leq  \Delta \cdot \sum_{v \in V} |B(v,A)| \leq \Delta \cdot n k.
\]

The \cite{KMPG24}-vertex sparsifier combines the ASZ paths with core graphs (for the reason of efficient dynamic maintenance, for a static built, ASZ paths on their own would already suffice). Here a core graph is a smaller graph that approximates the distance structure. We first recall the definition of rooted forests from the preliminaries. 

\begin{definition}[Rooted Forest] \label{def:rooted_forest}
Given a graph $G = (V, E)$, a rooted forest $F \subset G$ is an acyclic subgraph of $G$ with additional root function $\root_F : V \mapsto V$ that maps all vertices in the same tree in $F$ to a unique vertex in the tree.
\end{definition}

Next, we define how to project an edge onto a forest. If both endpoints are in the same tree, its projection is the cycle it forms with the tree. Otherwise, its projection is given by concatenating the edge with the paths to the respective tree roots. 

\begin{definition}[Forest Projection]\label{def:forestProj}
        Given $G = (V, E)$ and a rooted forest $F$ of $G$, and an edge $e = (u, v)$ such that $u \in T_i$ and $v \in T_j$ where $T_i$ and $T_j$ denote the $i$-th and $j$-th tree in $F$ respectively, we let 
        \begin{align*}
            \pproj_F(e) \defeq T_i[\root_F(u), u] \concat (u,v) \concat T_j[v, \root_F(v)]. 
         \end{align*}
         If $i = j$, we call $e$ a tree edge and $\pproj_F(e)$ a tree cycle.
\end{definition}

Stretch is an important notion that governs the way we pick our forests. We typically aim to minimize stretch as much as possible. It is defined as follows.  

\begin{definition}[Stretch] \label{def:stretch}
    Given $G = (V, E, \ll)$ and a rooted forest $F$
    \begin{align*}
        \str_F(e = (u, v)) \defeq \|\LL\vecone_{\pproj_F(e)}\|_1/\ll(e). 
    \end{align*}
    Then, we define the stretch of a path as follows. 
    \begin{align*}
        \str_F(P) \defeq \frac{1}{\ll(P)} \sum_{e \in P} \str_F(e) \cdot \ll(e)
    \end{align*}
Whenever we refer to upper bounds for the stretch instead of the true stretch, we use $\widetilde{\str}(\cdot)$. 
\end{definition}

Given a graph $G$ and a forest $F \subseteq G$, a core graph is obtained by contracting the forest components. A core graph sufficiently reduces the number of vertices, but doesn't preserve all distances sufficiently well. We show that one core graph preserves roughly half of the not yet preserved ASZ-paths. Therefore the full vertex-sparsifier will be obtained by combining logarithmically many core graphs. 

\begin{definition}[Core Graph]
    Given a rooted forest $F \subseteq G$ for some $G$, we let $\mathcal{C}(G, F)$ be the graph obtained by contracting every tree component $F$ in $G$. Any subgraph of $G$ is projected to a subgraph of $\mathcal{C}(G, F)$ by contraction of the forest F, we denote this projection map by $\Pi_{\mathcal{C}}: 2^{E(G)} \rightarrow 2^{E(\mathcal{C}(G, F))}$.
\end{definition}

\begin{remark}
    The map $\Pi_{\mathcal{C}}$ induces a 1:1 correspondence between the off-tree edges and the edges in $\mathcal{C}(G,F)$. For any off-tree edge $e$ we understand $\Pi_{\mathcal{C}}(e)$ as an edge (potentially a self-loop) in $\mathcal{C}(G,F)$.
\end{remark}

In this section, we will be interested in core graphs with low stretch. Such a core graph comes with stretch upper bounds $\tilde{\str}(\cdot)$, and guarantees that the stretch of every edge $e$ is bounded by $\tilde{\str}(\cdot)$. We consider stretch upper bounds instead of directly bounding the stretch because these will be much more stable when considering dynamic graphs. 

\begin{definition}[$\ell_1$-Core Graph]\label{def:L1CoreGraph}
    Given a rooted forest $F \subseteq G$ with stretch upper bounds $\tilde{\str}(\cdot)$ for some $G = (V, E, \ll, \gg)$, we call the core graph $\mathcal{C}(G, F)$ an $\ell_1$-Core Graph if for every off-tree edge $e \in E(G)$ we have $\sum_{e' \in \pproj_F(e)} \ll(e') \leq \ll_{\mathcal{C}}(\Pi_{\mathcal{C}}(e)) \defeq \tilde{\str}(e) \cdot \ll(e)$ and $\gg_{\mathcal{C}}(\Pi_{\mathcal{C}}(e)) \defeq \sum_{e' \in \pproj_F(e)} \gg(e')$.
\end{definition}

The next lemma shows that a few forests suffice to preserve all the paths in any concrete path set. Since we are interested in approximately preserving distances, we will always instantiate the lemma with the ASZ-paths $\tilde{\mathcal{P}}$ defined above, but we emphasize that we do not use any specifics about the path set.

\begin{lemma}[Core Graphs, Theorem 3.1 of \cite{KMPG24}]\label{lm:StaticCoreGraphs}
    Given a graph $G$, where $\Delta$ bounds the maximum degree and some parameter $k$, there exists an $\beta = \tilde{O}(1) \cdot \Delta \cdot k$ shattering pivot set $A$, $\eta = \tilde{O}(1)$ forests $F_1, \ldots F_{\eta}$ of $G$ and core graphs $\mathcal{C}(G,F_1), \dots, \mathcal{C}(G,F_\eta)$, together with stretch bounds $\widetilde{\str}_i(e)$ for all $e \in E$ such that 
    \begin{enumerate}
        \item for all $i,$ each tree of $F_i$ has at most $\tilde{O}(k)$ many vertices and the set of roots of $F_i$ has size at most $\tilde{O}(n/k)$, and contains all vertices in $A$,
        \item every ASZ-path $P \in \tilde{\mathcal{P}}$ (see \Cref{def:asz_path}) is preserved, i.e. $\forall P \in \mathcal{P}: \exists i: \widetilde{\str}_i(P) \leq \alpha = \tilde{O}(1).$ 
    \end{enumerate}
    The algorithm runs in time $\tilde{O}(\beta^{O(1)} \cdot n)$.
\end{lemma}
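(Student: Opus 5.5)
We take the shattering set $A$ from the deterministic construction of \cite{roditty2005deterministic}, applied with parameter $k$. This gives $|A| = O(n\log n/k)$, and every ball and cluster has size at most $k$. Consequently every generating path $\pi_G(v,u)\oplus(u,w)$ with $u\in B(v,A)$ stays inside the local neighbourhood $B^{\mathcal N}(v,A)$. The projection $\pproj_A$ then adds only the pivot segments $\pi_G(p(x),x)$ and $\pi_G(y,p(y))$. A vertex is touched by $\tilde O(\Delta k)$ paths in total, and this sparsity is what the shattering parameter $\beta=\tilde O(1)\Delta k$ encodes. The forests are built by an iterative \emph{halving} argument. We maintain the set $\mathcal{Q}\subseteq\tilde{\mathcal P}$ of paths not yet preserved, initially all of $\tilde{\mathcal P}$. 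In round $i$ we construct one forest $F_i$ with stretch bounds $\widetilde{\str}_i$ so that at least half of the paths in $\mathcal Q$ (by count) satisfy $\widetilde{\str}_i(P)\le\alpha$, and we remove those paths. Since $|\tilde{\mathcal P}|\le \Delta nk$, after $\eta=O(\log(nk\Delta))=\tilde O(1)$ rounds $\mathcal Q$ is empty, which gives item~2.

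For a single round we use a low-stretch forest construction that is weighted by path usage. Each edge $e$ gets importance $\ll(e)\cdot|\{P\in\mathcal Q: e\in P\}|$. We then compute a low-stretch spanning tree or forest (e.g.\ via the low-diameter decomposition hierarchy of AKPW/ABN type) with average stretch $\tilde O(1)$ with respect to these importances. This gives
\[
\sum_{P\in\mathcal Q}\sum_{e\in P}\str_{F}(e)\,\ll(e)=\tilde O(1)\sum_{P\in\mathcal Q}\ll(P).
\]
We set $\widetilde{\str}(e)$ to be an upper bound on $\str_F(e)$, rounded to a power of two. The per-path contributions are normalised by $\ll(P)$, so the average of $\str_F(P)$ over $\mathcal Q$ is $\tilde O(1)$. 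To make this normalisation legitimate, we bucket the paths by length into $O(\log n)$ scales (lengths are polynomially bounded) and handle the largest bucket, losing only a log factor. Markov's inequality then shows that at least half of the paths have $\widetilde{\str}(P)\le\alpha=\tilde O(1)$.

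To meet item~1, the forest cannot be a spanning tree. We cut the tree into components of $\tilde O(k)$ vertices, rooted so that every vertex of $A$ is a root, using $\tilde O(n/k)$ cuts. Cutting forces some edges to become inter-component edges, and their projection $\pproj_F(e)$ routes through the roots. We bound this increase by rooting each component at a pivot, or at a vertex whose tree distance to the component's vertices is comparable to the pivot distances. For the ASZ paths the pivot segments already pay for the distance to $A$, since each path contains $\pi_G(p(x),x)$. The stretch of the ASZ paths therefore survives up to a constant factor.

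The algorithm is static and runs in time $\tilde O(\beta^{O(1)}n)$. Enumerating $\tilde{\mathcal P}$ takes $O(\Delta n k\cdot k)$ time, and each of the $\tilde O(1)$ rounds takes near-linear time in the total path size. The main obstacle is the third step: reconciling the size constraint $\tilde O(k)$ on components (with $A$ among the roots) with low stretch. Naively chopping a low-stretch tree can blow up the stretch of edges whose endpoints lie far from roots. To handle this, I would first try charging the root detours to the $\pi_G(\cdot,p(\cdot))$ prefixes and suffixes of the ASZ paths, using $k$-shattering to ensure that balls are small, so that components can be grown around pivots.
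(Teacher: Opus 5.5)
Your outer loop matches what the paper indicates. The paper imports the lemma from \cite{KMPG24} without proof, but says that one core graph preserves roughly half of the not-yet-preserved ASZ paths, so logarithmically many forests suffice. Your weighted low-stretch tree plus Markov step is also fine for a \emph{spanning tree}. You do not even need length buckets: the weights $w(e)=\sum_{P\ni e}\ell(e)/\ell(P)$ give $\sum_e w(e)\str_T(e)=\sum_P \str_T(P)$ and $\sum_e w(e)=|\mathcal Q|$ directly.

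The gap is your third step. It carries all the content of item~1, and you only assert it (``the stretch of the ASZ paths therefore survives up to a constant factor'') and then defer it (``I would first try\dots''). After you cut $T$ into trees of $\widetilde O(k)$ vertices with all of $A$ among the roots, every edge $e=(u,v)$ whose endpoints land in different trees has $\pproj_F(e)$ routed through both roots. So $\widetilde{\str}(e)\ell(e)$ must pay $\ell(F[\mathrm{root}(u),u])+\ell(F[v,\mathrm{root}(v)])$. This cost is not confined to the removed tree edges. An edge whose fundamental cycle in $T$ was short (the lowest common ancestor of its endpoints is close to both), but whose endpoints are separated by a cut, can be forced through a root lying far above that ancestor. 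The Markov bound you proved for $T$ says nothing about these new stretch values.

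The charging you sketch does not repair this. The pivot segments $\pi_G(p(x),x)$ and $\pi_G(y,p(y))$ contribute $\dist_G(x,A)+\dist_G(y,A)$ to $\ell(\tilde P)$ only once per path. A root detour, however, is paid once per separated edge of $\tilde P$. An ASZ path can have $\Theta(k)$ edges: its middle part lies in $B(v,A)$ apart from the last edge, and each pivot segment has up to $k$ hops. Your argument gives no control over how many of these edges are separated. Suppose every root path were within a constant of the distance to $A$, which your construction does not provide, since tree distances in a low-stretch tree can greatly exceed graph distances. Even then you would get stretch proportional to the number of crossings, i.e.\ up to $\Theta(k)$ rather than $\widetilde O(1)$. ``Rooting each component at a pivot'' is also ill-defined. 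A piece may contain no $A$-vertex or several (each must be its own root), and its vertices generally have different pivots.

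What is missing is a forest construction with two properties. The size and root constraints must be built in. Its stretch overestimates, root detours included, must still have $\widetilde O(1)$ average with respect to the path weights, so that Markov is applied to the forest rather than to the uncut tree. The $A$-rooting part is free. Contract $A$ into a super-root before computing the low-stretch tree; deleting the super-root leaves a forest rooted at $A$ in which every edge's projection has exactly the length of its fundamental cycle in the contracted tree. The further $\widetilde O(n/k)$ cuts needed to bring trees down to $\widetilde O(k)$ vertices are the hard part. They must be chosen with respect to the same weights, by a low-stretch-decomposition argument that bounds the new detours on average. That is the nontrivial content of Theorem~3.1 of \cite{KMPG24}, and your proposal does not supply it.
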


In \cite{KMPG24}, the construction is dynamized, which allows us to maintain the core graphs in \Cref{sec:Dynamization}.

\paragraph{Adding together the core graphs. }

We next define the core graph sum. It takes multiple core graphs (that together preserve all ASZ-paths) and glues them together at the roots with a clique of length and gradient zero. This leads to our final vertex sparsifier.

\begin{definition}[Core Graph Sum]\label{def:StaticCoreGraphSum}
    Given a graph $G$, the path set $\mathcal{P}$, and the set of $\eta$ forests $F_1, \ldots, F_{\eta}$ of $G$ from Lemma \ref{lm:StaticCoreGraphs}, we denote by $\mathcal{C}(F_1, \ldots, F_{\eta})$ the graph obtained from $\mathcal{C}(G, F_1)(t) + \ldots + \mathcal{C}(G, F_\eta)(t)$ by connecting all $\eta$ copies of a vertex $a \in A$ by a star. The edges of the star have length $0$ and gradient $0$.
\end{definition}

\begin{remark}\label{rmk:StarEmbeddingExtension}
    We understand $\mathcal{C}(G, F_i)(t)$ as a subgraph of $\mathcal{C}(F_1, \ldots, F_{\lambda})(t)$. In that sense, $\Pi_{\mathcal{C}_i}$ maps into $\mathcal{C}(F_1, \ldots, F_{\lambda})$. Here, we always extend the mapping to the center of the star, i.e. assume that the mapping of some path $P$ is given by $\Pi_{\mathcal{C}_i}(P) = (v_0,v_1, \dots, v_k)$ into $\mathcal{C}(G, F_i)(t)$ then we extend it into $\mathcal{C}(F_1, \ldots, F_{\lambda})(t)$ as $\Pi_{\mathcal{C}_i}(P) = (s_0, v_0,v_1, \dots, v_k, s_k)$ where $s_0,s_k$ denotes the star centers associated with $v_0, v_k$.
\end{remark}

\subsection{The Vertex Sparsification Dichtomy} \label{sec:cycle_preservation}

In this section, we prove that instantiating \Cref{lm:StaticCoreGraphs} with the ASZ-paths from \Cref{def:asz_path} allows us to either detect an approximate min-ratio cycle in some local ball neighborhood $B^{\mathcal{N}}(v,A)$ or as a tree cycle in one of the forests $F_1, \dots, F_{\eta}$, or guarantee that the core graph sum of $\mathcal{C}(G,F_1), \dots, \mathcal{C}(G,F_\eta)$ contains an approximate min-ratio cycle. This is the sub-problem we are required to solve repeatedly to obtain a primal min-cost flow algorithm via the $\ell_1$-IPM framework. 

We start by defining a mapping of the cycles in $G$ into $H$. The definition extends the map for shortest paths from $G$ into $H$ that was previously introduced in the proof of \Cref{clm:ASZstretch}.

\begin{definition}[Pivot Cycle]\label{def:pivotCycle}
Given a cycle $C = \langle x = x_1, x_2, \ldots, x_\eta = y\rangle$ in $G$. We define the subsequence $v_0 = x, v_1, v_2, \ldots, v_\tau = y$ of $C \setminus (y,x)$ where $v_{i+1}$ is the vertex closest to $y$ on the path $C \setminus (y,x)$ such that $v_i v_{i+1}$-path segment $P_i = \pi_G(x,y)[v_i, v_{i+1}]$ is in $\mathcal{P}$ and define $P_{\tau} = (y,x)$. We let $\pproj_A(C) \defeq \pproj_A(P_0) \concat \pproj_A(P_1) \concat \cdots \concat \pproj_A(P_{\tau})$ denote its pivot cycle. 
\end{definition}

We note that edges are trivially in $\mathcal{P}$ and thus in particular, all segments $P_i$ above are in $\mathcal{P}$. Since we merely extend the definition, it is not hard to show the following claim.

\begin{lemma}\label{clm:projectedCycleLength}
For cycle $C$ and vertex $x$ on the cycle, if $C \not\subseteq B(x,A)$,then $\sum_{e \in \pproj_A(C)} \ell(e) \leq 8 \cdot \ell(C)$.
\end{lemma}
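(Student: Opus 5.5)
We mimic the proof of \Cref{clm:ASZstretch}, now along the path $C \setminus (y,x)$ (with $x = x_1$ the chosen vertex) rather than a shortest path, and then account separately for the closing segment $P_\tau = (y,x)$. The cycle decomposes into segments $P_0, \dots, P_{\tau-1}, P_\tau$. Each segment satisfies $P_i \in \mathcal{P}$ and has projected length
\[
\ell(\pproj_A(P_i)) = \dist_G(p(v_i), v_i) + \ell(P_i) + \dist_G(v_{i+1}, p(v_{i+1})).
\]
Since $\ell(\pproj_A(C)) = \sum_i \ell(\pproj_A(P_i))$, it suffices to bound the pivot-distance terms $d_i \defeq \dist_G(v_i, p(v_i))$ in terms of $\ell(C)$.

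\textbf{Step 1: bound $d_i$ for $i < \tau$.} As in the claim, let $u_i$ be the last vertex after $v_i$ on the path with $u_i \in B(v_i, A)$, and let $w_i$ be its successor.
\begin{itemize}
\item If $w_i$ exists, then the greedy choice of $v_{i+1}$ makes $w_i$ lie weakly before $v_{i+1}$. Because the prefix $\pi_G(v_i,u_i)\oplus(u_i,w_i)$ lies in $\mathcal{P}$, we get $\dist_G(v_i,w_i) \ge d_i$.
\item Here lies the subtlety: the segments of $C$ need not be shortest paths. We therefore use $\dist_G(v_i,w_i) \le \ell(P_i)$ (as $w_i$ lies on $P_i$), and it is safest to use $\ell(P_i)$ rather than $\dist_G(v_i,v_{i+1})$.
\item This gives $d_i \le \ell(P_i)$, and by the triangle inequality $d_{i+1} \le d_i + \ell(P_i) \le 2\ell(P_i)$.
\item If instead the remainder of the path lies in the ball, then $v_{i+1} = y$ and we cannot use $y \in A$. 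In that case I would argue via the hypothesis $C \not\subseteq B(x,A)$ and the greedy structure, or bound $d_\tau$ separately as in Step 2.
\end{itemize}
Since the segments are then controlled, we get $\ell(\pproj_A(P_i)) \le 4\ell(P_i)$ for $i<\tau$, exactly as in \eqref{eq:keyInequalityPathProj}.

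\textbf{Step 2: the closing edge $P_\tau = (y,x)$.} Its projection costs $d_\tau + \ell(y,x) + d_0$.
\begin{itemize}
\item Bound $d_0 = \dist_G(x,p(x))$ using $C \not\subseteq B(x,A)$. Some vertex $z$ of $C$ has $\dist_G(x,z) \ge d_0$, and $\dist_G(x,z) \le \ell(C)/2 \le \ell(C)$ by walking the shorter way around the cycle. Hence $d_0 \le \ell(C)$.
\item Bound $d_\tau \le d_0 + \ell(y,x) \le \ell(C) + \ell(y,x)$ by the triangle inequality. This also resolves the degenerate case from Step 1, since $d_\tau \le d_0 + \dist_G(x,y) \le 2\ell(C)$ there.
\end{itemize}

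\textbf{Step 3: sum up.} The segments $P_0,\dots,P_{\tau-1}$ contribute at most $4\ell(C \setminus (y,x))$. Rather than add Step 2 naively, we count each $d_i$ once per appearance: it appears at the end of $P_{i-1}$ and at the start of $P_i$. Charging $d_i \le d_{i-1} + \ell(P_{i-1})$, together with $d_0 \le \ell(C)/2$, should give a total of at most $8\ell(C)$.

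The main obstacle is the bookkeeping of constants to land exactly at $8$, in particular the last segment when it ends at $y$ inside the ball (where $y \notin A$). I would first try the cleanest global bound,
\[
\sum_i \ell(\pproj_A(P_i)) = \ell(C) + 2\sum_{i} d_i,
\]
and show $\sum_i d_i \le \tfrac{7}{2}\ell(C)$ using $d_i \le \ell(P_i)$ for all non-final $i$, plus $d_0, d_\tau$ as in Step 2.
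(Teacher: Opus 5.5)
Your proposal takes the same route as the paper: extend \eqref{eq:keyInequalityPathProj} to the non-closing segments, then bound the closing segment $P_\tau=(y,x)$ using $C\not\subseteq B(x,A)$. Your Steps 2--3 are fine, and your $d_0\le\ell(C)/2$ is even sharper than the paper's bound. The genuine gap is in Step 1, in the claim that the greedy choice puts the exit vertex $w_i$ weakly before $v_{i+1}$. That claim needs the arc $C[v_i,u_i]$ to coincide with the fixed path $\pi_G(v_i,u_i)$, so that $C[v_i,w_i]=\pi_G(v_i,u_i)\oplus(u_i,w_i)\in\mathcal{P}$. In \Cref{clm:ASZstretch} this is automatic, because subpaths of a shortest path are shortest paths. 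An arc of an arbitrary cycle, however, need not be a shortest path at all. The greedy extension then stops at the first vertex where the arc leaves a shortest path, which can lie deep inside $B(v_i,A)$, and then $d_i\le\ell(P_i)$ fails. Your remark $\dist_G(v_i,w_i)\le\ell(P_i)$ handles lengths, not this membership issue. The paper's one-line ``extends to segments $P_i$ for $i<\tau$'' skips exactly the same point.

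No bookkeeping can close this gap, because the statement is false for general cycles. Take a cycle $C=\langle c_1,\dots,c_N\rangle$ whose edges $(c_j,c_{j+1})$ have length $1$. For every $j$, add a detour $c_j,m_j,c_{j+1}$ of total length $1/2$ and a pendant edge $(c_j,a_j)$ of length $D$; set $A=\{a_1,\dots,a_N\}$ and $N=4D$.
\begin{itemize}
\item Then $p(c_j)=a_j$ with $\dist_G(c_j,a_j)=D$, and $\dist_G(c_1,c_{2D+1})=D$, so $C\not\subseteq B(c_1,A)$.
\item Each single cycle edge lies in $\mathcal{P}$, but no two consecutive cycle edges do, since $(c_j,c_{j+1})$ is not a shortest path. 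Hence every $c_j$ is a breakpoint.
\item Therefore $\sum_{e\in\pproj_A(C)}\ell(e)=N(2D+1)=(2D+1)\,\ell(C)$, which exceeds $8\,\ell(C)$ once $D\ge 4$. The maximum degree is only $5$.
\end{itemize}
Giving every cycle edge gradient $-1$ even makes $C$ a min-ratio cycle, so the failure is not an artifact of a badly chosen $C$.

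What your argument does prove is the version with first-exit breakpoints. Let $v_{i+1}$ be the first vertex of $C$ after $v_i$ outside $B(v_i,A)$ (or $y$ if there is none), let $u_i$ be its predecessor, and project $\pi_G(v_i,u_i)\oplus(u_i,v_{i+1})\in\mathcal{P}$ instead of $C[v_i,v_{i+1}]$. Then $d_i\le\dist_G(v_i,v_{i+1})\le\ell(C[v_i,v_{i+1}])$ holds for every segment ending at an exit vertex. Your Steps 2--3 then give the bound with room to spare. The discarded difference $C[v_i,u_i]-\pi_G(v_i,u_i)$ is a circulation inside $B(v_i,A)$, which is the kind of local circulation \Cref{thm:ASZ-Dichotomy1} needs.
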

\begin{proof}
The inequality \eqref{eq:keyInequalityPathProj} in \Cref{clm:ASZstretch} extends to segments $P_i$ for $i < \tau$. Thus, by the same argument $\sum_{i < \tau} \ell(\pproj(P_i)) \leq 4 \cdot \ell(C)$. Further, we have that the segment $P_{\tau}$ has $\pproj(P_{\tau}) = \pi_G(p(y), y) \concat (y,x) \concat \pi_G(x, p(x))$. By assumption, at least one vertex $x'$ on $C$ is not in $B_G(x,A)$ and thus $\ell(C) \geq \dist_G(x, x') \geq \dist_G(x, p_A(x))$. We thus get by the triangle inequality
\begin{align*}
    \ell(\pproj(P_{\tau})) &= \dist(p_A(y), y) + \ell(y,x) + \dist_G(x, p_A(x))\\
    &\leq 2(\ell(y,x) + \dist_G(x, p_A(x)))\\
    &\leq 4 \cdot \ell(C).    
\end{align*}
\end{proof}

It is not hard to observe that the set of pivot cycles is preserved in $H$ and can be mapped back along the underlying path segments into $G$ (this might make the cycle non-simple; however, this is not crucial for our purpose). 

We next prove the dichotomy that either an approximate min-ratio cycle is contained in a local ball or can be recovered in the vertex sparsifier $H$.

\begin{claim}[ASZ Dichotomy]\label{thm:ASZ-Dichotomy1}
Let $C$ be any cycle in $G = (V,E,\ll,\gg)$. Let $C'$ be the best min-ratio cycle fully contained in a local ball neighborhood $B^{\mathcal{N}}(v,A)$ for some $v \in V$. Then,
    \begin{align*}
        \min\left( \frac{\gg^T \vecone_{\pproj_A(C)}}{\norm{ \LL \vecone_{\pproj_A(C)}}_1}, \frac{\gg^T \vecone_{C'}}{\norm{ \LL \vecone_{C'}}_1}\right)
        \leq \frac{1}{10} \cdot \frac{\gg^T \vecone_C}{\norm{\LL \vecone_C}_1}.
    \end{align*}
In particular, the claim holds for the optimal min-ratio cycle $C^*$ in $G$ and thus, either an approximate min-ratio cycle is in a local ball, or we can recover one from $H$.
\end{claim}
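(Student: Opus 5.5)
The plan is to compare the pivot cycle $\pproj_A(C)$ with $C$ directly. The two cycles should carry \emph{exactly the same} signed edge vector, hence the same gradient, while \Cref{clm:projectedCycleLength} bounds the length blow-up by a constant. The only exception is when $C$ lies inside a single local ball, and then $C$ itself is a candidate for $C'$. Throughout, I read the segments $P_i$ in \Cref{def:pivotCycle} as consecutive subpaths of $C \setminus (y,x)$. I also assume the fixed shortest paths are chosen symmetrically, i.e.\ $\pi_G(p(v),v)$ is the reversal of $\pi_G(v,p(v))$; this can be enforced by a consistent tie-breaking rule. Write $r(\cdot)$ for the ratio $\gg^\top\vecone_{(\cdot)}/\norm{\LL\vecone_{(\cdot)}}_1$.

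\textbf{Step 1 (reduction to $r(C)<0$).} If $r(C)\ge 0$, the right-hand side is nonnegative, and I will argue that the left-hand side is nonpositive. If $C$ is contained in some $B^{\mathcal N}(v,A)$, then so is its reversal, and one of the two has nonpositive ratio, so $r(C')\le 0$. Otherwise Step 2 below shows $\gg^\top\vecone_{\pproj_A(C)}=\gg^\top\vecone_C$, and I would still need to control the sign of the first term; here I expect to either argue via reversals or note that only the case $r(C)<0$ matters for the ``in particular'' statement, since $r(C^*)\le 0$ always. So assume $r(C)<0$. The goal then becomes exhibiting a cycle whose ratio is at most $\tfrac1{10}r(C)$, i.e.\ whose ratio is at least a tenth as negative.

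\textbf{Step 2 (telescoping of the signed vector).} By definition,
\[
\pproj_A(C)=\pproj_A(P_0)\concat\cdots\concat\pproj_A(P_\tau),
\qquad
\pproj_A(P_i)=\pi_G(p(v_i),v_i)\concat P_i\concat \pi_G(v_{i+1},p(v_{i+1})),
\]
with indices taken cyclically, so $v_{\tau+1}=v_0=x$ and $P_\tau=(y,x)$. The tail $\pi_G(v_{i+1},p(v_{i+1}))$ of one piece and the head $\pi_G(p(v_{i+1}),v_{i+1})$ of the next are reversals of each other, so their contributions to $\vecone_{\pproj_A(C)}$ cancel. This holds for every $i$, including the wrap-around at $x$. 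Hence
\[
\vecone_{\pproj_A(C)}=\sum_i\vecone_{P_i}=\vecone_C ,
\qquad\text{and so}\qquad
\gg^\top\vecone_{\pproj_A(C)}=\gg^\top\vecone_C .
\]

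\textbf{Step 3 (case split on the ball at $x$).} Let $x$ be the starting vertex of $C$ used in \Cref{def:pivotCycle}.

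\emph{Case $C\subseteq B(x,A)$.} Then $C$ lies in $B(x,A)\subseteq B^{\mathcal N}(x,A)$, so $C$ is a candidate for the best local cycle. Therefore $r(C')\le r(C)\le \tfrac1{10}r(C)$, using $r(C)<0$.

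\emph{Case $C\not\subseteq B(x,A)$.} \Cref{clm:projectedCycleLength} gives $\sum_{e\in\pproj_A(C)}\ll(e)\le 8\,\ll(C)$. The quantity $\norm{\LL\vecone_{\pproj_A(C)}}_1$ is at most this sum with multiplicity, by the triangle inequality; in fact by Step 2 it equals $\norm{\LL\vecone_C}_1$. Combining the unchanged negative numerator with a denominator at most $8\,\ll(C)$ yields
\[
r(\pproj_A(C))\le \tfrac18\, r(C)\le\tfrac1{10}\, r(C).
\]
The final claim about $C^*$ follows by applying this to $C=C^*$.

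\textbf{Main obstacle.} I expect the delicate point to be making the cancellation in Step 2 rigorous. It requires the symmetric choice of shortest paths and correct handling of the wrap-around segment $P_\tau=(y,x)$, where the cycle closes up at $p(x)$. If the symmetric convention cannot be assumed, the fallback is to write $\vecone_{\pproj_A(C)}-\vecone_C$ as a sum of small circulations $\pi_G(v,p(v))\oplus\pi_G(p(v),v)$. These would then have to be handled separately by charging them against local-ball cycles $C'$, which is presumably where the slack between the constants $8$ and $10$ would be spent.
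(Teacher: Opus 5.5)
Your Step~2 is where the argument goes wrong. You read the segments $P_i$ of \Cref{def:pivotCycle} as the consecutive subpaths $C[v_i,v_{i+1}]$ of $C$, which gives $\vecone_{\pproj_A(C)}=\vecone_C$. But then the first ratio in the minimum is literally $r(C)$, and the inequality is vacuous. Also, $C'$ is never used outside the degenerate case $C\subseteq B(x,A)$, and your appeal to \Cref{clm:projectedCycleLength} in Step~3 does no work. These are signs that the reading is not the intended one.

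In the paper, $v_{i+1}$ is the first vertex at which $C$ leaves $B(v_i,A)$. The segment $P_i$ is the ASZ generating path $\pi_G(v_i,u_i)\concat(u_i,v_{i+1})$, with $u_i$ the predecessor of $v_{i+1}$ on $C$, i.e.\ a shortcut of $C[v_i,v_{i+1}]$. The paper's proof states explicitly that $\vecone_{\pproj_A(C)}$ is the sum of these shortest-path segments, which in general differs from $\vecone_C$.

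The bound \eqref{eq:keyInequalityPathProj}, and hence \Cref{clm:projectedCycleLength}, relies on exactly this: $P_i$ follows a shortest path until it exits $B(v_i,A)$, so that $\dist_G(v_i,p_A(v_i))\le\dist_G(v_i,v_{i+1})$. Under your reading, a maximal subpath of $C$ lying in $\mathcal P$ can end as soon as $C$ departs from the fixed shortest paths, even after a single edge if ties are broken away from the edges of $C$. The pivot cycle then picks up a detour of length $2\dist_G(v,p_A(v))$ at many vertices $v$. Its length in $H$, counted with multiplicity, need not be $O(\ll(C))$. So your argument yields no good cycle in $H$, which is what the ``in particular'' part and \Cref{thm:ASZ-Dichotomy2} actually need.

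The missing idea is to charge the difference between $C$ and its pivot cycle to local cycles. Let $C_i$ be $C[v_i,v_{i+1}]$ followed by the reversal of $P_i$.
\begin{itemize}
\item All vertices of $C[v_i,u_i]$ and of $\pi_G(v_i,u_i)$ lie in $B(v_i,A)$, and $v_{i+1}$ is a neighbor of $u_i$. So $C_i$ is a circulation in $B^{\mathcal N}(v_i,A)$.
\item Since $\ll(P_i)\le\ll(C[v_i,v_{i+1}])$, we get $\norm{\LL\vecone_{C_i}}_1\le 2\,\ll(C[v_i,v_{i+1}])$.
\item Telescoping the pivot paths as in your Step~2 gives $\vecone_{\pproj_A(C)}=\sum_i\vecone_{P_i}$, hence $\vecone_C=\vecone_{\pproj_A(C)}+\sum_i\vecone_{C_i}$.
\item The total length, with the pivot cycle counted with multiplicity, is at most $8\,\ll(C)+2\,\ll(C)=10\,\ll(C)$.
\end{itemize}
Now decompose each $C_i$ into cycles by \Cref{fac:cycledecomp} and apply \Cref{fact:ratio_inequality} with $\gg^\top\vecone_C<0$. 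Either the pivot cycle or some cycle inside a local ball neighborhood has ratio at most $\frac{1}{10}r(C)$; this is where $10=8+2$ comes from. Your ``main obstacle'' paragraph anticipates this charging scheme. However, it attributes the discrepancy to asymmetric pivot paths rather than to the shortcutting of the segments of $C$.
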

\begin{proof}
We write  $C = \langle x = x_1, x_2, \ldots, x_\eta = y\rangle$ and let $v_0 = x, v_1, \ldots, v_\tau = y$ be defined as in \Cref{def:pivotCycle}.

Let $C[v_i, v_{i+1}]$ be the segment of $C$ from $v_i$ to $v_{i+1}$. Let $\pi_G(v_{i+1}, v_i)$ be the shortest path between these vertices in $G$. We define the circulation $C_i = C[v_i, v_{i+1}] \cup \pi_G(v_{i+1}, v_i)$ is contained in $B_G^{\mathcal{N}}(v_i,A)$ since $v_{i+1}$ is the first (and only) vertex in $C_i[v_i, v_{i+1}]$ that is not in $B_G(v_i, A)$ by definition. Since $C \subseteq G$, we have $\|\LL \vecone_{C_i}\| \leq 2 \cdot \|\LL \vecone_{C[v_i, v_{i+1}]}\|$.

Note that since we define $\pproj_A(C) = \pproj_A(P_0) \concat \pproj_A(P_1) \concat \cdots \concat \pproj_A(P_{\tau-1}) \concat \pproj_A((y,x))$ and the end vertex of $P_i$ is the starting vertex of $P_{i+1}$ (analogous for the wrap around), we have that for each such endpoint $v_i$, the pivot paths $\pi_G(v_i, p_A(v_i))$ is traversed in the forward direction on $P_i$ and in the backward direction on $P_{i+1 \mod \tau +1}$. Thus, $\vecone_{\pproj_A(C)} = \sum_{i = 0, \ldots, \tau} \vecone_{\pi_G(v_i, v_{i+1} \mod \tau + 1)}$. Since $C_i = C[v_i, v_{i+1}] \cup \pi_G(v_{i+1}, v_i)$, we recover the original cycle $C$ by subtracting $\vecone_{\pproj_A(C)} - \sum_i \vecone_{C_i} = \vecone_{C}$. 

Next, note that by \Cref{clm:projectedCycleLength}, if $C$ is not contained in the ball of some vertex on the cycle, we have that $\sum_{e \in \pproj_A(C)} \ell(e) \leq 8 \cdot \|\LL \vecone_C\|_1$. Together with $\|\LL \vecone_{C_i}\| \leq 2 \cdot \|\LL \vecone_{C[v_i, v_{i+1}]}\|$, we obtain $\sum_{e \in \pproj_A(C)} \ll(e) + \sum_{i} \sum_{e \in C_i} \ll(e) \leq 10 \norm{\LL \vecone_C}_1$. We obtain
\begin{align*}
    \frac{\vecone_{\pproj_A(C)}^\top \gg - \sum_{i} \vecone_{C_i}^\top \gg}{\sum_i \sum_{e \in \pproj_A(C)} \ll(e) + \sum_{i} \sum_{e \in C_i} \ll(e) } \leq \frac{1}{10} \frac{\vecone_C^\top \gg}{\norm{\LL \vecone_C}_1}.
\end{align*}
The claim now follows from \Cref{fact:ratio_inequality}. Note that here we ignored a technical detail: $C_i$ might not be a cycle but rather a circulation, but in this case, we can further decompose to obtain the claim.
\end{proof}

\begin{theorem}[Forest Dichotomy]\label{thm:ASZ-Dichotomy2}
    Given a dynamic graph $G = (V,E,\ll,\gg)$, and pivot cycle $C$ in $G(t)$, then some tree cycle $C_0$ of one of the forests $F_1(t), \dots, F_{\eta}(t)$ or some circulation $C_1$ in the core graph sum $\mathcal{C}(F_1, \dots, F_\eta)(t)$ has comparable ratio, i.e.
    \begin{align*}
        \min\left( \frac{\gg^T \vecone_{C_0}}{\| \LL \vecone_{C_0}\|_1}, \frac{\gg_{\mathcal{C}}^T \vecone_{C_1}}{\|\LL_{\mathcal{C}} \vecone_{C_1}\|_1}\right)
        \leq \frac{1}{2 \cdot \alpha} \cdot \frac{\gg^T \vecone_{C}}{\norm{\LL \vecone_{C}}_1}.
    \end{align*}
\end{theorem}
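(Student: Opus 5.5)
The plan is to decompose the pivot cycle $C = \pproj_A(P_0)\concat\cdots\concat\pproj_A(P_\tau)$ into its ASZ-path segments $Q_j \in \tilde{\mathcal{P}}$. For each segment, Lemma~\ref{lm:StaticCoreGraphs} gives an index $i(j)$ with $\widetilde{\str}_{i(j)}(Q_j) \leq \alpha$. We then route each $Q_j$ through the core graph $\mathcal{C}(G,F_{i(j)})$ and show that the resulting pieces add up to two things: a circulation in the core graph sum, and a sum of tree cycles of the forests. The total length of all these pieces will be at most $\alpha \cdot \norm{\LL\vecone_C}_1$, up to the factor $2$ coming from the triangle inequality. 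The claim then follows from \Cref{fact:ratio_inequality}, exactly as in the proof of \Cref{thm:ASZ-Dichotomy1}.

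Concretely, fix a segment $Q = Q_j$ with endpoints $a,b \in A$ and forest $F = F_{i(j)}$. For each edge $e=(u,v)\in Q$ I use the identity
\[
\vecone_e = \vecone_{\pproj_F(e)} - \vecone_{F[\root_F(u),u]} - \vecone_{F[v,\root_F(v)]}
\]
when $e$ is off-tree. When $e$ is a tree edge, I write it as the difference of the two root paths. Summing over $e \in Q$ makes the root paths telescope, since consecutive edges share endpoints. What remains is $\vecone_Q = \sum_{e\in Q \text{ off-tree}} \vecone_{\pproj_F(e)} + R$, where $R$ is a sum of root-to-vertex tree paths. Since $a,b\in A$ are roots of $F$ (item 1 of Lemma~\ref{lm:StaticCoreGraphs}), the boundary root paths at $a$ and $b$ are trivial.

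Next I split the off-tree edges of $Q$ into two groups. Edges whose endpoints lie in different trees become genuine edges $\Pi_{\mathcal{C}}(e)$ of $\mathcal{C}(G,F)$, with gradient $\gg^\top\vecone_{\pproj_F(e)}$ and length $\widetilde{\str}(e)\ll(e)$. Edges whose endpoints lie in the same tree give tree cycles. The image $\Pi_{\mathcal{C}}(Q)$ is then an $a$–$b$ walk in $\mathcal{C}(G,F)$. Concatenating these walks over all $j$, with zero-length, zero-gradient star edges at each $a\in A$ to switch between copies (Remark~\ref{rmk:StarEmbeddingExtension}), yields a circulation $C_1$ in $\mathcal{C}(F_1,\dots,F_\eta)$.

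The main obstacle is the leftover tree-path terms $R$, which contribute gradient but belong neither to $C_1$ nor obviously to a tree cycle. I will argue that over the whole segment $Q$ the tree-path contributions cancel exactly. Between consecutive off-tree edges, the walk stays inside one tree, so its root paths cancel. Tree edges of $Q$ inside a component contribute $\vecone_{F[\root,u]}-\vecone_{F[\root,v]}$, which telescopes within that component. Hence
\[
\vecone_Q = \sum_{e\in Q,\ \text{inter-tree}} \vecone_{\pproj_F(e)} + \sum_{e\in Q,\ \text{intra-tree off-tree}} \vecone_{\pproj_F(e)}.
\]
If the telescoping fails as stated, I would instead check whether tree edges of $Q$ need their own treatment, absorbing them into the adjacent inter-tree projections. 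This step is where I expect to spend the care.

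Summing over $j$ gives $\vecone_C = \vecone_{C_1}\text{(pulled back)} + \sum_k \vecone_{C_0^{(k)}}$, where the $C_0^{(k)}$ are tree cycles. Gradients match by definition of $\gg_{\mathcal{C}}$. For lengths, $\|\LL\vecone_{C_0^{(k)}}\|_1 \le \widetilde{\str}(e)\ll(e)$ by Definition~\ref{def:L1CoreGraph}. Together with $\ll_{\mathcal{C}}$, the total length is at most $\sum_j \widetilde{\str}_{i(j)}(Q_j)\,\ll(Q_j) \le \alpha\norm{\LL\vecone_C}_1$; the extra factor $2$ only accounts for possible double traversals. Finally, \Cref{fact:ratio_inequality} applied to the pieces $\{C_1\}\cup\{C_0^{(k)}\}$ gives the stated bound. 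If $\gg^\top\vecone_C\geq0$ the claim is trivial, so I only treat the negative case, where scaling the denominator by $2\alpha$ moves the ratio in the right direction.
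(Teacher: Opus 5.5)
Your proposal is correct and follows essentially the same route as the paper. You decompose the pivot cycle into ASZ segments and project every edge of a segment onto the forest that preserves it, so the root paths telescope (segment endpoints lie in $A$ and are therefore roots); you collect same-tree edges as tree cycles and the remaining projections into a circulation of the core graph sum glued by star edges, bound the total length via $\widetilde{\str}\le\alpha$, and conclude with \Cref{fact:ratio_inequality}.

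Your telescoping already shows $R=0$ outright, so the fallback is unnecessary. Your accounting gives $\alpha\sum_j\ll(Q_j)$ against the walk length of the pivot cycle; the paper's factor $2$ comes only from the redundant extra $\ll(e)$ in its tree-cycle bound, not from double traversals, and it is this walk-length version that composes with \Cref{thm:ASZ-Dichotomy1}.
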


\begin{proof}
    We observe that for the pivot cycle $C = \pproj_A(P_0) \concat \pproj_A(P_1) \concat \cdots \concat \pproj_A(P_{\tau-1}) \concat \pproj_A(P_{\tau})$, where all path segments $\pproj(u_i, u_{i+1}) \in \tilde{\mathcal{P}}$ by \Cref{def:asz_path} and \Cref{def:pivotCycle}. To ease notation, we write $\tilde{P}_i = \pproj_A(P_i)$. By \Cref{lm:StaticCoreGraphs}, item 4, there exists a forest $F_{\sigma(i)}$ with its associated core graph $\mathcal{C}_{\sigma(i)}$ such that $\tilde{P}_i$ is preserved, i.e. $\widetilde{\str}_{\sigma(i)}\left(\tilde{P}_i\right) \leq \alpha$. As a first step, we decompose the cycle $C$ into tree-edges $C_T$ and off-tree edges, i.e. $C_T = \bigcup_i \{e = (u,v) \in \tilde{P}_i \mid \exists T_j \subseteq F_{\sigma(i)}: u,v \in T_j \}$. If we denote for all $e = (u,v) \in C_T \cap \tilde{P}_i$ by $C_e$ the tree cycle $\pproj_{F_{\sigma(i)}}(e)$, then 
    \begin{align*}
        \vecone_{C} &= \sum_{e \in C_T} \vecone_{C_e} + \sum_i \sum_{e \in \tilde{P}_i \setminus C_T} \vecone_{\pproj_{F_{\sigma(i)}}(e)} \\
        &= \sum_{e \in C_T} \vecone_{C_e} + \vecone_{\tilde{C}},
    \end{align*}
    where $\tilde{C} = \concat_i \concat_{e \in \tilde{P}_i\setminus C_T} \pproj_{F_{\sigma(i)}}(e)$.\\
    \\
    Moreover, we note that 
    \begin{align*}
        \|\LL_{\mathcal{C}} \vecone_{C_1}\|_1 + \sum_{e \in C_T} \|\LL \vecone_{C_e}\|_1 &\leq \sum_{i} \left( \|\LL_{\mathcal{C}} \vecone_{\Pi_{\mathcal{C}_{\sigma(i)}}(\tilde{P}_i)}\|_1 + \sum_{e \in C_T \cap \tilde{P}_i} \ll(e) + \ll(\pproj_F(e))\right) \\
        &\leq \sum_{i} \left( \sum_{e \in C \setminus C_T \cap \tilde{P}_i} \tilde{\str}_i(e) \cdot \ll(e) + \sum_{e \in C_T \cap \tilde{P}_i} \left(1 + \tilde{\str}_i(e)\right) \cdot \ll(e) \right)\\
        &\leq 2 \cdot \alpha \cdot \|\LL \vecone_{C}\|_1 ,
    \end{align*}
    where $C_1 = \Pi_{\mathcal{C}_{\sigma(1)}}(\tilde{P}_1) \concat \dots \concat \Pi_{\mathcal{C}_{\sigma(\eta)}}(\tilde{P}_\eta)$ \footnote{Here, we point out that each embedding $\Pi_{\mathcal{C}_{\sigma(i)}}(\tilde{P}_i)$ is extended to the star center according to Remark \ref{rmk:StarEmbeddingExtension}.} in $\mathcal{C}(F_1, \dots, F_\eta)$.
    Putting these two facts together and using \Cref{fact:ratio_inequality}, we obtain
    \begin{align*}
        \min \left\{\frac{\gg^T \vecone_{\tilde{C}}}{\|\LL_{\mathcal{C}} \vecone_{C_1}\|_1}\right\} \cup \left\{\frac{\gg^T \vecone_{C_e}}{\|\LL \vecone_{C_e}\|_1} : e \in C_T\right\} &\leq \frac{\gg^T\vecone_{\tilde{C}} + \sum_{e \in C_T} \gg^T\vecone_{C_e} }{\|\LL \vecone_{C_1}\|_1 + \sum_{e \in C_T} \|\LL \vecone_{C_e}\|_1} \\
        &\leq \frac{1}{2 \alpha} \frac{\gg^T\vecone_{C^p} }{\|\LL \vecone_{C^p}\|_1}.
    \end{align*}
    To conclude the argument, we observe that the projected cycle $C_1$ has the same gradient as the cycle $\tilde{C}$, i.e. $\gg^T \vecone_{\tilde{C}} = \sum_i \sum_{e \in \tilde{P}_i \setminus C_T} \gg^T\vecone_{\pproj_{F_{\sigma(i)}}(e)} = \sum_i \sum_{e \in \tilde{P}_i \setminus C_T} \gg_{\mathcal{C}}^T\vecone_{\Pi_{\mathcal{C}_{\sigma(i)}}(e)} = \gg_{\mathcal{C}}^T \vecone_{C_1}$.
\end{proof}

Stacking the two ASZ Dichotomies Theorem \ref{thm:ASZ-Dichotomy1} and Theorem \ref{thm:ASZ-Dichotomy2} on top of each other, we obtain that two things can happen: either there is a good cycle that can be found efficiently, either in some local ball neighborhood $B^{\mathcal{N}}(v,A)$ or as a tree cycle in one of the forests $F_1, \dots, F_{\eta}$, or we preserved an approximate min-ratio cycle in the next (significantly smaller) level of the core graph hierarchy. 

\begin{corollary}[Vertex Sparsifier Dichotomy]\label{Cor:ASZ-Trichotomy}
    Let $C$ be any cycle in $G = (V,E,\ll,\gg)$. 
    Then there exist $C_0$ fully contained in a local ball neighborhood $B^{\mathcal{N}}(v,A)$ for some $v \in V$, a tree cycle $C_1$ of one of the forests $F_1(t), \dots, F_{\eta}(t)$ and a circulation $C_2$ in the core graph sum $\mathcal{C}(F_1, \dots, F_\eta)(t)$ such that 
    \begin{align*}
        \min\left( \frac{\gg^T \vecone_{C_0}}{\| \LL \vecone_{C_0}\|_1}, \frac{\gg^T \vecone_{C_1}}{\| \LL \vecone_{C_1}\|_1}, \frac{\gg_{\mathcal{C}}^T \vecone_{C_2}}{\|\LL_{\mathcal{C}} \vecone_{C_2}\|_1} \right)
        \leq \frac{1}{20 \alpha} \cdot \frac{\gg^T \vecone_C}{\norm{\LL \vecone_C}_1}.
    \end{align*}
\end{corollary}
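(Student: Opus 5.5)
The plan is to chain the two dichotomies already established. First apply \Cref{thm:ASZ-Dichotomy1} to the given cycle $C$. It yields one of two outcomes. Either the best cycle $C'$ inside some local ball neighborhood $B^{\mathcal{N}}(v,A)$ has ratio at most $\frac{1}{10}\cdot\frac{\gg^T\vecone_C}{\norm{\LL\vecone_C}_1}$, or the pivot cycle $\pproj_A(C)$ has ratio at most that bound. In the first case we set $C_0 = C'$. Then $C_0$ already satisfies the target bound, since $\frac{1}{10}\leq \frac{1}{20\alpha}$ whenever the ratio is negative and $\alpha\geq 1/2$.

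\textbf{Sign of the ratio.} The relevant regime has $\gg^T\vecone_C<0$. If the ratio of $C$ is nonnegative, the statement is trivial up to the convention that we may take the reversed cycle, or pick $C_0$ as a trivial cycle. Otherwise, multiplying a negative ratio by a smaller positive constant only weakens the bound. So a guarantee of $\frac{1}{10}\cdot r$ implies $\frac{1}{20\alpha}\cdot r$ whenever $20\alpha\geq 10$. I will state this monotonicity explicitly, because every chaining step relies on it.

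\textbf{Second case.} Suppose instead that $\pproj_A(C)$ attains the bound. This pivot cycle is exactly the object \Cref{thm:ASZ-Dichotomy2} accepts. Applying that theorem gives either a tree cycle $C_1$ of some $F_i$ or a circulation $C_2$ in the core graph sum whose ratio is at most $\frac{1}{2\alpha}$ times the ratio of $\pproj_A(C)$. Composing the two factors gives at most $\frac{1}{2\alpha}\cdot\frac{1}{10} = \frac{1}{20\alpha}$ times the ratio of $C$. Again this uses that all the quantities involved are negative. For whichever of the three objects is not produced by the argument, we may pick an arbitrary choice, since the statement only concerns the minimum. Any circulation $C_2$ can be decomposed into cycles via \Cref{fac:cycledecomp} and \Cref{fact:ratio_inequality} if an actual cycle is wanted.

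\textbf{Main obstacle.} The only delicate point is bookkeeping of signs and constants. The claim should be read as "the best of the three achieves the bound", with nonexistent alternatives assigned ratio $+\infty$. The factor $\frac{1}{20\alpha}$ must be obtained by multiplying the two approximation factors, which is valid only in the negative-ratio regime. I would handle this by first reducing to $\gg^T\vecone_C<0$, which covers the min-ratio cycle $C^*$ in any nontrivial IPM step, and then chaining the two results as above.
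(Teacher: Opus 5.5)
Your proposal is correct and follows the paper's argument: the paper obtains the corollary by stacking the ASZ Dichotomy (applied to $C$, loss $\frac{1}{10}$) with the Forest Dichotomy (applied to the pivot cycle $\pproj_A(C)$, loss $\frac{1}{2\alpha}$), and your reduction to $\gg^T\vecone_C<0$ (reversing $C$ otherwise) makes explicit sign bookkeeping the paper leaves implicit. One small slip: the constant comparison $\frac{1}{10}\leq\frac{1}{20\alpha}$ is backwards as written (it holds only for $\alpha\leq 1/2$); what you need, and correctly state in your next paragraph, is $\frac{1}{10}r\leq\frac{1}{20\alpha}r$ for $r<0$ and $\alpha\geq 1/2$.
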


\paragraph{The Edge Sparsification Dichotomy.} After we have established vertex sparsification, we next need to reduce the edge count to obtain a clean recursion. An edge sparsifier $H \subseteq G$ that approximately preserves distances is called a spanner. 

\begin{definition}[Spanner]
    A connected subgraph $H$ of a graph $G = (V, E, \ll)$ with graph embedding $\Pi_{G \rightarrow H}$ is a $\delta$-spanner if 
    \begin{align*}
        \ll\left(\Pi_{G \rightarrow H}((u, v))\right) \leq \delta \dist_G(u,v).
    \end{align*}
\end{definition}

We remark that the subgraph $H$ inherits the lengths and gradients from $G$. 

\begin{theorem}[The Edge Sparsifier Dichotomy]\label{thm:SpannerDichotomy}
    Consider a $\delta$-approximate spanner $H$ of $G$. For every edge in $e \in G \setminus H$, consider the cycle formed by $e = (u,v)$ and $\Pi_{G \rightarrow H}(e)$. Then, either
    \begin{itemize}
        \item one of these cycle is a $1/2\delta$ approximate min ratio cycle, or
        \item there is a $1/2\delta$-approximate min-ratio cycle in $H$.
    \end{itemize}
\end{theorem}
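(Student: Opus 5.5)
We follow the same pattern as the forest dichotomy (\Cref{thm:ASZ-Dichotomy2}): decompose an optimal cycle into a circulation living in $H$ plus a family of fundamental cycles of the spanner embedding, and then apply \Cref{fact:ratio_inequality}. Let $C^*$ be a min-ratio cycle in $G$ with ratio $r^* = \gg^\top \vecone_{C^*}/\norm{\LL\vecone_{C^*}}_1$. We may assume $r^* < 0$; if $r^* \geq 0$ the scaling statement is vacuous, since any cycle is then approximate in the degenerate sense.

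For each edge $e=(u,v) \in C^* \setminus H$, define the cycle $D_e \defeq (u,v) \concat \Pi_{G\to H}(e)^{\mathrm{rev}}$. This is exactly the cycle considered in the statement. We then form the circulation
\[
\tilde{C} \defeq \vecone_{C^*} - \sum_{e \in C^*\setminus H} \sigma_e \vecone_{D_e},
\]
with $\sigma_e$ the orientation of $e$ in $C^*$. In $\tilde C$, every off-$H$ edge of $C^*$ has been replaced by its embedding path. Hence $\tilde{C}$ is supported on $H$ and satisfies $\BB\tilde C = \veczero$, and we obtain the identity
\[
\vecone_{C^*} = \vecone_{\tilde C} + \sum_{e} \sigma_e \vecone_{D_e}.
\]

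Next we bound the total length of the pieces. Since $\ll(\Pi_{G\to H}(e)) \leq \delta \dist_G(u,v) \leq \delta\ll(e)$, we have $\norm{\LL\vecone_{D_e}}_1 \leq (1+\delta)\ll(e)$. Similarly, $\norm{\LL\vecone_{\tilde C}}_1 \leq \sum_{e\in C^*\cap H}\ll(e) + \sum_{e\notin H}\delta \ll(e)$. Summing, the total length of all pieces is at most $(1+\delta)\norm{\LL\vecone_{C^*}}_1 \leq 2\delta \norm{\LL \vecone_{C^*}}_1$, using $\delta \geq 1$.

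On the gradient side, the gradients add up exactly to $\gg^\top\vecone_{C^*}$, because $H$ inherits gradients from $G$. We now decompose $\tilde C$ into cycles in $H$ via \Cref{fac:cycledecomp}. Its total length is at most $\norm{\LL\vecone_{\tilde C}}_1$; cancellation only helps, since the decomposition has no more length than $\tilde C$. Applying \Cref{fact:ratio_inequality} to the collection $\{D_e\}$ together with the cycles of $\tilde C$'s decomposition, with gradients as numerators and $\ell_1$-lengths as denominators, gives a piece of ratio at most
\[
\frac{\gg^\top\vecone_{C^*}}{2\delta\norm{\LL\vecone_{C^*}}_1} = \frac{r^*}{2\delta}.
\]

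The main subtle step is the sign and orientation bookkeeping. The pieces $D_e$ enter with coefficient $\sigma_e$, so we must orient each $D_e$ according to $C^*$'s traversal direction. With that orientation every coefficient becomes $+1$, and the ratio inequality applies to nonnegative combinations. The second point is that when decomposing $\tilde C$, the cancellation of edges may change the gradient sum. This does not happen: the decomposition is exact as a vector, so the gradient is preserved and the length can only decrease. This decrease only improves the bound, which holds because $r^*<0$ makes a smaller denominator give a more negative ratio.

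With these two points checked, either some $D_e$ or some cycle of $H$ achieves ratio at most $r^*/(2\delta)$, which is exactly the dichotomy claimed in the statement.
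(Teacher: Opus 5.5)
Your decomposition is the paper's: split $C^*$ into the lifted circulation $\tilde C$ in $H$ plus the spanner cycles $D_e$, then apply the ratio inequality. The error is in the length accounting, and it is not cosmetic. The embedding path $\Pi_{G\to H}(e)$ is paid for twice, once inside $D_e$ and once inside $\tilde C$. Adding your own two bounds gives
\[
\sum_{e\in C^*\setminus H}\norm{\LL\vecone_{D_e}}_1+\norm{\LL\vecone_{\tilde C}}_1\leq\sum_{e\in C^*\cap H}\ll(e)+(1+2\delta)\sum_{e\in C^*\setminus H}\ll(e)\leq(1+2\delta)\norm{\LL\vecone_{C^*}}_1,
\]
not $(1+\delta)\norm{\LL\vecone_{C^*}}_1$. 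The ratio inequality therefore only gives a piece of ratio at most $r^*/(1+2\delta)$, which is weaker than the claimed $r^*/(2\delta)$.

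This cannot be patched, because the constant $2\delta$ is false in general. Take two vertices $u,v$, with edges $e_1=(u,v)$ and $e_2=(v,u)$ of length $1$ and gradient $-1$. Let $H$ consist of $f_1=(u,v)$ and $f_2=(v,u)$ of length $\delta$ and gradient $b=-\delta/(1+2\delta)$, with $\Pi_{G\to H}(e_i)=f_i$. This is a valid $\delta$-spanner, and $r^*=-1$ is attained by $e_1\concat e_2$. The mixed cycles $e_1\concat f_2$ and $e_2\concat f_1$ have ratio of absolute value $(1+3\delta)/((1+\delta)(1+2\delta))\leq 1$, so they do not beat it. Yet the only cycle of $H$, namely $f_1\concat f_2$, and both spanner cycles $e_i\concat f_i^{\mathrm{rev}}$ have ratio exactly $-1/(1+2\delta)>-1/(2\delta)$, and their reversals are positive.

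The paper's one-line proof has the same slip. It bounds $\ll(C)+\sum_e\ll(C_e)$, but the piece living in $H$ is the lifted cycle $C'$, whose length can be as large as $\delta\,\ll(C)$. Once the sum is corrected, your argument (and the paper's) proves the dichotomy with factor $1/(1+2\delta)\geq 1/(3\delta)$. That only changes constants in the hierarchy analysis, which are absorbed into $\chi$.
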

\begin{proof}
    Consider the min ratio cycle $C$ in $G$. We lift the cycle graph to $H$ by replacing every edge $e = (u,v)$ that is not present in $H$ with the path $\Pi_{G \rightarrow H}(e)$. For each such edge, we also consider the cycle $C_e$ formed by $e$ and $\pi_G(u, v)$, but traversed in the opposite direction. Then
    \begin{align*}
        \vecone_C = \sum_{e \in C} \vecone_{C_e} - \vecone_{C'}.
    \end{align*}
    Since $\ll(C) + \sum_e(\ll(C_e)) \leq 2\delta \ll(C)$, we obtain the result by \Cref{fact:ratio_inequality}. 
\end{proof}

Therefore, we can reduce the number of edges with a spanner, and it suffices to check all edges that get sparsified away. To ensure that the number of edges gets reduced appropriately, we chose $\delta = O(\log n)$.

\begin{theorem}\label{thm:StaticSpanner}
Given an $m$-edge $n$-vertex unweighted, undirected graph $G$ consisting only of edge
deletions and $\widetilde{O}(n)$ vertex splits. There is a deterministic
algorithm with parameter $1 \le L \le o\left(\log^{1/6} m / \log\log m \right),$ that computes a $\gamma^{O(L)}$-spanner $H$ with at most $n\gamma$ edges such that for
some $\gamma = \exp(O(\log^{2/3} m \cdot \log\log m))$

\[
    \operatorname{card}(\Pi_{G \to H}) \le \gamma^{O(L)} \quad \text{ and } \quad \vcong(\Pi_{G \to H})
    \le
    \gamma^{O(L^2)} \Delta.
\]
The algorithm takes time $\widetilde{O}(m \gamma)$.
\end{theorem}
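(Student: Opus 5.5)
The plan is to obtain \Cref{thm:StaticSpanner} from the known deterministic dynamic spanner constructions used in \cite{maxflow, detmax, incrflow}, rather than build one from scratch. The statement is essentially the static instance of the dynamic spanner with embedding: a graph undergoing only deletions and $\tilde{O}(n)$ vertex splits. I would follow the standard recursive expander-decomposition approach.

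First, compute a deterministic expander decomposition of $G$ with conductance parameter $\phi = 1/\gamma$. This partitions the edges so that each part lies in a $\phi$-expander, and only a $\tilde{O}(\phi)$ fraction of the edges are inter-cluster.

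Inside each expander, pick a constant-degree (up to $\gamma$) sparse subgraph; for instance, embed a degree-$\gamma$ expander via cut-matching. Every edge $(u,v)$ of the cluster is then routed to a short path of length $\gamma^{O(1)}$ in the sparse subgraph. The routing uses flow-based embeddings with congestion $\gamma^{O(1)}\Delta$, where $\Delta$ is the maximum degree coming from the vertex splits.

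The inter-cluster edges form a graph with at most a $1/\gamma$ fraction of the edges. Recursing on them for $L$ levels shrinks the edge count geometrically. Each recursion level multiplies the path length (and hence the stretch) by $\gamma^{O(1)}$, giving stretch $\gamma^{O(L)}$.

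For the cardinality bound, each embedded edge maps to one path of length at most $\gamma^{O(L)}$, which gives $\operatorname{card}(\Pi_{G\to H}) \le \gamma^{O(L)}$. For the vertex congestion, embedded paths from level $i$ may themselves be re-embedded at deeper levels. Composing these embeddings multiplies congestions across the $L$ levels, and the per-level cost compounds across pairs of levels, which is why the bound is $\gamma^{O(L^2)}\Delta$ rather than $\gamma^{O(L)}\Delta$.

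Once recursion bottoms out at $\tilde{O}(n)$ edges, the total edge count is at most $n\gamma$. The running time is $\tilde{O}(m\gamma)$, dominated by the deterministic expander decompositions, which run in almost-linear time with subpolynomial overhead $\gamma = \exp(O(\log^{2/3}m\log\log m))$.

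The hard part is making the vertex splits cheap under the recourse and congestion bounds. A split may destroy expansion, and it forces re-embedding of all paths through the split vertex. I would handle this with expander pruning and by charging the re-embedded paths to the vertex congestion. The $\Delta$ factor appears here because a vertex of degree $\Delta$ can carry that many paths. The constraint $L = o(\log^{1/6}m/\log\log m)$ keeps $\gamma^{O(L^2)}$ subpolynomial.
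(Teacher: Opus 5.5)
The paper gives no proof of this statement. It is stated without proof and used as a black box from earlier work; its guarantees reappear inside \Cref{lm:dynamicforest}, attributed to \cite{KMPG24}. So your opening plan of citing the existing deterministic spanner is exactly the paper's route.

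The construction you sketch to support it, however, does not give the stated bounds. You take $L$ to be the depth of the recursion on inter-cluster edges and let stretch grow by $\gamma^{O(1)}$ per level. This fails in three ways:
\begin{enumerate}
\item \emph{The recursion does not bottom out.} With survival fraction about $1/\gamma$ per level, $L = o(\log^{1/6} m/\log\log m)$ levels shrink the edge count only by $\gamma^{L} = \exp(o(\log^{5/6} m)) = m^{o(1)}$. Already for $L=1$, which the statement allows, $\widetilde{O}(m/\gamma)$ inter-cluster edges remain. You cannot keep them in $H$, since that breaks the $n\gamma$ sparsity once $m \gg n\gamma^2$, and you cannot recurse further without leaving the allowed range of $L$.
\item \emph{The required depth is too large for your accounting.} Reaching $\widetilde{O}(n)$ edges needs about $\log_\gamma(m/n)$ levels, which for dense graphs is $\Omega(\log^{1/3} m/\log\log m)$. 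With your per-level accounting, that depth only yields a stretch bound of $\gamma^{O(\log_\gamma(m/n))} = (m/n)^{O(1)}$, i.e.\ polynomial.
\item \emph{Stretch does not compound in your scheme at all.} Each edge is routed only inside the expander of the level at which it becomes intra-cluster, and $H$ is the union of the per-level sparse subgraphs. So depth affects sparsity, not stretch.
\end{enumerate}

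Consequently the sketch never identifies where the $\gamma^{O(L)}$ stretch and cardinality, or the $\gamma^{O(L^2)}\Delta$ vertex congestion, come from. The sentence about per-level costs ``compounding across pairs of levels'' is not an argument. Note also that in the statement as written every guarantee gets worse as $L$ grows, so $L=1$ would always be optimal. This suggests $L$ trades off against a quantity this version omits, presumably the recourse or update cost of the dynamic spanner. It is not a depth that must be large enough for the recursion to terminate.

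The intra-expander step is also asserted rather than derived. A cut-matching embedding of a degree-$\gamma$ expander into a cluster does not by itself give you the following, deterministically and in $\widetilde{O}(m\gamma)$ time: an explicit short path in $H$ for each of the $m$ edges of $G$, with vertex congestion $\gamma^{O(1)}\Delta$. That deterministic routing step is the real content of the cited constructions. Your remarks on vertex splits and expander pruning do not supply it either.

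If you only mean to cite the theorem, say so and drop the sketch. Read as a proof, the steps above fail.
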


\subsection{Min-Ratio Cycle from the Vertex Sparsifier Hierarchy}

Given both the Vertex Sparsifier Dichotomy (\Cref{Cor:ASZ-Trichotomy}) and the Edge Sparsifier Dichotomy (\Cref{thm:SpannerDichotomy}) at hand it is straightforward to construct a hierarchy to identify an approximate min-ratio cycle.

We construct the same core graph hierarchy as in \cite{KMPG24} to locate an approximate min-ratio cycle. Since \cite{KMPG24} already shows how to maintain this hierarchy in dynamic graphs, it is then easy to augment their data structure to maintain approximate min-ratio cycles throughout the iterations of the $\ell_1$-IPM to solve incremental min-cost flow (See \Cref{sec:Dynamization}).

\paragraph{Construction of the Hierarchy.} We construct $\Lambda + 1$ hierarchy levels for
$\Lambda = \log^{1/21} m$, and at each level
$0 \le i \le \Lambda$, we construct two graphs $G_i, H_i$.
We let $G_0 = G$, and for $0 \le i \leq \Lambda$, $H_i$ is the spanner computed on $G_i$ according to Theorem \ref{thm:StaticSpanner} to reduce the edge count, while for $0 \leq i < \Lambda$, 
$G_{i+1}$ is the core graph sum with core graphs computed in Lemma \ref{lm:StaticCoreGraphs} on the graph
$H_i$ with size reduction parameter $k = m^{1/\Lambda}$.

\paragraph{Controlling Maximum Degrees.} So far, we have not properly discussed how to bound the maximum degrees of the graphs $G_i$. As pointed out previously, runtimes depend on the maximum degree of each $G_i$, so we have to control the degrees across the hierarchy. In static graphs, this turns out to be easy: we can split high-degree vertices and connect them with a $0$ length, $0$ gradient edge. We apply this process to $G_0$ initially, yielding a graph with max-degree $3$, which increases the vertex count to $\tilde{O}(m)$ while retaining the edge count at $\tilde{O}(m)$. We can further apply postprocessing to the spanner algorithm to split high-degree vertices. This yields a maximum degree $O(\gamma)$ for each $H_i$ while increasing the number of vertices and edges on each level by only by a constant factor. We note that in the dynamic maintenance of the above hierarchy, \cite{KMPG24} enforces the constraints with a similar trick; however, they force vertex splits by deleting edges from the core graphs. This requires a careful feedback loop between $G_i$ and $H_i$ to get eventual control over the degrees. Luckily, we can simply black box this technical nuisance here.

\paragraph{Augmenting the Hierarchy.} Given the Vertex Sparsifier Dichotomy \ref{Cor:ASZ-Trichotomy} and the Edge Sparsifier Dichotomy \ref{thm:SpannerDichotomy}, it is straightforward to construct a linked list of candidate cycles of almost linear size that contains an approximate min-ratio cycle. The only slightly tricky part is noticing how to check all cycles in a local ball neighborhood $B^{\mathcal{N}}(v,A)$. Given that $B^{\mathcal{N}}(v,A)$ can be of size $k \cdot \Delta$ there are too many cycles $2^{k \cdot \Delta}$ to include them all in our list. For this reason, we use the next Lemma to reduce the number of cycles we need to check in each $B^{\mathcal{N}}(v,A)$ to $\tilde{O}(k \cdot \Delta^2)$.

\begin{lemma}[Crude Approximation of Min-Ratio Cycle]\label{lem:brute_force}
    For a graph $G = (V, E, \ll, \gg)$ with $\ll \in \R_{\geq 1}^E$ and $\gg \in \R^E$, there is a list of $|E|$ simple cycles such that for any choice of gradients $\gg$ one cycle $\hat{\cc}$ in the list satisfies
    \begin{align*}
        \frac{\gg^\top \hat{\cc}}{\norm{\LL \hat{\cc}}} \leq \frac{1}{|V|} \min_{\cc: \BB_G \cc = \veczero} \frac{\gg^\top \cc}{\norm{\LL \cc}_1}
    \end{align*}
    in time $\tilde{O}(|E| + |V|)$
\end{lemma}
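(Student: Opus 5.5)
The plan is to use the fundamental cycles of a minimum-length spanning tree (forest) $T$ of $G$ with respect to $\ll$. The list consists of the at most $|E|$ cycles $C_e = \pproj_T(e)$ for off-tree edges $e=(u,v)$, i.e.\ $e$ together with $T[v,u]$. This list does not depend on $\gg$, as the statement requires. Building it takes $\tilde{O}(|E|+|V|)$ time, which covers computing $T$ and representing each $C_e$ implicitly by $e$ plus a tree path. Evaluating all the ratios in the same time bound is also possible, via prefix sums of $\gg$ and $\ll$ along root paths together with LCA queries, but this is only needed algorithmically.

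For the approximation guarantee, I will first dispose of the case $r^*\ge 0$. There every circulation, and in particular every $C_e$, satisfies $\gg^\top \cc \geq 0$ only in the sense of ratios, so the claimed inequality is essentially vacuous up to sign conventions. I will argue it directly: if $r^* \geq 0$, then any cycle with nonpositive ratio already satisfies the bound, and otherwise all ratios are $\geq 0$. I will need to treat this case carefully, or simply note that the IPM only uses the case $r^*<0$. The main case is therefore $r^*<0$.

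For that case, fix an optimal simple cycle $\cc^*$, which exists by the remark after \Cref{def:minRatioCycle}. Decompose it over the fundamental cycle basis: since $\cc^*$ is a circulation,
\[
\vecone_{\cc^*}=\sum_{e\in \cc^*\setminus T}\sigma_e\,\vecone_{C_e}.
\]
By linearity,
\[
\gg^\top \vecone_{\cc^*}=\sum_e \sigma_e\, \gg^\top\vecone_{C_e}.
\]
Applying \Cref{fact:ratio_inequality} to the numerators $\sigma_e\gg^\top \vecone_{C_e}$ and the denominators $\norm{\LL\vecone_{C_e}}_1$, after orienting each $C_e$ along $\sigma_e$ so that the sign is absorbed, some $C_e$ satisfies
\[
\frac{\gg^\top \vecone_{C_e}}{\norm{\LL \vecone_{C_e}}_1}\le \frac{\gg^\top\vecone_{\cc^*}}{\sum_{e}\norm{\LL\vecone_{C_e}}_1}.
\]
Because the numerator on the right is negative, it remains to upper bound the denominator by $|V|\cdot\norm{\LL\vecone_{\cc^*}}_1$.

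This denominator bound is the step I expect to be the main obstacle, and it is where the choice of a minimum spanning tree matters. For an off-tree edge $e=(u,v)$ of the MST, every tree edge on $T[u,v]$ has length at most $\ll(e)$ by the cycle property. Since the path has at most $|V|-1$ edges, this gives
\[
\norm{\LL\vecone_{C_e}}_1\le |V|\,\ll(e).
\]
Summing over the off-tree edges of the simple cycle $\cc^*$ then yields
\[
\sum_e\norm{\LL\vecone_{C_e}}_1\le |V|\,\norm{\LL\vecone_{\cc^*}}_1.
\]
Combined with the previous display, this gives the factor $1/|V|$. The hypothesis $\ll\geq 1$ is used only to exclude zero-length edges, which would make denominators vanish.

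Should the statement's orientation be such that some $C_e$ must be reversed, I will include both orientations in the list. That at most doubles its length, and the doubling can be absorbed by counting only off-tree edges, which number at most $|E|-|V|+1$.
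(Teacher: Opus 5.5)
Your proposal is correct and follows the paper's proof. Like the paper, you take a minimum spanning forest, decompose the optimal simple cycle into fundamental cycles, bound their total length by $|V|\cdot\norm{\LL\vecone_{\cc^*}}_1$, and conclude with Fact~\ref{fact:ratio_inequality}; you also justify the length bound via the MST cycle property, a step the paper only asserts.

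Two loose ends are cosmetic. The case $r^*\geq 0$ is trivial, since $r(-\cc)=-r(\cc)$ forces $r^*\le 0$, and $r^*=0$ means every circulation has zero gradient. Also, doubling the list for both orientations does not fit in $|E|$ entries for dense graphs, so instead treat each fundamental cycle as unoriented and evaluate its better orientation.
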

\begin{proof}
    Compute the minimum spanning tree of $G$. Then, decompose the optimal cycle $\cc^{\star}$ into tree cycles. The length of all tree cycles together is at most $|V|$ times the length of the original cycle, and their gradients sum up to the gradient of the original cycle. Therefore, the best tree cycle is a $1/|V|$ approximate min-ratio cycle. 
\end{proof}

This, in particular, implies that for any local ball neighborhood $B^{\mathcal{N}}(v,A)$ it suffices to check at most $\tilde{O}(k \cdot \Delta^2)$ many cycles. Finally, we formally analyze the cost of constructing the list of cycles that we need to check to find an approximate min-ratio cycle of $G$.

\begin{theorem}\label{thm:CandidateCyclesList}
    For reasonably large $m$, there exists an ordered linked list of candidate cycles of size $\tilde{O}(m \cdot k \cdot \gamma^2 )$ such that a cycle $C$ of these candidate cycles is an approximate min-ratio cycle, i.e.
    \[ \frac{\gg^\top \vecone_{C}}{\| \LL \vecone_C \|_1} \leq \frac{1}{\tilde{O}(k)}\left(\frac{1}{40 \alpha \cdot \delta}\right)^{\Lambda+1} \min_{\BB \cc = 0} \frac{\gg^\top \cc}{\|\LL \cc\|_1}. \]
    Each cycle is of length at most $\zeta = \tilde{O}(\max\{k, \gamma^{O(L)}\})$. The ordered linked list along with all ratios achieved by the contained cycles can be explicitly output in time $O(m \cdot k^2 \cdot \gamma^2)$.
\end{theorem}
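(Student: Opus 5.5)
The proof goes by induction down the hierarchy $G_0, H_0, G_1, H_1, \ldots, G_\Lambda, H_\Lambda$. At each level we collect a list of candidate cycles, and we carry the guarantee forward by composing the two dichotomies.

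\textbf{Candidate cycles at level $i$.} For the step $G_i \to H_i$, we add, for every edge $e=(u,v)\in G_i\setminus H_i$, the cycle formed by $e$ and $\Pi_{G_i\to H_i}(e)$. By \Cref{thm:SpannerDichotomy}, either one of these cycles is good, or $H_i$ contains a cycle whose ratio is within a factor $1/(2\delta)$ of the best ratio in $G_i$. For the step $H_i\to G_{i+1}$, we apply \Cref{Cor:ASZ-Trichotomy} to the optimal cycle of $H_i$, which gives three cases.
\begin{enumerate}
\item A cycle lies in some local ball neighborhood $B^{\mathcal{N}}(v,A)$. For each $v$ we run \Cref{lem:brute_force} on $H_i[B^{\mathcal{N}}(v,A)]$ and add its list of MST tree cycles. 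This loses a further factor $1/|B^{\mathcal{N}}| = 1/\tilde O(k\gamma)$, which is the source of the $1/\tilde O(k)$ term.
\item A tree cycle of some forest $F_j$ is good. We add $\pproj_{F_j}(e)$ for every tree edge $e$ of every $F_j$, which gives $\eta\cdot|E(H_i)|$ cycles.
\item A good circulation exists in $G_{i+1}$ within factor $1/(20\alpha)$. By \Cref{fac:cycledecomp} and \Cref{fact:ratio_inequality}, some cycle in its decomposition is at least as good.
\end{enumerate}
At the top level $H_\Lambda$, which has $m^{o(1)}$ size when $k=m^{1/\Lambda}$, we add the output of \Cref{lem:brute_force} on all of $H_\Lambda$. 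Every candidate cycle in $G_{i}$ or $H_i$ is then mapped back to a circulation in $G$ by unfolding core-graph edges into forest paths, and ASZ-edges into their paths. Gradients are preserved exactly under this unfolding by \Cref{def:L1CoreGraph}, and lengths only decrease, since $\ll_{\mathcal{C}}$ upper bounds the true length. So ratios of negative-gradient cycles only improve.

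\textbf{Approximation bound.} We chain the factors across the levels. Each level contributes $1/(2\delta)\cdot 1/(20\alpha)\geq 1/(40\alpha\delta)$, which over $\Lambda+1$ levels gives $(1/(40\alpha\delta))^{\Lambda+1}$. The base case at a ball or at the top level contributes the single $1/\tilde O(k)$ factor. Note the guarantee is only needed when $r^*<0$, where multiplying by factors at most $1$ is the right direction. By induction on $i$, the best ratio in the list combined with the best ratio in $G_{i}$, after scaling, is at most the claimed fraction of $r^*$. At the end, taking the minimum over the list gives the theorem.

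\textbf{Size, cycle length and running time.} The spanner cycles number at most $|E(G_i)|$. The tree cycles number $\tilde O(|E(H_i)|)$. The ball cycles number $\sum_v |E(B^{\mathcal{N}}(v,A))| \le n k\Delta^2 = \tilde O(mk\gamma^2)$, using $\Delta=O(\gamma)$ after degree reduction. Since sizes shrink geometrically across levels, the total list size is $\tilde O(mk\gamma^2)$.

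The main obstacle is bounding the length $\zeta$ of the lifted cycles in $G$, because naive unfolding through $\Lambda$ levels multiplies lengths. I would instead store each candidate as a cycle in its own level graph, with its ratio computed there, and argue that this lengths bound is what matters. A tree cycle has length $\tilde O(k)$ by item 1 of \Cref{lm:StaticCoreGraphs}. A spanner cycle has length $\gamma^{O(L)}$ by the cardinality bound in \Cref{thm:StaticSpanner}. A ball cycle has length at most $|B^{\mathcal{N}}|=\tilde O(k\gamma)$. This gives $\zeta=\tilde O(\max\{k,\gamma^{O(L)}\})$.

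Computing each ratio then costs $O(\zeta)$, so the total time is the list size times $\zeta$, which matches $O(mk^2\gamma^2)$ up to the stated parameters. On top of this we add the construction costs of \Cref{lm:StaticCoreGraphs} and \Cref{thm:StaticSpanner}, both of which are dominated by this term.
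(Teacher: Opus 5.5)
Your proposal is correct and follows the paper's proof essentially step for step. At each level you collect the spanner cycles, the tree cycles of the forests, and the MST fundamental cycles of every ball neighborhood via \Cref{lem:brute_force}; you finish with \Cref{lem:brute_force} at the top level, chain the $1/(2\delta)$ and $1/(20\alpha)$ losses, and count using the per-level vertex reduction by a factor $k/\tilde O(1)$.

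Two small points need tightening. First, check, as the paper does, that the top-level graph has at most $\tilde O(1)^{\Lambda}\le k$ vertices, since ``$m^{o(1)}$'' alone does not give the $1/\tilde O(k)$ loss. Second, your ball-neighborhood size $\tilde O(k\gamma)$ does not literally fit inside $\zeta$ or the $1/\tilde O(k)$ factor; this is a suppressed $\Delta$-dependence that the paper's proof shares.
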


\begin{proof}
    Given a level $i < \Lambda$ of the hierarchy, according to the Vertex Sparsifier Dichotomy \ref{Cor:ASZ-Trichotomy} and the Edge Sparsifier Dichotomy \ref{thm:SpannerDichotomy}, it suffices to check every tree cycle of the forests $F_{i,1}, \dots, F_{i,\eta_i}$, every simple cycle fully contained in a local ball neighborhood $B^{\mathcal{N}}_{H_{i-1}}(v,A)$ for some $v \in V(H_{i-1})$ and every cycle given by an edge $(u,v) \concat \Pi_{G_i \rightarrow H_i}((v,u)),$ where $\Pi_{G_i \rightarrow H_i}$ denotes the spanner embedding. 

    By our degree-control algorithm, we have that the number of vertices in $G_0$, denoted by $n_0 = \tilde{O}(m)$, and the edge count $m_0 = \tilde{O}(m)$. By \Cref{lm:StaticCoreGraphs} and the constant-factor increase due to degree-control, for any graphs $G_i, H_i$, the vertex count $n_i$ is at most $\tilde{O}(1) \cdot n_{i-1} / k$ and thus by induction, for $i > 0$, we have $n_i \leq m/k^i \cdot \tilde{O}(1)^i \leq m$ for reasonably large $m$ by the choice of $k$. Thus, by \Cref{thm:StaticSpanner}, the edge count $m_i$ of $H_i$ is at most $n_i \gamma \leq m \gamma$. All graphs have maximum degree $\Delta = \tilde{O}(\gamma)$.
  
    We note that the number of forest cycles is at most $m_i$ for each forest, so $\eta_i \cdot m_i = \tilde{O}(n\gamma)$ in total. By the crude approximation Lemma \ref{lem:brute_force} at a slight loss in quality of the approximate min-ratio cycle it suffices to check only $\tilde{O}(k\cdot \Delta^2)$ cycles per local ball neighborhood $B_{H_i}^{\mathcal{N}}(v,A)$ and thus over all $v \in V(H_{i-1})$ at most $\tilde{O}(n_i \cdot k\cdot \Delta^2) = \tilde{O}(m k \Delta^2)$ many cycles. The number of spanner cycles is again bounded by $m$.
    For the level $\Lambda$, we note that the number of vertices in $G_{\Lambda}$ is at most $\tilde{O}(m)/k^{\Lambda} \cdot \tilde{O}(1)^{\Lambda} = \tilde{O}(1)^{\Lambda} \leq \tilde{O}(1)^{\log^{1/21}m}$ which is smaller than $k$ for reasonably large $n$.  Therefore, we can use again the crude approximation Lemma \ref{lem:brute_force} to include those cycles. Over all levels, we collect at most $\tilde{O}(\Lambda \cdot (n\cdot k \cdot \Delta^2 + n\gamma)).$

    For the approximation, we have from the Vertex Sparsifier Dichotomy \ref{Cor:ASZ-Trichotomy} and the \ref{thm:SpannerDichotomy} that $\left(1/(20 \alpha) \cdot 1/(2\delta)\right)^{\Lambda} = (1/(40\alpha\delta))^{\Lambda}$-approximate min-ratio cycle is found either 1) in a local neighborhood $B_{H_i}^{\mathcal{N}}(v,A)$, 2) in forest cycle at some level $i$, 3) in some spanner cycle at some level $i$, or 4) is still contained in $G_{\Lambda}$. In case 2) and 3) the cycle is automatically extracted, otherwise, we use the procedure from Lemma \ref{lem:brute_force} to extract an approximate min-ratio cycle at a further loss in approximation of at most $1/\tilde{O}(k)$, since all local neighborhoods are of size $\tilde{O}(k)$ and $G_{\Lambda}$ contains at most $\tilde{O}(1)^{\Lambda} \leq k$ vertices for reasonably large $m$.

    The runtime bound follows from the runtime guarantees in Lemma \ref{lem:brute_force}, the fact that the spanner cycles are outputted explicitly and can thus be checked in the size of the support (see  \Cref{thm:StaticSpanner}) and the fact that each fundamental cycle with a forest $F_i$ can be evaluated in $\tilde{O}(k)$ time since each component of $F_i$ is of size at most $\tilde{O}(k)$ (see \Cref{lm:StaticCoreGraphs}).
\end{proof}

A simple calculation can now be used to show that we can compute a $m^{o(1)}$-approximate min-ratio achieved by the best cycle in almost-linear time $m^{1+o(1)}$. We abstain from doing a detailed analysis here since we are obtaining such a bound in the next section even for the dynamic min-ratio problem.

\section{Dynamic Min-Ratio Cycle via Dynamic Distance Oracles}\label{sec:Dynamization}

In \cite{KMPG24}, it is already described how to maintain distance oracles as described above. Here, we sketch their approach and show how it can be augmented straightforwardly to maintain the min-ratio cycle throughout a sequence of updates.

\subsection{Dynamic Vertex Sparsifier}\label{sec:DynamicVertexSparsifier}

The main objective is to describe how we maintain the gradient structure of one level of the sparsifier hierarchy under edge updates. We point out that the maintenance of the length structure of the entire vertex sparsifier hierarchy with respect to edge insertions, deletions, and vertex splits is already discussed in detail in \cite{KMPG24}. We first state a detailed summary of the result of \cite{KMPG24} and then give a discussion on how to augment it.

\paragraph{Maintenance of the \cite{KMPG24} Distance Oracles.} In order to implement the maintenance of the gradient structure in the vertex sparsifier, we recall that \cite{KMPG24} gives a method to maintain the vertex sparsifier of Lemma \ref{lm:StaticCoreGraphs}, the associated core graph sum from Definition \ref{def:StaticCoreGraphSum} and a spanner on the core graph sum as in \Cref{thm:StaticSpanner}. The Lemma below is an extended version of Lemma 3.1, in \cite{KMPG24}, where we make vertex sparsifier and core graph sums explicit. The details are not hard to verify from the proof in \cite{KMPG24}. We note that \cite{KMPG24} proves the Lemma below for graphs undergoing edge deletions only, but later describes how edge insertions and vertex splits can be emulated rather straightforwardly. We directly state the lemma obtained via these emulations here to avoid clutter.

\begin{lemma}[Dynamic Forests, Core Graphs and Spanner, c.f. Lemma 3.1 \cite{KMPG24}]\label{lm:dynamicforest}
    Given a dynamic graph $G$ where $\Delta$ bounds the maximum degree of all $G(t)$ and parameter $1 \le L \le o\left(\log^{1/6} m / \log\log m \right)$. There is an algorithm that maintains for
    some $\gamma = \exp(O(\log^{2/3} m \cdot \log\log m))$ and $\kappa = \tilde{O}(m^{4/K}\gamma^4)$ until time horizon $T = n/k$:
    \begin{itemize}
        \item an incremental $\beta = \tilde{O}(1) \cdot \Delta^2 \cdot k$ shattering pivot set $A$ of the current graph $G$, 
        \item $\eta = \tilde{O}(1)$ decremental forests $F_1, \ldots F_{\eta}$ of $G$, dynamic core graphs $\mathcal{C}(G,F_1), \dots, \mathcal{C}(G,F_\eta)$ and constant stretch bounds $\widetilde{\str}_i(e)$ for all $e \in E, i \in 1 \leq i \leq \eta$,
        \item $\gamma^{O(L)}$-spanner $H$ of the core graph sum $G' = \mathcal{C}(G, F_1, \dots, F_{\eta})$ with at most $n\gamma$ edges along with an explicitly spanner embedding $\Pi_{G' \mapsto H}$. $H$ at all times has maximum degree $\Delta^2 \kappa$.
    \end{itemize}
    It ensures that over the time horizon $T$, we have 
    \begin{enumerate}
        \item for all $i,$ each tree of $F_i$ has at most $\tilde{O}(k)$ many vertices and the set of roots of $F_i$ has size at most $\tilde{O}(n/k)$ and contains all vertices in $A$,
        \item \label{prop:allASZpathsCovered} for $\tilde{\mathcal{P}}$ being the ASZ path set for current graph $G$ and shattering set $A$ (as defined in \Cref{def:asz_path}), every path $P \in \tilde{\mathcal{P}}$ is preserved, i.e. $\forall t: \forall P \in \tilde{\mathcal{P}}: \exists i: \widetilde{\str}_{F_i(t)}(P) \leq \alpha = \tilde{O}(1).$
        \item \label{prop:Recourse} At time $t$ the number of updates to all forests $F_i$, core graphs $\mathcal{C}(G, F_i)$ and the spanner $H$ over all previous rounds is at most $\kappa \Delta^2 \cdot t$.
    \item The embedding $\Pi_{G' \to H}$ where $G' = \mathcal{C}(G, F_1, \dots, F_{\eta})$ satisfies 
    \begin{itemize}
        \item $\ell(\Pi_{G' \to H}((u,v)) \leq \gamma^{O(L)} \dist_{G'}(u,v) $,
        \item $\operatorname{card}(\Pi_{G' \to H}) \le \gamma^{O(L)}$,
        \item $\vcong(\Pi_{G' \to H}) \le \gamma^{O(L^2)} \Delta^2$.
    \end{itemize}
    \end{enumerate}
    The algorithm is deterministic, takes $O(m k^4 + m \Delta^2 + m \gamma)$ time for initialization, and  processes every update with worst-case time $O(k^4 + \Delta^2 \kappa \gamma^{O(L^2)} k^2)$. 
\end{lemma}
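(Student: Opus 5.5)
This lemma is essentially Lemma 3.1 of \cite{KMPG24}, extended to vertex splits and edge insertions and with the core graph sum and spanner made explicit. The plan is therefore a reduction to that result, not a fresh construction. The dynamic structure is run in phases of length $T = n/k$. In each phase we maintain the shattering set $A$, the generating paths $\mathcal{P}$ and the ASZ paths $\tilde{\mathcal{P}}$ of \Cref{def:asz_path}, and $\eta = \tilde{O}(1)$ forests built as in \Cref{lm:StaticCoreGraphs}. The key design choice, taken from \cite{KMPG24}, is \emph{not} to maintain the current ASZ paths directly, since their recourse is too high. Instead, at initialization we precompute a superset of all paths that can become ASZ paths within the phase, and build low-stretch forests that preserve every path in this superset. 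Item~\ref{prop:allASZpathsCovered} then holds at every time $t \le T$, because $\tilde{\mathcal{P}}(t)$ stays inside this family. The forests only ever lose edges, which makes them decremental, and every forest or core-graph change is charged to an update to $G$.

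The steps are as follows.
\begin{enumerate}
\item Maintain $A$ incrementally. Whenever an update makes some ball $B(v,A)$ or cluster $C(v,A)$ exceed the bound $\beta$, add a vertex to $A$. The factor $\Delta^2$ in $\beta$ pays for balls changing through neighbourhoods. Since $A$ only grows, pivots only move closer.
\item Handle edge deletions by removing the edge from whichever forests contain it. The affected tree splits and the new component gets a root. Edge insertions are emulated by inserting into the core graph with a recomputed stretch bound. A vertex split is emulated by deleting the incident edges and re-inserting them at a fresh vertex, which costs at most $\Delta$ updates per split.
\item Feed each core-graph change into the dynamic spanner of \Cref{thm:StaticSpanner}, run on $G' = \mathcal{C}(G,F_1,\dots,F_\eta)$. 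It tolerates deletions and vertex splits, and insertions are emulated as in \cite{KMPG24}. This yields the embedding guarantees of item~4. The degree bound $\Delta^2\kappa$ on $H$ comes from splitting high-degree vertices, which feeds back into the core graph as vertex splits.
\item Bound the recourse of item~\ref{prop:Recourse} by multiplying the per-update costs. Each update touches $O(\Delta^2)$ ASZ paths, hence $\tilde{O}(\Delta^2)$ forest or core edges. Each core-graph change causes $\kappa$ amortized spanner changes.
\item Bound the time: initialization costs $O(mk^4)$ for computing balls and the future-path family, and each update costs $k^4$ for ball recomputation plus the spanner's $\gamma^{O(L^2)}k^2$ work.
\end{enumerate}

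\textbf{Main obstacle.} The hard step is showing that a single forest family, built once per phase, still $\alpha$-preserves every current ASZ path after up to $n/k$ updates while remaining decremental. This needs the structural argument of \cite{KMPG24}: since $A$ only grows, every new ASZ path is a projection of a shortest path onto balls that only shrink, and so it is covered by the precomputed candidate family. I would invoke this argument as a black box. The only new checking is that the emulations of insertions and vertex splits keep the precondition of that argument, namely that pivot distances are monotone within a phase, with insertions handled by restarting affected balls.
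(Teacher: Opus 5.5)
Your top-level route is the paper's. The paper gives no proof beyond citing Lemma 3.1 of \cite{KMPG24} together with that paper's emulation of insertions and vertex splits, and it credits item 2 to the family of all potential future ASZ paths precomputed at initialization.

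The gap is in the part you say you would check yourself. Your rule for $A$ (add a vertex when a ball exceeds $\beta$) and your insertion handling (insert into the core graph with a recomputed stretch bound, ``restart affected balls'') do not keep the current ASZ paths inside the initial family. Moreover, ``$A$ only grows, hence balls only shrink'' does not follow.
\begin{itemize}
\item After a deletion, $\dist(v,A)$ can increase and shortest paths to ball vertices can reroute, so balls grow and new generating paths appear.
\item After an insertion, your stated precondition (monotone pivot distances) holds for free, since distances only drop and $A$ only grows. Yet a short edge inserted between two non-terminals in $B(v,A)$ can change $\pi(v,u)$ for ball vertices $u$ and pull new vertices into the ball.
\end{itemize}
Such paths were never in the precomputed family. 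Forests built once and only shrunk afterwards therefore need not preserve them, so item 2 fails. Recomputing balls does not help unless the forests are rebuilt, which destroys decrementality and the recourse bound.

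The missing idea is to promote the endpoints of every updated edge, and any vertex about to be split, into $A$, and make them roots by cutting them out of their trees. This only deletes forest edges. The paper relies on exactly this when it later uses that an inserted edge $(u,v)$ has $u,v\in A$. The rule works as follows:
\begin{itemize}
\item Every vertex on $\pi(v,u)$ with $u\in B(v,A)$ lies in $B(v,A)$, so it is not a terminal. Any path to a current ball vertex that used the updated edge would therefore reach a new terminal first, via a prefix valid after the update. This contradicts $u\in B(v,A)$, so no such path uses the updated edge.
\item Hence ball paths and distances are frozen, $\dist(v,A)$ cannot increase, and balls only shrink. This is also why shattering persists.
\item Every current generating path and pivot path is then an initial generating path. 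Every current ASZ path is a concatenation of three of them, i.e.\ it lies in the precomputed family.
\item Inserted edges join two roots, so they enter each core graph with stretch $1$.
\item Only $O(1)$ terminals are added per update, which is why the horizon is $T=n/k$.
\end{itemize}

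Your recourse count also starts from a false premise. One update can change $\poly(k)$ ASZ paths, since promoting a single terminal can change the pivot of up to $\beta$ vertices. Counting touched ASZ paths would thus make the recourse depend on $k$, which the lemma avoids. The $k$-independent bound comes from the forests changing only around the $O(1)$ promoted vertices, each of degree at most $\Delta$; the precomputed family keeps the changed ASZ paths preserved automatically.
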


We want to highlight the remarkable fact presented in Property \ref{prop:allASZpathsCovered}, that the ASZ-paths of any current graph $G$ are preserved. This is possible by a brute-force operation that is central to the approach of \cite{KMPG24}: they show that the collection of paths in the input graph that can become an ASZ path at any point can be computed efficiently and has size $\tilde{O}(m \cdot \poly(k))$. Thus, they can preempt adversarial attacks already at initialization.

Another crucial point can be made about Property \ref{prop:Recourse}. While it seems that the recourse of the core graph and spanner are similar, this is somewhat deceiving as the recourse bounds the number of updates to these graphs. However, a vertex split has update complexity scaling in the degree of the affected vertex. Thus, the core graph sum has much higher update complexity (it could affect $\eta\beta$ edges), while $H$ has much smaller update complexity of $\kappa \cdot \Delta^2$. For the hierarchy, it is crucial that the recourse remains independent of $k$ so that $k$ can be vastly larger.

\paragraph{Maintaining Gradients.} Before we can discuss how to dynamically locate approximate min-ratio cycles efficiently, we have to discuss how to maintain the gradients. We first remark that updating the gradient structure of the spanner is trivial since the edges of the spanner are just edges of the original graph and since there are only $\eta = \tilde{O}(1)$ core graphs, we make at most $\eta$ changes. So there is no recourse overhead in performing such updates. 

Updating of the gradient structure of the vertex sparsifier is slightly more tricky: if a new root is chosen, then all edges incident to the new tree have to adapt their gradient which might result in high recourse. The key ingredient to reduce the recourse, introduced in previous mincost flow papers \cite{maxflow, BCKLPGSS23, incrflow}, are gradient shifts induced by vertex potentials. We recall that the gradient inner product of a circulation remains unchanged under a vertex-potential induced gradient shift.

\begin{fact}\label{fact:GradientShift}
    Let $\cc \in \R^E: \BB \cc = \veczero$ and $\gg \in \R^E, \pp \in \R^V,$ then $\gg^\top \cc = (\gg + \BB^\top \pp)^\top \cc.$ 
\end{fact}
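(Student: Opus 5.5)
The identity is pure linear algebra: expand the inner product, use the circulation condition to kill the correction term, and the claim follows in one line.

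First I would expand the right-hand side by linearity of the inner product:
\[
(\gg + \BB^\top \pp)^\top \cc = \gg^\top \cc + (\BB^\top \pp)^\top \cc .
\]
Second, I would rewrite the correction term using the transpose identity $(\BB^\top \pp)^\top = \pp^\top \BB$, which gives
\[
(\BB^\top \pp)^\top \cc = \pp^\top (\BB \cc).
\]
Third, I would apply the hypothesis that $\cc$ is a circulation, i.e.\ $\BB \cc = \veczero$. This makes the correction term $\pp^\top \veczero = 0$, so the right-hand side equals $\gg^\top \cc$, as claimed.

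The only point needing any care is the orientation convention for $\BB$. The statement writes the circulation condition as $\BB \cc = \veczero$ with $\cc \in \R^E$, so here $\BB$ is applied to edge vectors. For the shift $\BB^\top \pp$ to be an edge vector, $\pp \in \R^V$ must be compatible with the same matrix, which is exactly how the statement is typed. The paper elsewhere writes $\BB^\top \ff = \dd$, but under either orientation convention the argument is the same. The shift is always of the form "transpose of the operator that annihilates $\cc$", and so it is orthogonal to the kernel.

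For intuition, I would also note the combinatorial reading. Each entry of $\BB^\top \pp$ on an edge $(u,v)$ is a potential difference $\pp(v) - \pp(u)$, up to sign. Along any cycle these differences telescope to zero, and by \Cref{fac:cycledecomp} the same holds for every circulation. I expect no genuine obstacle here.
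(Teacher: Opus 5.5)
Your proof is correct: writing $(\BB^\top \pp)^\top \cc = \pp^\top \BB \cc = 0$ is the standard one-line justification. The paper gives no proof of its own and simply recalls this as a known fact from earlier min-cost flow work. Your typing remark is also right: since $\cc \in \R^E$ and $\BB\cc = \veczero$, $\BB$ acts on edge vectors, so $\BB^\top \pp \in \R^E$ and the statement is consistent as written.
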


We use this fact to zero out the gradient on the forest edges in the graph (see Definition \ref{def:StaticCoreGraphSum}). It is easy to compute for each forest $F_i$ of \Cref{lm:dynamicforest} a vertex-potential induced gradient $\gg_i = \BB^\top \pp_i$ such that $(\gg + \gg_i)|_{E(F_i)} = \veczero$ using a greedy algorithm. This is important to manage the gradient recourse for the core graph, because it ensures that changes to the forest do not affect the gradients of mapped edges. More precisely, since forests $F_i$ from  \Cref{lm:dynamicforest} are decremental, the initial potential $\pp_i$ still cancels out the gradients on the forest edges. Thus, no edge in the core graph needs to change its gradient even after the tree that it is incident to is split.  It also simplifies the definition of the core graph gradient which becomes $\gg_{\mathcal{C}_i}(\Pi_{\mathcal{C}_i}(e)) = \gg(e) + \vecone_e^\top \BB^\top \pp_i$ for any edge $e = (u,v) \in G$.

We recall the construction of the core graph sum. The core graph sum is obtained by taking $\eta$ copies of the original graph $G$ and connecting a star to all copies of a vertex $a \in A$. Denote this precursor graph by $G^{+}$. Then, by contracting the tree components of all forests, we obtain the core graph sum. Moreover, we recall from Definition \ref{def:L1CoreGraph} that the gradients for every core graph $\mathcal{C}_i$ in the core graph sum is given by $\gg_{\mathcal{C}_i}(\Pi_{\mathcal{C}_i}(e)) \defeq \sum_{e' \in \pproj_{F_i}(e)} \gg(e')$.
Let $\pp$ be the vertex potential on the core graph sum that is zero on all star centers and equal to $\pp_i$ on the i-th graph copy $G_i$. We point out that star edges that were previously assign $0$ gradient, may now carry gradients under the vertex potential. However, \Cref{fact:GradientShift} immediately tells us that we do not need to worry about using these edges to go between vertex copies of the same vertex in the core graph sum, they do not contribute to the gradient. Each circulation in the core graph sum thus maps back to a circulation in $G$ of the same gradient, as desired.

Finally, for updates, we perform the following actions: Given an edge insertion $e = (u,v)$, according to Lemma \ref{lm:dynamicforest}, we have that $u,v \in A$, and we simply set the edge gradient $\gg_{\mathcal{C}}(e)$ to $\gg(e) - \pp(v) + \pp(u)$. If the update is an isolated vertex $v$ insertion, we set its potential $\pp(v) = 0$. If it is a deletion, we can simply delete the edge and its gradient, and if it is a vertex split of a vertex $v$, we set the potential of both copies of $v$ to be $\pp(v)$. If we perform simply a change of gradient of an edge, we model this by deleting the edge with the old gradient and reinserting it with the new gradient. When a completely new vertex $v$ is added to the core graph sum through a vertex split, we split it further to obtain the star vertex again with the same potential. 

Our discussion immediately yields the following Lemma.

\begin{lemma}[Dynamic Gradient Maintenance]\label{lma:efficientGradientMaintenance}
    Given a dynamic core graph sum $\mathcal{C}(G, F_1, \dots, F_{\eta})$ of the underlying dynamic graph $G$, as in Lemma \ref{lm:dynamicforest}, we can maintain its gradient structure as defined in Definitions \ref{def:L1CoreGraph} and  \ref{def:StaticCoreGraphSum}. Per update to the core graph, we need to adapt the gradient of at most one edge in the core graph and we can do so in time $O(1)$.
\end{lemma}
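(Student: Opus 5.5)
The plan is to maintain, for each forest $F_i$ of \Cref{lm:dynamicforest}, a fixed vertex potential $\pp_i$, and to store every core graph gradient in the shifted form $\gg_{\mathcal{C}_i}(\Pi_{\mathcal{C}_i}(e)) = \gg(e) + \vecone_e^\top \BB^\top \pp_i$. I will then check that this shifted representation agrees with \Cref{def:L1CoreGraph} on every circulation, and that each update type touches at most one stored value.

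\textbf{Step 1 (initialization).} For each tree of $F_i$, root it at $\root_{F_i}$ and set $\pp_i(\root)=0$. Propagate greedily down the tree so that $\gg(e)+\pp_i(u)-\pp_i(v)=0$ on every tree edge $e=(u,v)$, using the sign convention of $\BB$.

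\textbf{Step 2 (consistency).} Since the tree edges carry zero shifted gradient, the shifted gradient summed along $\pproj_{F_i}(e)$ equals the shifted gradient of $e$ alone. The shifted value also differs from $\sum_{e'\in\pproj_{F_i}(e)}\gg(e')$ only by the potential difference $\pp_i(\root(v))-\pp_i(\root(u))$. On the core graph sum, set $\pp$ to $\pp_i$ on copy $i$ and to $0$ on star centers, and treat star edges likewise. Then \Cref{fact:GradientShift} shows every circulation of $\mathcal{C}(F_1,\dots,F_\eta)$ has the same gradient under both representations. Ratios are therefore unchanged, and so are the dichotomies (\Cref{thm:ASZ-Dichotomy2}, \Cref{Cor:ASZ-Trichotomy}).

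\textbf{Step 3 (updates).} I will go through the update types from \Cref{lm:dynamicforest}.
\begin{itemize}
\item \emph{Forest deletion.} Deleting a forest edge only splits a tree, because the forests are decremental. The old potentials still zero out the remaining tree edges, so the zero-gradient invariant persists and no core edge gradient changes. The split appears as a vertex split in the core graph, and both copies inherit $\pp(v)$. This is consistent because potentials live on vertices of $G$, not on contracted nodes.
\item \emph{Edge insertion.} An inserted edge $(u,v)$ gets gradient $\gg(e)+\pp(u)-\pp(v)$, one $O(1)$ write.
\item \emph{Edge deletion.} This drops one stored value.
\item \emph{Isolated vertex insertion.} The new vertex gets potential $0$.
\item \emph{Gradient change.} This is modeled as a deletion followed by an insertion.
\item \emph{New star copy.} A new star copy created by a split receives the same potential.
\end{itemize}

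The main obstacle is Step 2 together with the forest-split case of Step 3. One must argue that after a tree splits, the projected gradient with respect to the new roots still matches the stored shifted value up to a potential difference, and not only for the original roots. The point is that $\pp_i$ is a global vertex potential on $G$, so the discrepancy is always of the form $\BB^\top\pp$ evaluated at roots. \Cref{fact:GradientShift} kills it on circulations regardless of which vertices are roots. The remaining bookkeeping is routine and gives $O(1)$ time per core graph update.
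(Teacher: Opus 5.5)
Your proposal is correct and follows the paper's argument. It uses a fixed per-forest potential $\pp_i$ that zeroes the forest-edge gradients and stays valid because the forests are decremental, gives star centers potential $0$ so star edges are shifted consistently, relies on the gradient-shift fact so every circulation of the core graph sum keeps its gradient, and handles each update type (insertion, deletion, split, isolated vertex, gradient change as delete-plus-reinsert) with a single $O(1)$ write. Your explicit observation that the stored value differs from the root-based definition only by $\pp_i(\mathsf{root}(v))-\pp_i(\mathsf{root}(u))$, so that re-rooting after a tree split is harmless, makes precise a point the paper's sketch leaves implicit.
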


\subsection{Dynamic Vertex Sparsifier Hierarchy}\label{sec:DynamicVertexSparsifierHierarchy}

To maintain the whole hierarchy consisting of alternate edge and vertex sparsification, \cite{KMPG24} resorts to a standard rebuilding scheme. Both the vertex and edge sparsifiers have sub-polynomial recourse $\tau = \kappa \Delta^2_i$ where $\Delta_i$ is the maximum degree of the level-$i$ graph that we apply the theorem to. Note that $\tau$ is independent of the size reduction factor $k$. Thus, by setting the size reduction significantly lower, we obtain a hierarchy with $\log_k m \ll \log_\tau m$ levels which yields subpolynomial recourse even at the highest level of the hierarchy. Standard rebuilding can further ensure that the vertex count remains reduced by factor almost $k$ on every level. We refer the reader to \cite{KMPG24} for a thorough treatment of the rebuilding scheme and only summarize their result here.

\begin{theorem}[see \cite{KMPG24}, Claim 4.6]\label{thm:maintainDynHierarchy}
Given dynamic input graph $G$ that has at any time maximum degree $3$. There is a deterministic algorithm, that maintains, for $\Lambda = \log^{1/21} m$ and $\chi = e^{O(\log^{20/21} m \log\log m)}$, the graphs $G_0 =G$ and $G_i$ being the maintained sparsifier from \Cref{lm:dynamicforest} on input $G_{i-1}$ with size reduction $k = m^{1/\Lambda}$ re-initialized at every time $t$ that is divisible by $u_i = m^{1-(i+1)/\Lambda}$. It thereby ensures that for each level $0 \leq i \leq \Lambda$, the number of vertices and edges in $G_i$ is at most $2m^{1-i/\Lambda} \cdot \chi$ at any time, its maximum degree is at most $\chi$. The algorithm takes initialization time $m \chi$ and thereafter update time $\chi$.
\end{theorem}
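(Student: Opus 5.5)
This is a standard rebuilding-scheme argument on top of \Cref{lm:dynamicforest}, which supplies per-level size reduction, recourse and update time; what needs tracking is how the recourse and degree blow-ups compound over the $\Lambda$ levels. I would argue by induction on the level $i$. The invariant is that between two re-initializations of level $i$, the graph $G_{i-1}$ fed into level $i$ receives at most $u_{i-1}\cdot\tau^{i}$ updates for a common subpolynomial factor $\tau = \kappa\Delta^2 = \chi^{o(1)}$-ish. The size and degree bounds are part of the same invariant.

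For the base case, $G_0 = G$ has maximum degree $3$ and $O(m)$ vertices and edges. For the step, I apply \Cref{lm:dynamicforest} to $G_{i-1}$ with $k=m^{1/\Lambda}$. At re-initialization, item 1 gives that $G_i$ has $\tilde O(n_{i-1}/k)$ terminal vertices. After degree control by vertex splitting, as in the static hierarchy, this is $\tilde O(1)\cdot 2m^{1-(i-1)/\Lambda}\chi/k \le m^{1-i/\Lambda}\chi$. Item \ref{prop:Recourse} then says each update to $G_{i-1}$ causes at most $\kappa\Delta_{i-1}^2$ updates to $G_i$. Edge insertions are emulated via vertex splits and terminals, as in \cite{KMPG24}, so each update adds $O(1)$ vertices or edges. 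Over a phase of $u_i = m^{1-(i+1)/\Lambda}$ steps, level $i$ therefore grows by at most $u_i \prod_{j\le i}\kappa\Delta_j^2$ items.

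Since $\kappa\Delta_j^2 \le \chi^{c/\Lambda}$ for a suitable split of $\chi$, the product is at most $\chi$, and the growth is at most $m^{1-(i+1)/\Lambda}\chi \le m^{1-i/\Lambda}\chi$. Together with the initial size this gives the bound $2m^{1-i/\Lambda}\chi$. The phase length $u_i$ must also respect the time horizon $T=n_{i-1}/k$ of \Cref{lm:dynamicforest}, and it does, since $u_i \le m^{1-i/\Lambda}/k$ up to $\tilde O(1)$ factors. The degree bound follows because $H$ has maximum degree $\Delta^2\kappa$ by the lemma, and taking the sparsifier on $H_{i-1}$ keeps degrees at $\Delta_{i}\le (\Delta_{i-1})^2\kappa$. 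Iterating this $\Lambda$ times gives $3^{2^\Lambda}\kappa^{2^\Lambda}$, which is too large. So I would instead re-split vertices by degree at each level, as in the static paragraph on controlling maximum degrees, which yields $\Delta_i\le \gamma^{O(1)}\kappa$ independent of $i$. I would black-box this step to \cite{KMPG24}, as the paper says it does.

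For running time, level $i$ is rebuilt $T/u_i$ times at cost $O(|G_{i-1}| k^4 + |G_{i-1}|\gamma)$ each. Amortized, this is $m^{1-(i-1)/\Lambda}\chi k^4/m^{1-(i+1)/\Lambda} = k^{6}\chi = m^{6/\Lambda}\chi$ per update. With $\Lambda=\log^{1/21}m$ we have $m^{6/\Lambda}=e^{6\log^{20/21}m}$, which fits inside $\chi$. Per-update work adds $k^4 + \Delta^2\kappa\gamma^{O(L^2)}k^2$ multiplied by the propagated recourse $\chi^{o(1)}$. Summing over $\Lambda$ levels gives $\chi$ after adjusting constants in the exponent. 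To make the bound worst-case rather than amortized, I would use the standard trick of building the next copy in the background over the preceding phase. The main obstacle is the parameter bookkeeping: choosing $\Lambda$, $L$, $\gamma$, $\kappa$ so that $\prod_j \kappa\Delta_j^2$, $k^{O(1)}$ and $\gamma^{O(L^2)}$ all fit within $e^{O(\log^{20/21}m\log\log m)}$. I would take $L$ constant and use $\gamma=e^{O(\log^{2/3}m\log\log m)}$, so that $\gamma^{O(\Lambda)}=e^{O(\log^{2/3+1/21}m\log\log m)}$ is dominated by $m^{1/\Lambda}=e^{\log^{20/21}m}$.
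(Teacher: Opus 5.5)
Your outline is the standard rebuilding argument that the paper sketches just before the theorem and otherwise defers to \cite{KMPG24} for. You black-box degree control the same way the paper does, and your parameter check ($k^{O(1)}$ and $\gamma^{O(\Lambda)}$ both fit inside $\chi$) is the right one. The step that does not follow is reading item~\ref{prop:Recourse} of \Cref{lm:dynamicforest} as a per-update bound. As stated it is cumulative: at most $\kappa\Delta^2 t$ changes after $t$ updates since initialization. With rebuild times fixed at multiples of $u_i$, that does not bound how many updates to $G_{i-1}$ fall into one level-$i$ phase. The level-$(i-1)$ structure may withhold its changes and release them all during the last of the $k$ level-$i$ phases inside its own phase. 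This costs about $\tau k$ changes per input update, well within the $O(k^4+\dots)$ worst-case time. The effect compounds level by level, so one level-$i$ phase can receive on the order of $k^{i-1}\cdot u_i\tau^{i-1}$ updates rather than $u_i\tau^{i-1}$. Since $\chi=k^{O(\log\log m)}$ while $\Lambda\gg\log\log m$, this overruns both the horizon $n_{i-1}/k$ and the bound $2m^{1-i/\Lambda}\chi$ at the higher levels. There are two ways to repair this. The first is a worst-case (per-update or per-interval) recourse guarantee, which you would have to import from \cite{KMPG24}, since the lemma as restated here does not give it. The second is to trigger level-$i$ rebuilds by the number of updates $G_{i-1}$ has received since the last rebuild, rather than by the clock. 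Cumulative bounds do compose to at most $\tau^{i}t$ total changes to $G_i$ after $t$ updates to $G$, which supports that variant. However, it yields amortized bounds and not literally the fixed schedule in the statement.

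Separately, the size induction with one uniform factor $\chi$ does not close. The inequality $\tilde O(1)\cdot 2m^{1-(i-1)/\Lambda}\chi/k\le m^{1-i/\Lambda}\chi$ is false. You also drop the factor $\gamma$ from the spanner's $n\gamma$ edge bound and the vertices added by degree splitting. These per-level losses compound, so you need a level-dependent slack. For example, prove $|V(G_i)|+|E(G_i)|\le 2m^{1-i/\Lambda}\theta^i$ for a per-level factor $\theta=\gamma^{O(1)}$, and only at the end use $\theta^{\Lambda}\le\chi$. This is exactly what you already do for the recourse product.
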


\subsection{Dynamic Min-Ratio Cycle Maintenance}

We argue next that we can maintain the approximate min-ratio cycle of the dynamic graph $G$ dynamically. We use $\chi$ often as a loose upper bound in the calculations below. While many calculations can be tightened, we believe that this would not result in substantially faster update times overall.

\begin{theorem}\label{thm:dynReportMinRatioValue}
Given dynamic input graph $G$ that has at any time maximum degree $3$. There is a deterministic algorithm, that can report after each update to $G$ an $\chi$-approximate min-ratio. More precisely, it explicitly maintains an ordered linked list of candidate cycles of size $\tilde{O}(m \cdot k \cdot \gamma^2 )$ such that some cycle $C$ of these candidate cycles is an approximate min-ratio cycle, i.e.
\[ \frac{\gg^\top \vecone_{C}}{\| \LL \vecone_C \|_1} \leq \frac{1}{\tilde{O}(k)}\left(\frac{1}{40 \alpha \cdot \delta}\right)^{\Lambda+1} \min_{\BB \cc = 0} \frac{\gg^\top \cc}{\|\LL \cc\|_1}. \]
Each cycle is supported on some graph $G_\ell$ as described in the hierarchy in \Cref{thm:maintainDynHierarchy} and of length at most $\zeta = \tilde{O}(\max\{k, \gamma^{O(L)}\}) = \tilde{O}(\chi)$. The algorithm takes initialization time $O(m\chi^5)$ and update time $O(\chi^5)$.
\end{theorem}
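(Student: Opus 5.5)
The plan is to dynamize the static argument of \Cref{thm:CandidateCyclesList} on top of the hierarchy maintained by \Cref{thm:maintainDynHierarchy}. Each level $G_i$ comes with the forests $F_{i,1},\dots,F_{i,\eta}$, the pivot set $A_i$ with its local balls, the core graph sum, and the spanner $H_i$ with embedding $\Pi_{G_i'\to H_i}$ from \Cref{lm:dynamicforest}. Gradients on every level are maintained via \Cref{lma:efficientGradientMaintenance}, with potential shifts justified by \Cref{fact:GradientShift}. Since the hierarchy, the approximation argument and the list structure are identical to the static case, correctness of the approximation bound will follow verbatim from \Cref{Cor:ASZ-Trichotomy}, \Cref{thm:SpannerDichotomy} and \Cref{lem:brute_force}, applied to the \emph{current} graphs at each time. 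This uses item~\ref{prop:allASZpathsCovered} of \Cref{lm:dynamicforest}: the ASZ paths of the current graph are always preserved. So the whole task reduces to maintaining the candidate list, together with the ratio of each candidate, under the changes caused by one update to $G$.

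The candidate list has three kinds of entries per level.
\begin{enumerate}
\item \textbf{Tree cycles} $\pproj_{F_{i,j}}(e)$ for every tree edge $e$.
\item \textbf{Spanner cycles} $e\concat\Pi_{G_i'\to H_i}(e)$ for every $e\notin H_i$.
\item \textbf{Crude cycles}: the $|E|$ MST-fundamental cycles of \Cref{lem:brute_force}, one list per local ball neighborhood $B^{\mathcal{N}}(v,A_i)$, plus one list for the tiny top level $G_\Lambda$.
\end{enumerate}
I store these in a linked list with each entry's numerator and denominator. The global minimum ratio is kept in a balanced search tree (or heap) keyed by ratio, so the reported value is available in $O(\log m)$ time.

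For each kind I bound how many entries one change touches and the cost of recomputing an entry.
\begin{itemize}
\item A forest edge deletion or a change to an edge incident to a tree affects only tree cycles inside one tree of size $\tilde O(k)$. There are at most $\tilde O(k\Delta)$ such cycles, each re-evaluated in $\tilde O(k)$ time.
\item An edge change in $H_i$ affects the spanner cycles routed through it. Their number is bounded by $\vcong(\Pi)\le\gamma^{O(L^2)}\Delta^2$, and each has length $\gamma^{O(L)}$. Changes to the embedding itself are counted in the recourse of item~\ref{prop:Recourse}.
\item A change near $v$ alters only the balls $B^{\mathcal{N}}(u,A)$ containing $v$. Because $A$ is $\beta$-shattering, $|C(v,A)|\le\beta$, so there are $\tilde O(\beta\Delta)$ affected balls. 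For each I recompute an MST and the $\tilde O(k\Delta^2)$ fundamental cycles in $\tilde O(k^2\Delta^2)$ time.
\end{itemize}
Multiplying by the per-level recourse $\kappa\Delta^2$ per update, by $\Lambda$ levels, and by the amortized rebuild cost of \Cref{thm:maintainDynHierarchy} (each rebuild redoes the static list in $O(mk^2\gamma^2)$, amortized over $u_i$ updates), everything is bounded by $\chi^{O(1)}$. Using $k,\gamma,\kappa,\Delta\le\chi$, I expect the exponent to come out as $5$. Initialization is the static construction, taking $O(m\chi^5)$.

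The main obstacle is propagation across levels and across tree recomputations. A single update to $G$ becomes $\chi$ updates to $G_1$, which become more updates further up, and one core graph vertex split can touch $\eta\beta$ edges. I will handle this by charging each level's list maintenance only to the updates of that level's graph, which are already bounded in the update-time guarantee of \Cref{thm:maintainDynHierarchy}. I will then argue that each such unit update triggers $\chi^{O(1)}$ work by the local bounds above. The subtle point is the ball-neighborhood lists. Here I need that an edge update changes ball membership only for the vertices in the clusters $C(\cdot,A)$ of its endpoints, and that $A$ being incremental (pivots only move closer) does not make many balls change without a charged update. I would lean on the fact that \Cref{lm:dynamicforest} already maintains these balls, with this recourse accounted for, in \cite{KMPG24}.
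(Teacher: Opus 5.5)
Your proposal is correct and follows the paper's proof. It uses the same four candidate sources (crude cycles per local ball neighborhood, tree cycles, spanner cycles, and the top level $G_\Lambda$), the same shattering/cluster-size bound on how many neighborhoods one update touches, and the same charging of all list maintenance to the explicit updates already paid for by \Cref{thm:maintainDynHierarchy}, with the only difference being that you re-evaluate every tree cycle in an affected tree whereas the paper uses that the forests are decremental, so tree cycles are only ever deleted (plus at most one new one per forest per edge insertion), either way at cost $\chi^{O(1)}$.
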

\begin{proof}
From our discussion of gradient maintenance on spanners and \Cref{lma:efficientGradientMaintenance}, we can augment the hierarchy from \Cref{thm:maintainDynHierarchy} to also maintain the gradients adjusted by vertex potentials at only constant multiplicative overhead.

Since  \Cref{thm:maintainDynHierarchy} maintains the hierarchy using \Cref{lm:dynamicforest}, it is straightforward to check that the Vertex Sparsifier Dichotomy \ref{Cor:ASZ-Trichotomy} and the Spanner Dichotomoy \ref{thm:SpannerDichotomy} still apply to each level. Thus, again, we have at any time an $\chi$-approximate min-ratio cycle 1) in a local neighborhood $B_{G_i}^{\mathcal{N}}(v,A)$, 2) in forest cycle at some level $i$, 3) in some spanner cycle at some level $i$, or 4) is still contained in $G_{\Lambda}$. 

For 1), we follow the same strategy as in the static setting and re-compute all neighorhoods that underwent changes. Any update (edge insertion/deletion, vertex splits, and isolated vertex insertions) to a graph $G_i$ affects only $\chi$ endpoints. Since the incremental vertex $A$ set was $\tilde{O}(\chi^3)$-shattering at initialization, and the shattering property is maintained under incremental updates to $A$, and thus the clusters have size at most $\tilde{O}(\chi^3)$. Therefore, these endpoints can be in at most $\tilde{O}(\chi^4)$ many local ball neighborhoods $B^{\mathcal{N}}(v,A)$. We can recompute the cycles associated with them using Lemma \ref{lem:brute_force} in time $\tilde{O}(k \cdot \Delta^2)$ and change their entries in the linked list. 

For 2), we note that since the forests on $G_i$ are decremental for level $i$ between times divisible by $u_i$, and since $G_i$ has at most $2m^{1-i/\Lambda} \cdot \chi$ edges at any time, we can check all tree cycles efficiently: at re-initialization we find all tree cycles and since the forests are $\tilde{O}(\chi^3)$-shattering, each cycle can be checked in time $\tilde{O}(\chi^3)$. Thereafter, we only have to remove cycles once the forests decompose. Once removed, cycles are then gone until the data structure is re-initialized. Only when an edge is inserted, we might have to add a new tree cycle, however, again this yields at most one new cycle per forest which can be evaluated in time $\tilde{O}(\chi^3)$.

For 3) we have that the spanner embeddings are maintained explicitly by the data structure. Further, by \Cref{lm:dynamicforest}, each cycle has cardinality at most $\chi$ which implies that we can check after each change to an embedding the affected cycles in time $O(\chi)$.

For 4), we have that $G_{\Lambda}$ remains of size at most $2\chi$ and thus we can use the procedure from Lemma \ref{lem:brute_force} at any time to find an approximate min-ratio cycle. 

Noticing that every update to a graph $G_i$ or a spanner embedding takes time at least $O(1)$ in the algorithm from \Cref{thm:maintainDynHierarchy} to write down the explicit update, and since each such update causes at most additional amortized time $\tilde{O}(\chi^4)$ by our proposed update strategy, yields the theorem.
\end{proof}

\section{A Simple Primal Min-Cost Flow Solver} \label{sec:solver}

In this section, we discuss how to solve the (incremental) min-cost flow problem using our simplified data structure to maintain min-ratio cycles. The main goal of this section is to use our new dynamic min-ratio cycle to implement the following data structure.

\begin{definition}[Solver, see Definition 3.7 in \cite{incrflow}]\label{def:solver}
We call a data structure that is initialized with a graph \(G=(V,E)\) and lengths \(\ll\in \mathbb{R}_{\ge 0}^{E}\), gradients \(\gg\in \mathbb{R}^{E}\), costs \(\cc\in \mathbb{R}^{E}\), a flow \(\ff\in \mathbb{R}^{E}\) routing demand \(\dd\), and a quality parameter \(q>0\), a step-size parameter \(\Gamma>0\) and an accuracy parameter \(\epsilon>0\), a \(\gamma_{\mathrm{approx}}\)-approximate min-ratio cycle solver if it
(implicitly) maintains a flow vector \(\ff\) such that \(\ff\)
routes demand \(\dd\) throughout and supports the following operations.

\begin{itemize}
    \item \textsc{ApplyCycle}(): One of the following happens.
    \begin{itemize}
        \item Either the data structure finds a circulation
        \(\Delta\in \mathbb{R}^{E}\) such that
        \[
        \frac{\gg^\top \Delta}{\|\LL\Delta\|_1}\le -\qq
        \qquad\text{and}\qquad
        |\gg^\top \Delta|=\Gamma .
        \]
        In that case, it updates $f \leftarrow \ff+\Delta$ and it returns a set of edges \(E'\subseteq E\)
        alongside the maintained flow values \(\ff(e')\) for
        \(e'\in E'\).

        For every edge \(e\), between the times that it is in
        \(E'\) during calls to \textsc{ApplyCycle}(),
        the value of \(l(e)f(e)\) does not change by more than
        \(\epsilon\).

        \item Or it certifies that there is no circulation
        \(\Delta\in \mathbb{R}^{E}\) such that
        \[
        \frac{\gg^\top \Delta}{\|\LL\Delta\|_1}
        \le -\frac{\qq}{\gamma_{\mathrm{approx}}}
        \]
        for some parameter \(\gamma_{\mathrm{approx}}\).
    \end{itemize}

    \item \textsc{UpdateEdge}\((e,\ll,\gg)\):
    Updates the length and gradient of an edge \(e\)
    that was returned by the last call to
    \textsc{ApplyCycle}().

    \item \textsc{InsertEdge}\((e,\ll,\gg,\cc)\):
    Adds edge \(e\) to \(G\) with length \(\ll\),
    gradient \(\gg\), cost \(\cc\).
    The flow \(\ff(e)\) is initialized to \(0\).

    \item \textsc{ReturnCost}():
    Returns the flow cost \(\cc^\top \ff\).

    \item \textsc{ReturnFlow}():
    Explicitly returns the currently maintained flow \(\ff\).
\end{itemize}

The sum of the sizes \(|E'|\) of the returned sets by
\(t\) calls to \textsc{ApplyCycle}() is at most $\frac{t\Gamma}{\epsilon q}.$
\end{definition}

As we show, we can obtain an efficient implementation by slightly augmenting our data structures. 

\begin{theorem}\label{thm:mainDataStuctureFlow}
There is a data structure as in Definition \ref{def:solver} that is a \(\chi\)-approximate min-ratio cycle solver. It implement the operations such that:
\begin{enumerate}
    \item The initialization takes time $m\chi. $
    \item The operation \textsc{ApplyCycle}()
    takes amortized time $(|E'|+1)\cdot \chi$ when \(|E'|\) edges are returned.
    \item The operations
    \textsc{UpdateEdge}()/
    \textsc{InsertEdge}()/
    \textsc{ReturnCost}()
    take amortized time $\chi.$
    \item The operation
    \textsc{ReturnFlow}()
    takes amortized time $m \chi.$
\end{enumerate}
\end{theorem}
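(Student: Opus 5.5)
We build the solver on top of the data structure of \Cref{thm:dynReportMinRatioValue}. We first apply the degree-reduction preprocessing, splitting vertices and connecting the copies by length-$0$, gradient-$0$ edges, so that the input always has maximum degree $3$; an \textsc{InsertEdge} then becomes $O(1)$ updates to $G$. Initialization runs \Cref{thm:maintainDynHierarchy} together with \Cref{thm:dynReportMinRatioValue}, costing $m\chi^{O(1)}$, which we absorb into $\chi$ by redefining the constant in its exponent. Alongside the linked list of candidate cycles we keep a priority queue keyed by ratio. Each entry is kept current under updates at the cost already charged in \Cref{thm:dynReportMinRatioValue}.

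For \textsc{ApplyCycle}() we take the best candidate $C$. If its ratio exceeds $-\qq/\gamma_{\mathrm{approx}}$, with $\gamma_{\mathrm{approx}}$ set to the approximation factor of \Cref{thm:dynReportMinRatioValue}, then that theorem certifies that no circulation has ratio at most $-\qq/\gamma_{\mathrm{approx}}$. Otherwise we scale $C$ so that $|\gg^\top\Delta|=\Gamma$.

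Routing $\Delta$ requires unfolding $C$ into $G$. Every candidate lives on some $G_\ell$, and its edges map back to $G$ through three kinds of pieces:
\begin{itemize}
\item spanner embeddings, which are explicit with cardinality $\gamma^{O(L)}$;
\item core-graph edges, which correspond to forest paths root–edge–root, together with star edges that carry zero net gradient by \Cref{fact:GradientShift};
\item ASZ paths, which are shortest paths inside local balls.
\end{itemize}
Unfolding a forest path explicitly would be too costly. Instead we represent the flow on each level $i$ implicitly and push flow along tree paths using link-cut/top-tree structures on the forests $F_{i,j}$. These forests are decremental between rebuilds, so such structures support path-add in $\tilde O(1)$ time.

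Following \cite{incrflow}, the reporting of the sets $E'$ is handled by an accumulation scheme. For each edge $e$ we track how much $\ll(e)\ff(e)$ has drifted since it was last reported; at a tree level this is done by maintaining, per tree path, the total $\ll$-weighted flow added. Whenever an edge's drift exceeds $\epsilon$, we output it, reset its counter, and propagate its exact value down the hierarchy. Since $\|\LL\Delta\|_1\le\Gamma/\qq$ per call, the total drift is at most $t\Gamma/\qq$ over $t$ calls, so at most $t\Gamma/(\epsilon\qq)$ reports occur, as the definition requires. Each report costs $\chi$ amortized once we account for the hierarchy depth $\Lambda$ and the per-level overheads.

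\textsc{UpdateEdge} and \textsc{InsertEdge} are translated into gradient and length updates of $G$ through \Cref{lma:efficientGradientMaintenance}. The flow on an updated edge is exact, because it was just reported by \textsc{ApplyCycle}. \textsc{ReturnCost} is maintained by adding $\cc^\top\Delta$ at each call, computed through the same implicit tree sums with costs as a second weight. \textsc{ReturnFlow} pushes all implicit flows down the hierarchy in $m\chi$ time.

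The main obstacle is the implicit flow maintenance across rebuilds of $G_i$ every $u_i$ steps, together with vertex splits, under which the tree representations change. The plan is to flush the level-$i$ implicit flow to level $i-1$ at each rebuild in time proportional to the size of $G_i$, giving amortized $\chi$ per update by the same argument as in \Cref{thm:maintainDynHierarchy}. We would also need to check that the drift detection remains sound under these flushes.
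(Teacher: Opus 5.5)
Your proposal has a genuine gap exactly where the paper's proof does its work: routing a candidate cycle that lives on some $G_\ell$ into $G$, and detecting the sets $E'$.

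The paper does not keep implicit flows level by level. It uses \Cref{lma:embeddingLemma}. Because the forests of every level are maintained explicitly mapped into copies of $G$, the whole hierarchy unfolds into a single dynamic forest $F$ on $V\times[\chi]$. The edges of $F$ are copies of edges of $G$, with at most $\chi^2$ copies per edge. Every edge $f$ of every $G_\ell$ maps to $2\Lambda+1$ path segments of $F$: one per root path per level, plus the original edge. These satisfy $\gg_\ell(f)=\sum_s\gg(P_s)$ and $\ll_\ell(f)\ge\sum_s\ll(P_s)$. One dynamic-tree structure on $F$ then does all the work:
\begin{itemize}
\item it applies a cycle of $\tilde O(\chi)$ edges eagerly, via $\tilde O(\chi\Lambda)$ path updates;
\item it stores the exact flow on every copy;
\item it reports copies whose $\ll$-weighted flow moved by $\epsilon/\chi^2$, which is sound precisely because each edge has at most $\chi^2$ copies;
\item \textsc{ReturnFlow} simply sums the copies.
\end{itemize}

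Your per-level link-cut trees on the $F_{i,j}$, with lazy propagation down the hierarchy, cannot replace this. Flow added to a tree path of $F_{i,j}$ is flow on edges of $H_i$. For $i\ge 1$ these are core-graph edges that themselves stand for root paths one level down. So the true flow on an original edge is the sum of the unflushed amounts held, at every higher level, on every edge whose unfolding passes through it. A tree edge near a root lies on the root paths of up to $\tilde O(k\Delta)$ off-tree edges per forest. This can compound across levels, so that polynomially many higher-level edges route through a single original edge. Hence an $\epsilon$ threshold per higher-level edge does not bound the drift of an original edge. Thresholds scaled down by this congestion instead inflate the number of flushes, and hence the running time, by a polynomial factor.

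You yourself leave open both the soundness of drift detection and the handling of unfoldings that change under forest cuts and vertex splits. Those are precisely what the embedding lemma settles.

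A minor point: ASZ paths never appear in the unfolding. Core-graph edges unfold via forest projections, and spanner edges are edges of $G_i$; ASZ paths are used only in the analysis.
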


In \cite{BLS23}, it was observed that the $\ell_1$-IPM from \cite{maxflow} can be augmented to give the following reduction from incremental min-cost flow to the above data structure problem. Combining the result below with \Cref{thm:mainDataStuctureFlow} immediately yields our main result \Cref{thm:mainIntro}. 

\begin{theorem}[Incremental threshold min-cost flow interior point method, see Claim 8.11 \cite{incrflow}]
There is an algorithm that given an incremental directed graph $G=(V,E)$ with polynomially bounded edge capacities \(u\) and costs \(c\), a threshold $F$,
a demand vector \(\dd \perp \mathbf{1}\), and access to a
\(\gamma_{\mathrm{approx}}\) min-ratio cycle solver data structure
\(\mathcal{D}\) maintains the solution to the thresholded min-cost flow problem (see \Cref{def:mincostFlow}) using total time $\widetilde{O}\!\left(
        m \gamma_{\mathrm{approx}}^{O(1)} \right)$ and using:
\begin{itemize}
    \item $\widetilde{O}\!\left(
        m \gamma_{\mathrm{approx}}^{O(1)}
    \right)$ calls to $\mathcal{D}.\textsc{ApplyCycle}()/
    \textsc{UpdateEdge}()/
    \textsc{InsertEdge}()/
    \textsc{ReturnCost}()$, and
    \item $\widetilde{O}\!\left(
        \gamma_{\mathrm{approx}}^{O(1)}
    \right).$ calls to $\mathcal{D}.\textsc{ReturnFlow}()$.
\end{itemize}
As long as the min-cost flow has cost at most $F$, the algorithm can also return a feasible flow of cost at most $F$ in additional time $\tilde{O}(m)$ and a single call to $\mathcal{D}.\textsc{ReturnFlow}()$.
\end{theorem}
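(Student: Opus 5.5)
The statement to prove is the reduction theorem (Claim 8.11 of \cite{incrflow}), which the paper imports from prior work. My plan is not to rederive the $\ell_1$-IPM. Instead I would check that the data structure interface in \Cref{def:solver} is exactly what the incremental IPM of \cite{BLS23,incrflow} consumes, and then cite its potential-reduction analysis.

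First I would recall the IPM setup. Fix the threshold $F$ and the barrier potential
\[
\Phi(\ff)=20m\log(\cc^\top\ff-F)+\sum_e\bigl((\uu(e)-\ff(e))^{-\alpha}+\ff(e)^{-\alpha}\bigr)
\]
with $\alpha=1/(1000\log(mU))$. The gradients $\gg=\nabla\Phi(\ff)$ and lengths $\ll$ come from the Hessian proxy, as in \cite{maxflow}. The key structural fact is that as long as $\cc^\top\ff^\star\le F$, the circulation $\ff^\star-\ff$ certifies
\[
\frac{\gg^\top\DDelta}{\|\LL\DDelta\|_1}\le-\Omega(1).
\]
So \textsc{ApplyCycle} with $\qq=\Omega(1/\gamma_{\mathrm{approx}})$ either finds a cycle that reduces $\Phi$ by $\Omega(\Gamma)$, or certifies that no good circulation exists, which decides that the threshold is infeasible.

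Second, I would account for stale lengths and gradients. The solver returns the edges $E'$ whose value $\ll(e)\ff(e)$ drifted by $\epsilon$. The IPM then calls \textsc{UpdateEdge} on exactly those edges, so every edge's $\ll,\gg$ stays within constant factors of its true value. The bound $\sum|E'|\le t\Gamma/(\epsilon\qq)$ caps the number of updates. With $\Gamma,\epsilon=\gamma_{\mathrm{approx}}^{-O(1)}$ and a potential drop from $\tilde O(m)$ to $-\tilde O(m)$, the method takes $\tilde O(m\gamma_{\mathrm{approx}}^{O(1)})$ iterations and updates in total.

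Third, I would handle incrementality as \cite{BLS23} does. Each inserted edge gets an auxiliary high-cost twin, so that inserting with $\ff(e)=0$ raises $\Phi$ by only $\tilde O(1)$. The algorithm periodically calls \textsc{ReturnCost} to detect when the potential is low enough, and rebuilds $\tilde O(\gamma_{\mathrm{approx}}^{O(1)})$ times overall. Each rebuild uses \textsc{ReturnFlow} to recenter, which accounts for the bound on those calls. At the end, the near-optimal flow is rounded to a feasible flow of cost at most $F$ in $\tilde O(m)$ time with one \textsc{ReturnFlow}, using standard cycle-cancelling on a tree.

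The main obstacle is the second step: the stale-gradient argument has to cope with inserted edges entering mid-phase. I would simply invoke Claim 8.11 of \cite{incrflow} there, since \Cref{def:solver} is copied verbatim from that paper's Definition 3.7.
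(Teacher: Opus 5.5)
Your proposal matches what the paper does: the paper gives no independent proof and simply imports the statement as Claim 8.11 of \cite{incrflow}, which builds on the observation of \cite{BLS23} about the $\ell_1$-IPM of \cite{maxflow} and uses \Cref{def:solver} as the interface, so your plan of checking that interface and deferring to that citation is the same argument. Your intermediate IPM sketch is not load-bearing and is loose in places, but these are polylogarithmic and bookkeeping details absorbed by the cited claim: for instance, $\ff^\star-\ff$ is only guaranteed ratio $-\Omega(\alpha)=-\tilde\Omega(1)$ rather than $-\Omega(1)$, and under the barrier $\ff(e)^{-\alpha}$ a new edge cannot literally enter with $\ff(e)=0$ without an offset or a small compensating flow.
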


\paragraph{Flow Solver Data Structure via Dynamic Min-Ratio Cycle}

It remains to describe how we implement the flow solver data structure claimed in \Cref{thm:mainDataStuctureFlow}. In order to give the implementation details, we finally have to discuss how to not only return an approximate min-ratio but a succinct representation of the approximate min-ratio cycle that allows us to apply the flow efficiently. Fortunately, \cite{KMPG24} already suggests a succinct encoding: it shows that there is an explicit low-recourse forest $F$ such that each edge in a graph $G_i$ in the dynamic hierarchy in \Cref{thm:maintainDynHierarchy} can be mapped to its corresponding path in $G$ (such a path can be found by recursively unpacking the ASZ paths). This allows us to use dynamic tree data structures to apply the flow efficiently and find edges whose lengths times flow product has increased significantly. The only catch is that $F$ is not a subgraph of $G$ but rather the graph obtained from $G$ by making subpolynomially many copies of each vertex and edge. We next describe the precise definitions and state the result obtained in \cite{KMPG24}. Finally, we describe how to use a dynamic tree data structure to prove \Cref{thm:mainDataStuctureFlow}.

\paragraph{Embedding the Graph Hierarchy into a Forest.} Let us first state the exact statement that can be obtained straightforwardly from using induction over the levels on Lemma 3.21 in \Cref{thm:mainDataStuctureFlow} (see also Section 4.2 of the paper were a very similar such induction is used).

\begin{lemma}\label{lma:embeddingLemma}
Given a dynamic graph $G$ of maximum degree $3$, along with the dynamic ASZ hierarchy of dynamic vertex sparsifiers $G_1,G_2, \ldots, G_{\Lambda}$ where each $G_\ell$ is a graph over vertices in $V \times [\chi]$. The algorithm from \cite{KMPG24} can explicitly maintain a dynamic forest $F$ over vertex set $V \times [\chi]$ such that:
\begin{itemize}
    \item for each edge $e = (u \times i) (v \times j) \in F$, we have $uv \in E, i,j \in [\chi]$ of length $\ell(e)$ and with gradient $\gg(e)$, and 
    \item each edge $f = (u \times i) (v \times j) $ from some graph $G_\ell$ can be mapped to $2\Lambda+1$ (possibly empty), path segments $P_1, P_2, \ldots, P_{2\Lambda+1}$ where each $P_s$ is a $(x_s \times i_s) (x_{s+1} \times j_s)$-path in $F$ where $x_1 = u$ and $x_{\Lambda +1} = v$. We further have that  $\gg_{\ell}(f) = \sum_s \gg(P_s)$ and $\ell_\ell(f) \leq \sum_s \ell(P_s)$.
\end{itemize}
The additional cost of maintaining $F$ and a map from edges $f$ in graphs $G_\ell$ to the endpoints of the $2\Lambda+1$ path segments in $F$ denoted by $\Pi$ incurs at most a constant multiplicative factor in the runtime.
\end{lemma}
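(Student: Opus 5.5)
We prove the statement by induction over the levels $\ell = 0, 1, \ldots, \Lambda$ of the hierarchy from \Cref{thm:maintainDynHierarchy}. The inductive hypothesis is that at level $\ell$ we have a forest $F^{(\ell)}$ over $V \times [\chi_\ell]$ together with a map $\Pi^{(\ell)}$. The map sends every edge $f$ of $G_\ell$ to at most $2\ell+1$ path segments in $F^{(\ell)}$. The segments carry exactly the gradient of $f$ and at least its length. Consecutive segments are glued at copies of the same original vertex, i.e. segment $s$ ends at $x_{s+1}\times j_s$ and segment $s+1$ starts at $x_{s+1}\times i_{s+1}$.

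The base case $\ell = 0$ is trivial. We take $F^{(0)}$ to contain the edges of $G_0 = G$ themselves, one copy each, and map every edge to the single segment consisting of itself.

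For the inductive step $G_\ell \to G_{\ell+1}$ we unpack one level. An edge of $G_{\ell+1}$ comes from a spanner edge of $H_\ell$, which is an edge of the core graph sum $\mathcal{C}(F_{\ell,1},\ldots,F_{\ell,\eta})$ on $G_\ell$. Such a core graph edge $\Pi_{\mathcal{C}_i}(e)$ corresponds, by \Cref{def:forestProj}, to $\pproj_{F_{\ell,i}}(e) = T[\root,u] \oplus e \oplus T[v,\root]$. The core graph gradient is defined in \Cref{def:L1CoreGraph} and shifted by $\pp_i$ via \Cref{fact:GradientShift}. It equals the gradient of this projection up to the potential difference at the two roots. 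This difference cancels telescopically along any circulation, and it is zero under the convention of \Cref{lma:efficientGradientMaintenance} when the roots are star centers.

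We therefore add to the level-$(\ell+1)$ forest a fresh copy, indexed by $(\ell, i)$, of every tree edge of $F_{\ell,i}$. Each such tree edge is an edge of $G_\ell$ and so, by induction, is already a chain of segments in $F^{(\ell)}$. To keep the composed object a forest over $V\times[\chi]$, we instead embed the \emph{trees} $F_{\ell,i}$ themselves. The idea is that the tree paths $T[\root,u]$ and $T[v,\root]$ become two new segments, while the middle edge $e$ is unpacked by the inductive hypothesis.

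This is the point where the count of $2\Lambda+1$ segments comes from. Each level contributes two tree segments (one on each side), and the innermost edge contributes one. Summed over all levels this gives $2\ell+1$ segments at depth $\ell$. Lengths only increase, by the $\ell_1$-core graph inequality $\sum_{e'\in\pproj_F(e)}\ll(e') \le \ll_{\mathcal{C}}$. Gradients are preserved exactly by the definition of $\gg_{\mathcal{C}}$.

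The main obstacle is making the tree paths of level $\ell$ genuine paths in a forest over copies of \emph{original} vertices. Tree edges of $F_{\ell,i}$ are edges of $G_\ell$, which are themselves unpacked paths. The union of their unpackings need not be acyclic, and it may reuse original vertices. The plan is to follow \cite{KMPG24} (Lemma 3.21 and the induction of their Section 4.2). There one copies each original vertex once per (level, forest, tree) triple, which gives $\chi$ copies since $\eta^\Lambda \le \chi$. With these copies, each tree of each level is realized as a vertex-disjoint tree in $V\times[\chi]$ whose edges correspond to original edges.

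Maintenance then follows because the forests are decremental between rebuilds and have recourse $\kappa\Delta^2$ per update (\Cref{lm:dynamicforest}). Each change to a level-$\ell$ forest edge touches $O(1)$ copies, so the map $\Pi$ of segment endpoints is updated in $O(1)$ per change. This is only a constant multiplicative factor on top of the runtime of \Cref{thm:maintainDynHierarchy}.
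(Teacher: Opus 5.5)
Your proposal takes essentially the same route as the paper's sketch. It inducts over the levels; each level contributes the two tree-path segments $T[\root,u]$ and $T[v,\root]$ of the forest onto which the edge is projected, and the middle edge is unpacked recursively, giving $2\Lambda$ projection segments plus one original edge. Exact gradients and overestimated lengths come from \Cref{def:L1CoreGraph}, and the crucial step of realizing each level's forests explicitly as trees over copies $V\times[\chi]$ of original vertices is deferred to \cite{KMPG24}, exactly as in the paper.

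There is one small slip, in the base case: the copies of the edges of $G$ must use distinct endpoint copies for $F^{(0)}$ to be a forest. This is possible because the maximum degree is $3$ and consecutive segments may be glued at different copies of the same vertex. Separately, the paper obtains the $\approx\eta^{\Lambda}$ copy bound by keeping $\eta$ copies of $G_\ell$ per level, not by indexing copies per tree as you describe.
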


To get such the embedding described in \Cref{lma:embeddingLemma}, \cite{KMPG24} makes the following observation: as can be seen from \Cref{def:StaticCoreGraphSum}, each edge $\Pi(f)$ in a graph at level $G_{\ell+1}$ is obtained from the projection of edge $f$ in the graph $G_{\ell}$ onto one of at most $\eta = \tilde{O}(1)$ forests $F_1, F_2, \ldots, F_{\eta}$. \cite{KMPG24} then shows that they can maintain the forests $F_1, F_2, \ldots, F_{\eta}$ explicitly using only $\eta$ copies of $G_{\ell}$. Each edge $\Pi(f)$ can thus be mapped down to two segments in forests $F_1, F_2, \ldots, F_{\eta}$ and the edge $f$ itself. Applying this process recursively, one only requires about $\eta^{\Lambda}$ many copies of each edge in the original graph $G$. The key insight in \cite{KMPG24} is that one can maintain the mapping of each $F_1, F_2, \ldots, F_{\eta}$ to $G$ explicitly. Thus, the map of $\Pi(f)$ to path segments in $F$ consists of the projection of the endpoints of the edge $f$ which is fully given in $F$, and the edge $f$ itself at the next lower level. Thus, we only require $2\Lambda$ projection segments, and the original edge from $G$. It is not hard to see from the definitions of core graphs (see \Cref{def:L1CoreGraph}), that lengths can be overestimated since projected lengths are overestimates while gradients are exactly preserved. We refer the reader for additional details to \cite{KMPG24}.

\paragraph{Applying Min-Ratio Cycles Efficiently.} Finally, we can describe how to implement \Cref{thm:mainDataStuctureFlow}. Using standard degree splitting techniques, we can assume that w.l.o.g. the incremental input graph has maximum degree $3$ at any time. We then maintain the hierarchy as in \Cref{thm:maintainDynHierarchy}  on $G$ augmented as described in \Cref{thm:dynReportMinRatioValue} to locate a $\chi$-approximate min-ratio cycle $C$ in time $\tilde{O}(\chi^5)$ in some graph $G_\ell$ where it consists of at most $\tilde{O}(\chi)$ edges. We maintain a dynamic tree data structure over the forest $F$ maintained by \Cref{lma:embeddingLemma}. 

The operations $\textsc{UpdateEdge}()/\textsc{InsertEdge}()$ can be forwarded immediately as graph updates, where updates are simulated by a single deletion and re-insertion.

For operation $\textsc{ApplyCycle}()$, if a good circulation is found, by \Cref{thm:dynReportMinRatioValue}, we get a cycle consisting of at most $\chi$ edges in some graph $G_\ell$. After computing the right amount of flow $\Delta$ that we send along the cycle, $\Delta$ flow is applies to each path segment in $F$ obtained by the map of each of the cycle edges. Since flow along a path segment can be applied implicitly in $O(\log n)$ time by a dynamic tree data structure, the total time of such an operation is only $\chi^{O(1)}$. Dynamic trees can also return all edges $e$ that received more than a certain amount of flow. While this is exactly what we would hope for to locate which values $\ll(e')\ff(e')$ increased by an $\epsilon$, we are faced with the obstacle that each edge $e' \in G$ has up to $\chi^2$ copies in $F$. But by taking $\epsilon/\chi^2$ as a parameter for the sensibility, we can ensure that every edge reported in $G$ had at least $\epsilon/\chi^2$ new flow (measure by length) while if it isn't reported, it does not exceed the threshold $\epsilon$ since its $\chi^2$ copies each carry less than $\epsilon/\chi^2$ flow (measured by lengths).

The operation $\textsc{ReturnCost}()$ can be implemented since it can be computed within the $\textsc{ApplyCycle}()$ operation how the applied cycle changes the cost of the circulation. Thus, we can simply store this cost and return it in $\tilde{O}(1)$ time.

Finally, to return the flow $f$, we simply read off the flow values of all edges in $F$ and collect them edge-wise to obtain the flow on $G$. Reading the flow value of an edge via a dynamic tree structure takes $O(\log n)$ time and thus, we can implement this operation in time $O(m \chi^2 \log n)$, as desired.

\bibliographystyle{alpha}

\bibliography{refs}

\end{document}